\providecommand{\U}[1]{\protect\rule{.1in}{.1in}}
\newtheorem{theorem}{Theorem}
\newtheorem{lemma}[theorem]{Lemma}
\newtheorem{proposition}[theorem]{Proposition}
\newtheorem{remark}[theorem]{Remark}
\newenvironment{proof}[1][Proof]{\noindent\textbf{#1.} }{\ \rule{0.5em}{0.5em}}
\numberwithin{equation}{section}
\let\pdfoutput=\undefined\fi
\begin{document}

\title{Thin elastic plates supported over small areas. \\I. Korn's inequalities and boundary layers}
\author{G. Buttazzo\thanks{Universit\`{a} di Pisa, Department of Mathematics, Largo B.
Pontecorvo, 5, 56127 Pisa, Italy; email: buttazzo@dm.unipi.it.},
G.Cardone\thanks{Universit\`{a} del Sannio, Department of Engineering, Corso
Garibaldi, 107, 82100 Benevento, Italy; email: giuseppe.cardone@unisannio.it.}%
, S.A.Nazarov\thanks{Mathematics and Mechanics Faculty, St. Petersburg State
University, 198504, Universitetsky pr., 28, Stary Peterhof, Russia;
Saint-Petersburg State Polytechnical University, Polytechnicheskaya ul., 29,
St. Petersburg, 195251, Russia; Institute of Problems of Mechanical
Engineering RAS, V.O., Bolshoj pr., 61, St. Petersburg, 199178, Russia; email:
srgnazarov@yahoo.co.uk.}}
\maketitle

\begin{abstract}
\medskip A thin anisotropic elastic plate clamped along its lateral side and
also supported at a small area $\theta_{h}$ of one base is considered; the
diameter of $\theta_{h}$ is of the same order as the plate relative thickness
$h\ll1$. In addition to the standard Kirchhoff model with the Sobolev point
condition, a three-dimensional boundary layer is investigated in the vicinity
of the support $\theta_{h}$, which with the help of the derived weighted
inequality of Korn's type, will provide an error estimate with the bound
$ch^{1/2}|\ln h|$. Ignoring this boundary layer effect reduces the precision
order down to $|\ln h|^{-1/2}$.

\medskip\textbf{Keywords:} Kirchhoff plate, small support zones, asymptotic
analysis, boundary layers, weighted Korn inequality.

\medskip\textbf{2010 Mathematics Subject Classification:} 74K20, 74B05

\end{abstract}

\section{Introduction\label{sect1}}

\subsection{A plate supported over a small area\label{sect1.1}}

Let $\omega$ be a domain in the plane $\mathbb{R}^{2}$ with a smooth boundary
$\partial\omega$ and a compact closure $\overline{\omega}=\omega\cup
\partial\omega$. We introduce the cylindrical plate
\begin{equation}
\Omega_{h}=\omega\times\left(  -h/2,h/2\right)  \label{1.1}%
\end{equation}
of a small thickness $h>0$. By rescaling, the characteristic size of $\omega$
is reduced to one so that the Cartesian coordinates $x=(y,z) \in\mathbb{R}%
^{2}\times\mathbb{R}$ and all geometric parameters become dimensionless. The
bases of the plate and its lateral side are given by%
\begin{equation}
\Sigma_{h}^{\pm}=\left\{  x:y=\left(  y_{1},y_{2}\right)  \in\omega,\ z=\pm
h/2\right\}  ,\ \ \ \upsilon_{h}=\left\{  x:y\in\partial\omega,\ \left\vert
z\right\vert <h/2\right\}  \label{1.2}%
\end{equation}
respectively. We fix a point $\mathcal{O}$ inside $\omega$ and place the
$y$-coordinate origin at $\mathcal{O}$. Denoting by $\theta\subset
\mathbb{R}^{2}$ an open, not necessarily connected, set with a compact closure
$\overline{\theta}$, we assume that the plate (\ref{1.1}) is clamped over the
lateral side $\upsilon_{h}$ as well as at the small supporting area%
\begin{equation}
\theta_{h}=\left\{  x:\eta:=h^{-1}y\in\theta,\ z=-h/2\right\}  \label{1.3}%
\end{equation}
on the base $\Sigma_{h}^{-}$. Characteristic sizes of $\theta$ are supposed to
be of order one, too. In other words, the diameter of the supporting zone
(\ref{1.3}) is comparable with the plate thickness $h$. The bound $h_{0}>0$
for the small parameter $h$ is chosen such that $\overline{\theta_{h}}
\subset\Sigma_{h}^{-}$ for all $h\in(0,h_{0}]$, however if necessary, we will
reduce $h_{0}$ but always keep the notation. Throughout this paper we do not
distinguish in the notation for $\theta$ and $\theta_{h}$ between
two-dimensional sets and their immersions in $\mathbb{R}^{3}$ along the planes
$\left\{  x:z=-1/2\right\}  $ and $\left\{  x:z=-h/2\right\}  $, respectively.

The plate $\Omega_{h}$ is made out of a homogeneous anisotropic elastic
material and its deformation is caused by volume forces. It is ideally fixed
over the set
\begin{equation}
\Gamma_{h}=\upsilon_{h}\cup\theta_{h} \label{1.16}%
\end{equation}
while the rest of the plate surface, in particular $\Sigma_{h}^{\bullet
}=\Sigma_{h}^{-}\setminus\overline{\theta_{h}}$, is traction-free. The clamped
area is shaded in Figure \ref{f1}.

\begin{figure}[ptb]
\begin{center}
\includegraphics[scale=0.85]{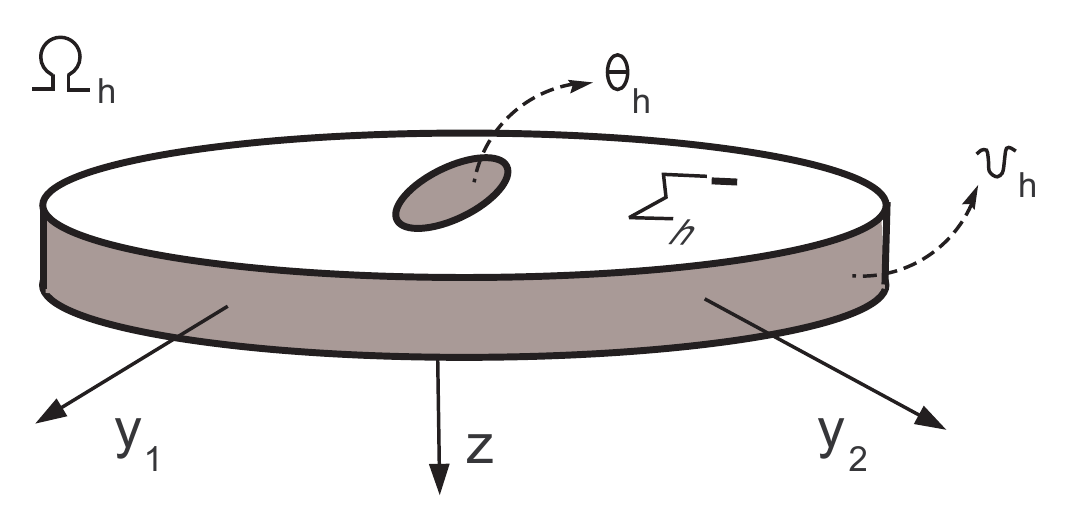}
\end{center}
\caption{A plate clamped over the lateral side $\upsilon_{h}$ and a small
support area $\theta_{h}$.}%
\label{f1}%
\end{figure}

The main goal of this and consequent paper \cite{ButCaNa2} is to examine the
influence of the supporting area (\ref{1.3}) on the stress-strain state of the
whole plate. To this end, we construct asymptotics of elastic fields as
$h\to+0$, prove error estimates, and create a two-dimensional model which
reflects adequately all principal effects of the small support. We emphasize
that clamping along the lengthy set $\upsilon_{h}\subset\partial\Omega_{h}$
plays a considerable role in technicalities of our study and the case of a
traction-free lateral side will be treated in a subsequent paper where, in
contrast to the present case, the number of small supporting areas and their
location become of a major importance.

\subsection{Formulation of the elasticity problem; the Mandel-Voigt
notation\label{sect1.2}}

Since the Cartesian coordinate system $x=\left(  x_{1},x_{2},x_{3}\right)  $
attached to the plate $\Omega_{h}$ is fixed, we can regard the displacement
field $u$ as the column $\left(  u_{1},u_{2},u_{3}\right)  ^{\top}$ in
$\mathbb{R}^{3}$ where $u_{j}$ is the projection of $u$ onto the $x_{j}$-axis
and $\top$ stands for transposition. The strain column%
\begin{equation}
\varepsilon=\left(  \varepsilon_{11},\varepsilon_{22},2^{1/2}\varepsilon
_{12},2^{1/2}\varepsilon_{13},2^{1/2}\varepsilon_{23},\varepsilon_{33}\right)
^{\top} \label{1.5}%
\end{equation}
substitutes for the strain tensor of rank $2$ with the Cartesian components
\begin{equation}
\varepsilon_{jk}\left(  u\right)  =\frac12\left(  \frac{\partial u_{j}%
}{\partial x_{k}}+\frac{\partial u_{k}}{\partial x_{j}}\right)  ,\qquad
j,k=1,2,3, \label{1.6}%
\end{equation}
and can be computed by the formula%
\begin{equation}
\varepsilon\left(  u\right)  =D\left(  \nabla\right)  u \label{1.00}%
\end{equation}
where $\nabla=\operatorname{grad}$ and
\begin{equation}
D\left(  \nabla\right)  ^{\top}=\left(
\begin{array}
[c]{cccccc}%
\partial_{1} & 0 & 2^{-1/2}\partial_{2} & 2^{-1/2}\partial_{3} & 0 & 0\\
0 & \partial_{2} & 2^{-1/2}\partial_{1} & 0 & 2^{-1/2}\partial_{3} & 0\\
0 & 0 & 0 & 2^{-1/2}\partial_{1} & 2^{-1/2}\partial_{2} & \partial_{3}%
\end{array}
\right)  ,\qquad\partial_{j}=\frac{\partial}{\partial x_{j}}. \label{1.7}%
\end{equation}
Notice that, according to the Mandel-Voigt notation, the factors $2^{1/2}$ and
$2^{-1/2}$ are introduced in (\ref{1.5}) and (\ref{1.7}) for the purpose of
equalizing the intrinsic norms of the tensor and the column of height $6$ (cf.
\cite{German, Nabook} and others).

The stress column
\[
\sigma=\left(  \sigma_{11},\sigma_{22},2^{1/2}\sigma_{12},2^{1/2}\sigma
_{13},2^{1/2}\sigma_{23},\sigma_{33}\right)  ^{\top}%
\]
of the same structure as in (\ref{1.5}) is to be found through the Hooke's law%
\begin{equation}
\sigma\left(  u\right)  =A\varepsilon\left(  u\right)  =AD(\nabla)u
\label{1.9}%
\end{equation}
where $A$ is the stiffness matrix of size $6\times6$, symmetric and positive
definite. This matrix contains elastic moduli, for instance, to an isotropic
elastic material there corresponds%
\begin{equation}
A=\left(
\begin{array}
[c]{cccccc}%
\lambda+2\mu & \lambda & 0 & 0 & 0 & \lambda\\
\lambda & \lambda+2\mu & 0 & 0 & 0 & \lambda\\
0 & 0 & 2\mu & 0 & 0 & 0\\
0 & 0 & 0 & 2\mu & 0 & 0\\
0 & 0 & 0 & 0 & 2\mu & 0\\
\lambda & \lambda & 0 & 0 & 0 & \lambda+2\mu
\end{array}
\right)  \label{1.10}%
\end{equation}
where $\lambda\geq0$ and $\mu>0$ are the Lam\'{e} constants.

In what follows we use matrix, rather than tensor, notation in elasticity
known as the Mandel-Voigt notation, cf. \cite{German}. In this way we write
the equilibrium equations as follows:%
\begin{equation}
L\left(  \nabla\right)  u\left(  h,x\right)  :=D\left(  -\nabla\right)
^{\top}AD\left(  \nabla\right)  u\left(  h,x\right)  =f\left(  h,x\right)
,\ \ \ x\in\Omega_{h}, \label{1.11}%
\end{equation}
where $f=\left(  f_{1},f_{2},f_{3}\right)  ^{\top}$ is the vector (column) of
the volume (mass) forces. The three-dimensional elasticity system (\ref{1.11})
is supplied with the traction-free boundary condition%
\begin{align}
N^{+}\left(  \nabla\right)  u\left(  h,x\right)   &  :=D\left(  e_{3}\right)
^{\top}AD\left(  \nabla\right)  u\left(  h,x\right)  =0,\ \ \ x\in\Sigma
_{h}^{+},\label{1.12}\\
N^{-}\left(  \nabla\right)  u\left(  h,x\right)   &  :=D\left(  -e_{3}\right)
^{\top}AD\left(  \nabla\right)  u\left(  h,x\right)  =0,\ \ \ x\in\Sigma
_{h}^{\bullet},\nonumber
\end{align}
where $e_{3}=\left(  0,0,1\right)  ^{\top}$ is the unit vector of the outward
normal on the bases $\Sigma_{h}^{\pm}$. At the clamped parts of the surfaces
$\partial\Omega_{h}$, we write
\begin{align}
u\left(  h,x\right)   &  =0,\ \ \ x\in\theta_{h},\label{1.13}\\
u\left(  h,x\right)   &  =0,\ \ \ x\in\upsilon_{h}. \label{1.14}%
\end{align}
We further refer to (\ref{1.12}) and (\ref{1.13}), (\ref{1.14}) as the Neumann
and Dirichlet conditions respectively.

The variational statement of problem (\ref{1.11})--(\ref{1.14}) reads: to find
a vector function $u\in H_{0}^{1}\left(  \Omega_{h};\Gamma_{h}\right)  ^{3}$
such that%
\begin{equation}
\left(  AD\left(  \nabla\right)  u,D\left(  \nabla\right)  v\right)
_{\Omega_{h}}=\left(  f,v\right)  _{\Omega_{h}},\ \ \ \forall v\in H_{0}%
^{1}\left(  \Omega_{h};\Gamma_{h}\right)  ^{3}, \label{1.15}%
\end{equation}
where $\left(  \ ,\ \right)  _{\Omega_{h}}$ is the natural scalar product in
the Lebesgue space $L^{2}\left(  \Omega_{h}\right)  $, $H_{0}^{1}\left(
\Omega_{h};\Gamma_{h}\right)  $ is a subspace of functions in the Sobolev
class $H^{1}\left(  \Omega_{h}\right)  $ which vanish at set (\ref{1.16}), and
the last superscript $3$ in (\ref{1.15}) means that test (vector) function $v$
has three components.

In view of the Dirichlet conditions (\ref{1.13}) and (\ref{1.14}) the Korn
inequality \cite{Korn}%
\begin{equation}
\left\Vert v;H^{1}\left(  \Omega_{h}\right)  \right\Vert \leq K_{h}\left\Vert
D\left(  \nabla\right)  v;L^{2}\left(  \Omega_{h}\right)  \right\Vert
,\ \ \ \forall v\in H_{0}^{1}\left(  \Omega_{h};\Gamma_{h}\right)  ^{3}
\label{1.17}%
\end{equation}
is valid and the unique solvability of the variational problem (\ref{1.15})
with any $f\in L^{2}\left(  \Omega_{h}\right)  ^{3}$ follows from the Riesz
representation theorem. Although there exist various approaches, see
\cite{DuLi, Fried, KoOl, Necas} and many others, to verify inequality
(\ref{1.17}), we still need to clarify the dependence of the Korn constant
$K_{h}$ on the small parameter $h$. To this end, we exhibit in Section
\ref{sect2} several variants of weighted anisotropic Korn's inequalities.

\subsection{Asymptotics and boundary layers\label{sect1.3}}

The Kirchhoff theory of thin plates created more than 150 years ago by means
of intuitive asymptotic analysis, has got a justification in miscellaneous
formulations by various methods and at different level of rigor, we refer only
to the mathematical monographs \cite{Ciarlet1, Ciarlet2, LeDret, Nabook, SaPa}
and intensive lists of literature therein, although setting aside a vast
volume of publications with important theoretical and practical results. If an
external loading is scaled to provide the elastic energy of a plate to gain
order $1=h^{0}$, the energy norm of asymptotic remainders in Kirchhoff's
asymptotic formulas becomes $O(\sqrt{h})$ and this is the best error estimate
available within a two-dimensional theory of plates. This limitation of the
asymptotic accuracy is prescribed exclusively by boundary layer effect near
the plate edge. Namely, in the vicinity of the lateral side $\upsilon_{h}$ the
standard plane-stress state in the $\left(  y_{1},y_{2}\right)  $-directions,
dominant in the midst of the plane, couples with a plane-strain state in the
$\left(  n,z\right)  $-directions, where $n$ is the normal vector to
$\partial\omega$. The latter is represented by special elastic fields which
slowly vary along the edge, produce strains and stresses of order 1 at the
lateral side $\upsilon_{h}$ but quickly, at the exponential rate, decay at a
distance from $\upsilon_{h}$. We again refer only to the mathematical papers
\cite{na251, na109} and \cite{DaugeR, DaugeG}, where the edge boundary layer
phenomenon was investigated in elasticity, and to Chapters 15 and 16 of the
monograph \cite{MaNaPl} where simplest scalar and general elliptic problems
were examined and basic principles to construct boundary layers are laid down.

The energy norm%
\begin{equation}
\left(  AD\left(  \nabla\right)  u,D\left(  \nabla\right)  v\right)
_{\Omega_{h}}^{1/2} \label{1.18}%
\end{equation}
of the boundary layer equals $O(\sqrt{h})$ that just predetermines the error
bound in an integral norm. However, the proximity of the plane-stress state in
a weighted H\"{o}lder norm is much less, cf. \cite{na224, na333}. For example,
bounds in the pointwise estimates of remainders in two-dimensional asymptotic
forms of stresses get the same order in $h$ as the main terms and this is
known in mechanics as ``edge effect'' in a thin plate. Thus, the error
$O(\sqrt{h})$ of the Kirchhoff model is mostly due to the intrinsic
localization of the edge effect in a $ch$-neighborhood of the lateral side
$\upsilon_{h}$.

The two-dimensional structure of boundary layers is kept for a smooth contour
$\partial\omega$ only. Plates with angulate edges were considered in
\cite{na145} where three-dimensional boundary layers with a power-law decay
rate were detected. However, their energy norm (\ref{1.18}) becomes
$o(\sqrt{h})$ so that they play a secondary role.

As it can be easily predicted due to the Sobolev embedding theorem, the
Dirichlet condition (\ref{1.13}) leads to the Sobolev condition (\ref{A6}) in
the limit, that is the overage deflection $w_{3}\in H^{2}(\omega)$ of the
plate vanishes at the point $\mathcal{O}$. However, the small support area
(\ref{1.3}) also provokes a fully three-dimensional boundary layer in the
vicinity of $\theta_{h}$ and provides a perturbation of order $|\ln h|^{-1/2}$
in the two-dimensional model. In other words, the convergence rate $O(|\ln
h|^{-1/2})$ of the rescaled three-dimensional displacements, cf. Theorem
\ref{th3.3}, to a solution of the limit Dirichlet-Sobolev problem (\ref{A7}),
(\ref{A8}), (\ref{A5}), (\ref{A6}) is very slow and, therefore,
unsatisfactory. At the same time, the asymptotic structures derived in this
paper provide the same accuracy as in the Kirchhoff theory.

In \cite{na472}, the same effect of a crucial reduction of the asymptotic
accuracy due to a small Dirichlet area has been observed and discussed in
detail for a scalar homogenization problem in a perforated planar domain.

\subsection{The elasticity capacity and self-adjoint extensions\label{sect1.4}%
}

To the best knowledge of the authors, this paper is the first mathematically
rigorous and complete study of the boundary layer effect near a small support
area and its influence on the whole stress-strain state of a plate. The
necessity to examine rapid variations of elastic fields in the vicinity of the
support area $\theta_{h}$ accounts for the following result obtained in the
consequential paper \cite{ButCaNa2}: the convergence rate to the
two-dimensional Kirchhoff solution of the true three-dimensional one is
extremely slow, of order $|\ln h|^{-1/2}$, but involving the boundary layer
reduces the error estimate bound down to $ch^{1/2}|\ln h|$ and makes it
acceptable for applications. At the same time, the whole boundary layer
solution is not known explicitly and can be determined by solving a
complicated elasticity problem (\ref{3.38})--(\ref{3.40}) in the spatial
infinite ply $\mathbb{R}^{2}\times(-1/2,1/2)$, for example, numerically. In
this way a simplification of asymptotic expansions and further modeling take
on special significance.

As usual in the theory of elliptic problems in singularly perturbed domains
(see the monographs \cite{Ilin, MaNaPl} and others) the alternation of
boundary conditions on a small set leads to asymptotic forms engaging singular
solutions of limit problems and in our case the Dirichlet condition
(\ref{1.13}) on $\theta_{h}$ affects the limit problem in $\omega$ in two
ways. First, it requires to impose the point condition $w_{3}(\mathcal{O})=0$,
(\ref{A6}), for the average deflection $w_{3}(y)$ of the plate. Second, the
far-field, eligible at a distance from the support $\theta_{h}$, contains the
Green matrix, cf. Section \ref{sect3.4} and see \cite{ButCaNa2}, whose entries
are solutions in $\omega$ with logarithmic singularities at the point $y=0$.
Due to the Sobolev embedding theorem $H^{2}\subset C$ in $\mathbb{R}^{2}$ the
point condition is well-posed within the Hilbert theory, cf. \cite[Ch.2]{LiMa}
but the Green matrix does not belong to the intrinsic energy space for the
Kirchhoff plate $\omega$ and therefore breaks the usual variational
formulation of the model.

To develop a two-dimensional model of the locally supported plate $\Omega_{h}%
$, we will turn in \cite{ButCaNa2} to the technique of self-adjoint extensions
of differential operators (see the pioneering paper \cite{BeFa} and, e.g., the
review \cite{Pav}). Namely, we get rid of the complicated three-dimensional
structure of the boundary layer and simulate its influence on the
stress-strain state of the plate by a proper choice of singular solutions. In
other words, we select a special self-adjoint extension in $L^{2}\left(
\omega\right)  ^{3}$ of the matrix $\mathcal{L}\left(  \nabla\right)  $ of
differential operators in the conventional Kirchhoff model described in
Section \ref{sect3.2}, formula (\ref{A13}). Parameters of this extension
depend only on the quantity $\left\vert \ln h\right\vert $ and the algebraic
characteristics $C^{\sharp}\left(  A,\Theta\right)  $ of the small support
area, namely the elastic logarithmic capacity matrix introduced heuristically
in \cite{na135}, a $4\times4$-matrix composed by coefficients in asymptotics
of solutions to the elasticity problem in $\Lambda$, see Section
\ref{sect3.5}, and is quite similar to the logarithmic capacity in harmonic
analysis, cf. \cite{Land, PoSe} and Remark \ref{remLOG}, and polarization and
capacity matrices in elasticity, cf. \cite{Ammari} and \cite{na280, na458}.
There exist numerical schemes \cite{na323, na404} to evaluate such characteristics.

\subsection{Architecture of the paper\label{sect1.5}}

In Section \ref{sect2} we derive anisotropic weighted Korn inequality in the
plate (\ref{1.1}) clamped along the lateral surface and the small set
(\ref{1.3}) (Theorem \ref{th2.1}) as well as demonstrate that weights
introduced in the Sobolev norms are optimal. Then we modify our approach to
prove Theorem \ref{th2.3} which serves for a plate with a traction-free
lateral surface but several small support areas.

The last section is devoted to three-dimensional boundary layers emerging in
the vicinity of small support areas. First of all, we outline the
two-dimensional Kirchhoff model of the plate $\Omega_{h}$ and the Sobolev
point condition $w_{3}(\mathcal{O})=0$, (\ref{A6}) for the average deflection
$w_{3}(y)$, which imitates the entire Dirichlet condition (\ref{1.13}) on
$\theta_{h}$. Moreover, we formulate Theorem \ref{th3.3} from \cite{ButCaNa2}
on the convergence of the spatial true solution to the Kirchhoff-Sobolev
solution in order to indicate the extremely slow convergence rate $O(|\ln
h|^{-1/2})$ and to emphasize the constitutive influence of boundary layers
described by an elasticity problem in the layer
\begin{equation}
\Lambda=\mathbb{R}^{2}\times\left(  -1/2,1/2\right)  \label{3.36}%
\end{equation}
clamped along the set $\theta$ on its planar base $\mathbb{R}^{2}%
\times\left\{  -1/2\right\}  $.

To prove the existence of a unique solution in the unbounded domain $\Lambda$
but with a finite energy, we need a new anisotropic weighted Korn inequality
which, however, can be derived by means of the same tricks as in Section
\ref{sect2}. At the same time, a rigorous derivation of asymptotic expansions
of the solution at infinity requires both, a formal procedure of dimension
reduction scheduled in Section \ref{sect3.2} and a specific application of the
Kondratiev theory \cite{Ko}, cf. \cite{na243}. This evident similarity of
several approaches in our paper explains its architecture.

In Section \ref{sect3.4} we strictly determine the elastic logarithmic
capacity matrix which was mentioned in Section \ref{sect1.4} and will be a
particular object in \cite{ButCaNa2}. In Theorem \ref{th3.7} we establish its
general properties.

\section{Weighted anisotropic inequalities of Korn's type\label{sect2}}

\subsection{Some Hardy-type inequalities}

Here below we collect some Hardy-type inequalities that will be used subsequently.

\begin{proposition}
Let $a(x)$ be a nonnegative function in $(0,1)$; then the following Hardy
inequality holds:
\[
\int_{0}^{1}a(x)|u(x)|^{2}\,dx\leq4\int_{0}^{1}\frac{1}{a(x)}\Big(\int_{x}%
^{1}a(t)\,dt\Big)^{2}|u^{\prime}(x)|^{2}\,dx\qquad\forall u\in H^{1}%
(0,1)\hbox{ with }u(0)=0.
\]

\end{proposition}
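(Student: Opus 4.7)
The plan is to reduce this to the classical Hardy argument by recognizing that $a(x)$ is the (negative) derivative of a natural primitive. Set
\[
A(x):=\int_{x}^{1}a(t)\,dt,
\]
so that $A$ is absolutely continuous on $[0,1]$, $A'(x)=-a(x)$ a.e., and $A(1)=0$. The integral on the left of the claimed inequality is then
$\int_{0}^{1}a(x)|u(x)|^{2}\,dx = -\int_{0}^{1}A'(x)|u(x)|^{2}\,dx$, which begs for integration by parts.

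First I would integrate by parts in this identity. The boundary contribution at $x=1$ vanishes because $A(1)=0$, and the one at $x=0$ vanishes thanks to the boundary condition $u(0)=0$; what remains is
\[
\int_{0}^{1}a(x)|u(x)|^{2}\,dx \;=\; 2\operatorname{Re}\!\int_{0}^{1}A(x)\,u(x)\,\overline{u'(x)}\,dx
\;\leq\; 2\int_{0}^{1}A(x)\,|u(x)|\,|u'(x)|\,dx.
\]
Next I would split the weight symmetrically as $A(x)=\sqrt{a(x)}\cdot A(x)/\sqrt{a(x)}$ and apply the Cauchy--Schwarz inequality:
\[
2\int_{0}^{1}A(x)\,|u(x)|\,|u'(x)|\,dx
\;\leq\; 2\Big(\!\int_{0}^{1}a(x)|u(x)|^{2}\,dx\Big)^{\!1/2}\Big(\!\int_{0}^{1}\frac{A(x)^{2}}{a(x)}|u'(x)|^{2}\,dx\Big)^{\!1/2}.
\]
Dividing by the first factor and squaring gives exactly the asserted inequality with the constant $4$.

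Two routine technicalities remain. First, the integration by parts is valid provided $A|u|^{2}$ is absolutely continuous, which follows from $u\in H^{1}(0,1)\subset C[0,1]$ and $A\in W^{1,1}(0,1)$. Second, the Cauchy--Schwarz step tacitly assumes the left-hand side is finite, and the factor $A(x)^{2}/a(x)$ must be interpreted as $+\infty$ where $a(x)=0$ (so such points force $A(x)=0$ on the right-hand side to be meaningful). To handle both issues cleanly I would argue first for $u\in C^{1}[0,1]$ with $u(0)=0$ and $a$ replaced by $a_{\varepsilon}=a+\varepsilon$, obtaining a bounded weight, then let $\varepsilon\to 0^{+}$ by monotone convergence on the left and dominated convergence on the right (since $A_{\varepsilon}^{2}/a_{\varepsilon}\leq A_{\varepsilon}^{2}/\varepsilon$ decreases to its natural limit), and finally pass to general $u\in H^{1}$ by density. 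The only mild obstacle is bookkeeping this limiting argument so that the value $+\infty$ is admissible on the right-hand side without making the inequality vacuous; the constant $4$ itself drops out directly from the Cauchy--Schwarz step and requires no further optimization.
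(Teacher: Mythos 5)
Your proof is correct and is essentially the paper's argument: the paper reaches the same intermediate bound $\int_0^1 a|u|^2\,dx\le 2\int_0^1\bigl(\int_t^1 a\bigr)|u(t)||u'(t)|\,dt$ by writing $u^2(x)=2\int_0^x uu'$ and applying Fubini, which is just the dual formulation of your integration by parts against $A(x)=\int_x^1 a$, and then both proofs finish identically with Cauchy--Schwarz followed by dividing by $\bigl(\int_0^1 a|u|^2\bigr)^{1/2}$. The technical caveats you raise (finiteness of the left-hand side before dividing, interpretation of $A^2/a$ where $a=0$) are not addressed in the paper either and do not affect the substance.
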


\begin{proof}
Since $u(0)=0$ we have
\[
u^{2}(x)=2\int_{0}^{x}u(t)u^{\prime}(t)\,dt,
\]
so that
\[
\int_{0}^{1}a(x)|u(x)|^{2}\,dx\leq2\int_{0}^{1}a(x)\Big(\int_{0}%
^{x}|u(t)||u^{\prime}(t)|\,dt\Big)\,dx.
\]
Interchanging the order of integration gives for the right-hand side
\[
2\int_{0}^{1}|u(t)||u^{\prime}(t)|\Big(\int_{t}^{1}a(x)\,dx\Big)\,dt,
\]
which, by H\"{o}lder inequality, is majorized by
\[
2\Big[\int_{0}^{1}a(t)|u(t)|^{2}\,dt\Big]^{1/2}\Big[\int_{0}^{1}\frac{1}%
{a(t)}\Big(\int_{t}^{1}a(x)\,dx\Big)^{2}|u^{\prime}(t)|^{2}\,dt\Big]^{1/2},
\]
and this concludes the proof.
\end{proof}

\bigskip

We list below some particular cases of functions $a(x)$, with the
corresponding Hardy-type inequalities. By taking
\[
a(x)=x^{-2}%
\]
we obtain for every $u\in H^{1}(0,1)$ with $u(0)=0$
\[
\int_{0}^{1}x^{-2}|u(x)|^{2}\,dx\leq4\int_{0}^{1}(1-x)^{2}|u^{\prime}%
(x)|^{2}\,dx,
\]
which implies the classical Hardy inequality
\begin{equation}
\int_{0}^{T}x^{-2}|u(x)|^{2}\,dx\leq4\int_{0}^{T}|u^{\prime}(x)|^{2}\,dx
\label{2.15}%
\end{equation}
for every $u\in H^{1}(0,T)$ with $u(0)=0$. Taking
\[
a(x)=(1-x)^{-1}|\ln(1-x)|^{-2}%
\]
gives for every $u\in H^{1}(0,1)$ with $u(0)=0$
\[
\int_{0}^{1}(1-x)^{-1}|\ln(1-x)|^{-2}|u(x)|^{2}\,dx\leq4\int_{0}%
^{1}(1-x)|u^{\prime}(x)|^{2}\,dx,
\]
which implies the inequality
\begin{equation}
\int_{0}^{R}x^{-1}|\ln(x/R)|^{-2}|u(x)|^{2}\,dx\leq4\int_{0}^{R}x|u^{\prime
}(x)|^{2}\,dx \label{2.16}%
\end{equation}
for every $u\in H^{1}(0,R)$ with $u(R)=0$. Taking
\[
a(x)=x^{-3}|\ln(x/2)|^{-2}%
\]
gives for every $u\in H^{1}(0,1)$ with $u(0)=0$
\[
\int_{0}^{1}x^{-3}|\ln(x/2)|^{-2}|u(x)|^{2}\,dx\leq4\int_{0}^{1}x^{-1}%
|\ln(x/2)|^{-2}|u^{\prime}(x)|^{2}\,dx
\]
which implies the inequality
\begin{equation}
\int_{0}^{R/2}x^{-3}|\ln(x/R)|^{-2}|u(x)|^{2}\,dx\leq4\int_{0}^{R/2}x^{-1}%
|\ln(x/R)|^{-2}|u^{\prime}(x)|^{2}\,dx \label{2.21}%
\end{equation}
for every $u\in H^{1}(0,R/2)$ with $u(0)=0$. Taking
\[
a(x)=(x+h)^{-4}\qquad\hbox{with }h>0
\]
gives for every $u\in H^{1}(0,1)$ with $u(0)=0$
\[
\int_{0}^{1}(x+h)^{-4}|u(x)|^{2}\,dx\leq\frac{4}{9}\int_{0}^{1}(x+h)^{-2}%
|u^{\prime}(x)|^{2}\,dx,
\]
which implies the inequality
\begin{equation}
\int_{0}^{T}(x+h)^{-4}|u(x)|^{2}\,dx\leq\frac{4}{9}\int_{0}^{T}(x+h)^{-2}%
|u^{\prime}(x)|^{2}\,dx \label{2.22}%
\end{equation}
for every $u\in H^{1}(0,T)$ with $u(0)=0$.

\subsection{Anisotropic weighted Korn inequality\label{sect2.2}}

In the paper \cite{Shoikhet} devoted to justification of the Kirchhoff theory
of plates, see also the pioneering papers \cite{Morgen} and \cite{Ciarle}
together with the monographs \cite{Ciarlet1, Ciarlet2, LeDret, Nabook, SaPa}
etc., it was proved that a constant $K\left(  \omega\right)  $ in the
inequality%
\begin{equation}
|||u;\Omega_{h}|||\leq K\left(  \omega\right)  \left\Vert D\left(
\nabla\right)  u;L^{2}\left(  \Omega_{h}\right)  \right\Vert ,\ \ \ \ \forall
u\in H_{0}^{1}\left(  \Omega_{h},\upsilon_{h}\right)  \label{2.1}%
\end{equation}
does not depend on $h\in(0,1]$ and, of course, on $u$, where $|||u;\Omega
_{h}|||$ is the anisotropic Sobolev norm
\begin{align}
|||u;\Omega_{h}|||^{2}  &  =\int_{\Omega_{h}}\Bigg[\sum_{i=1}^{2}\left(
|\nabla_{y}u_{i}|^{2}+h^{2}\left(  \left\vert \frac{\partial u_{i}}{\partial
z}\right\vert ^{2}+\left\vert \frac{\partial u_{3}}{\partial y_{i}}\right\vert
^{2}\right)  +|u_{i}|^{2}\right) \label{2.2}\\
&  \ \ \ \ \ \ \ \ \ \ \ \ \ \ \ \ \ \ \ \ \ +\left\vert \partial_{z}%
u_{3}\right\vert ^{2}+h^{2}|u_{3}|^{2}\Bigg]\,dx.\nonumber
\end{align}
Notice that the distribution of coefficients $h^{p}$ in (\ref{2.2}) is
optimal, namely one cannot replace $h^{2}$ by $h^{\alpha}$, with $\alpha<2$,
without losing the independence property of $K(\omega)$.

In the paper \cite{na155} a convenient modification of norm (\ref{2.2}) was
suggested, namely the weighted anisotropic norm
\begin{align}
|||u;\Omega_{h}|||_{0}^{2}  &  =\int_{\Omega_{h}}\Bigg[\sum_{i=1}^{2}\left(
\left\vert \nabla_{y}u_{i}\right\vert ^{2}+\frac{h^{2}}{s_{h}^{2}}\left(
\left\vert \frac{\partial u_{i}}{\partial z}\right\vert ^{2}+\left\vert
\frac{\partial u_{3}}{\partial y_{i}}\right\vert ^{2}\right)  +\frac{1}%
{s_{h}^{2}}|u_{i}|^{2}\right) \label{2.3}\\
&  \ \ \ \ \ \ \ \ \ \ \ \ \ \ \ \ \ \ \ \ \ \ +\left\vert \partial_{z}%
u_{3}\right\vert ^{2}+\frac{h^{2}}{s_{h}^{4}}|u_{3}|^{2}\Bigg]dx\nonumber
\end{align}
with the weighting function%
\begin{equation}
s_{h}\left(  y\right)  =h+\text{dist}\left(  y,\partial\omega\right)  .
\label{2.4}%
\end{equation}
Notice that in the middle of the plate the weights in (\ref{2.2}) and
(\ref{2.3}) are equivalent but in the vicinity of the lateral side
$\upsilon_{h}$ where the Dirichlet condition (\ref{1.14}) is imposed, the
displacements $u_{i}$, $u_{3}$ and all their derivatives get multipliers
$h^{-1}$ and $h^{0}=1$ respectively.

Our further justification scheme requires to insert into the norm (\ref{2.3})
some additional weights
\begin{equation}
S_{hq}(y)=(h^{2}+|y|^{2})^{-q/2}\left(  1+\left\vert \ln(h^{2}+|y|^{2}%
)\right\vert \right)  ^{-1} \label{2.5}%
\end{equation}
that become big in a $ch$-neighborhood of the set $\overline{\theta_{h}}$ and,
therefore, take the Dirichlet condition (\ref{1.13}) into account. Note that
the function $S_{hk}$ is smooth with any $h>0$ and in the sequel it is proper
to put into (\ref{2.3}) the weight $h+s_{0}\left(  y\right)  $, tantamount to
(\ref{2.4}). Here, $s_{0}$ stands for a smooth in $\overline{\omega}$ and
positive in $\omega$ function equivalent to dist$\left(  \cdot,\partial
\omega\right)  $ in the vicinity of the boundary $\partial\omega$.

Then we will prove the anisotropic weighted Korn inequality in the following theorem.

\begin{theorem}
\label{th2.1}Any displacement field $u\in H^{1}\left(  \Omega_{h}\right)
^{3}$ satisfying the Dirichlet conditions (\ref{1.13}) and (\ref{1.14}), meets
the weighted anisotropic Korn inequality
\begin{equation}
|||u;\Omega_{h}|||_{\bullet}\leq K_{\bullet}\left(  \omega\right)  \left\Vert
D\left(  \nabla\right)  u;L^{2}\left(  \Omega_{h}\right)  \right\Vert
,\ \ \ \ \forall u\in H_{0}^{1}\left(  \Omega_{h},\upsilon_{h}\cup\theta
_{h}\right)  \label{2.6}%
\end{equation}
where%
\begin{align}
|||u;\Omega_{h}|||_{\bullet}^{2}  &  =\int_{\Omega_{h}}\Bigg[\sum_{i=1}%
^{2}\left(  \left\vert \nabla_{y}u_{i}\right\vert ^{2}+\frac{h^{2}}{s_{h}^{2}%
}S_{h1}^{2}\left(  \left\vert \frac{\partial u_{i}}{\partial z}\right\vert
^{2}+\left\vert \frac{\partial u_{3}}{\partial y_{i}}\right\vert ^{2}\right)
+\frac{1}{s_{h}^{2}}S_{h1}^{2}|u_{i}|^{2}\right) \label{2.7}\\
&  \ \ \ \ \ \ \ \ \ \ \ \ \ \ \ \ \ \ \ \ \ +\left\vert \partial_{z}%
u_{3}\right\vert ^{2}+\frac{h^{2}}{s_{h}^{4}}S_{h2}^{2}|u_{3}|^{2}%
\Bigg]\,dx.\nonumber
\end{align}
with a constant $K\left(  \omega\right)  $ independent of $h\in\left(
0,h_{0}\right]  $ and $u\in H_{0}^{1}\left(  \Omega_{h};\upsilon_{h}\cup
\theta_{h}\right)  ^{3}$.
\end{theorem}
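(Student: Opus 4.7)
The plan is to bootstrap the anisotropic Korn inequality of \cite{na155},
\[
|||u;\Omega_h|||_0 \leq K(\omega)\|D(\nabla)u;L^2(\Omega_h)\|,\qquad u\in H_0^1(\Omega_h;\upsilon_h)^3,
\]
which uses only the clamping on $\upsilon_h$ and already controls every term appearing in (\ref{2.3}). The norm (\ref{2.7}) differs from (\ref{2.3}) only by an extra factor $S_{h1}^2(y)$ or $S_{h2}^2(y)$ inserted in three of those terms, and these factors are uniformly bounded in $h$ outside any fixed neighborhood of $\mathcal{O}$. Consequently the theorem reduces to showing, for some $R>0$ independent of $h$ and with $s_h(y)\approx 1$ on $B_R$, that
\[
\int_{B_R\times(-h/2,h/2)}\!\!\Big[S_{h1}^2\sum_{i=1}^2|u_i|^2+h^2S_{h1}^2\sum_{i=1}^2\big(|\partial_z u_i|^2+|\partial_i u_3|^2\big)+h^2S_{h2}^2|u_3|^2\Big]dx
\]
is bounded by $C\|D(\nabla)u;L^2(\Omega_h)\|^2$; this is where the Dirichlet condition (\ref{1.13}) on $\theta_h$ must enter.

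To exploit (\ref{1.13}) I would split $B_R$ into an inner disk $\{|y|\leq Mh\}$, with $M$ chosen so that $\overline{\theta}\subset\{|\eta|<M\}$, and an annulus $\{Mh<|y|\leq R\}$. Inside the disk the scaling $\eta=y/h$, $\zeta=z/h$ lands us in the fixed $3$D cylinder $B_M\times(-1/2,1/2)$ clamped on $\theta$, where Friedrichs and the standard $3$D Hardy inequality give $\int(1+|\eta|^2)^{-q}|\tilde u|^2\,d\eta d\zeta\leq C\int|\nabla_{\eta,\zeta}\tilde u|^2\,d\eta d\zeta$ for $q\in\{1,2\}$; combined with the classical Korn inequality on that fixed cylinder and pulled back by the Jacobian, this matches the weight $\sim h^{-2q}|\ln h|^{-2}$ against $\|D(\nabla)u;L^2(B_{Mh}\times(-h/2,h/2))\|^2$. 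On the annulus no point is directly clamped, so I use
\[
|u(y,z)|^2\leq 2|u(y,-h/2)|^2+2h\int_{-h/2}^{h/2}|\partial_z u|^2\,dz
\]
and treat $\bar u(y):=u(y,-h/2)$, which vanishes for $|y|\leq ch$ by (\ref{1.13}), as a $2$D function extended by zero into the inner disk. Passing to polar coordinates then reduces the remaining weighted $2$D integrals to the one-dimensional Hardy inequalities (\ref{2.16}) and (\ref{2.21}): the former trades $|y|^{-2}|\ln|y||^{-2}|\bar u|^2$ for $|\nabla_y\bar u|^2$, and the latter trades the super-singular $|y|^{-4}|\ln|y||^{-2}|\bar u|^2$ for the weight just handled; the matching scale $|y|\sim h$ is closed off with (\ref{2.22}).

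The principal obstacle will be the term with weight $h^2 s_h^{-4}S_{h2}^2|u_3|^2$: the exponent $-4$ sits exactly at the $2$D Hardy threshold, so the chain (\ref{2.21})$\to$(\ref{2.16}) must preserve the logarithmic factor $(1+|\ln(h^2+|y|^2)|)^{-1}$ of (\ref{2.5}) without any loss. Moreover, after this chain we are left with a weighted integral of $|\partial_i u_3|^2$, which $D(\nabla)u$ does not produce in isolation (only the symmetric combinations $\partial_z u_i+\partial_i u_3$ appear in (\ref{1.7})); closing the loop therefore requires feeding the result back through the base inequality of \cite{na155}, whose norm $|||u;\Omega_h|||_0^2$ already contains $(h^2/s_h^2)|\partial_i u_3|^2$. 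Matching the stretched $3$D estimate inside $\{|y|\leq Mh\}$ with the radial $2$D estimate on $\{Mh<|y|\leq R\}$ at the interface $|y|=Mh$ is the most delicate bookkeeping, and the particular form of the logarithm in (\ref{2.5}) appears tuned precisely so that the two regimes glue.
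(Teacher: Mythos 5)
Your overall architecture is sound in places: the reduction to a fixed neighbourhood of $\mathcal{O}$ via the unweighted inequality of \cite{na155}, and the treatment of the inner region $\{|y|\le Mh\}$ by rescaling to a fixed cylinder clamped along $\theta$ and applying the standard Korn inequality there, coincide with the paper's first step (its estimate (\ref{2.9})). The radial Hardy inequalities with logarithmic weights are also the right tool for the zeroth-order terms. But there is a genuine gap exactly at the point you call the ``principal obstacle''. The term $h^{2}s_h^{-2}S_{h1}^{2}\bigl(|\partial_z u_i|^{2}+|\partial_i u_3|^{2}\bigr)$ cannot be obtained by ``feeding the result back through the base inequality of \cite{na155}'': that inequality controls $h^{2}|\partial_i u_3|^{2}$ with no singular weight, whereas at $|y|\sim h$ the extra factor $S_{h1}^{2}$ is of order $h^{-2}|\ln h|^{-2}$, so you are short by an unbounded factor. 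Since your Hardy chain for $h^{2}S_{h2}^{2}|u_3|^{2}$ needs precisely this weighted gradient bound as input, the argument becomes circular. The paper closes this by a mechanism absent from your proposal: it expands $4|\varepsilon_{i3}(u)|^{2}=|\partial_z u_i|^{2}+|\partial_i u_3|^{2}+2\,\partial_z u_i\,\partial_i u_3$ against the weight $\chi_h(z)^{2}s_h^{-2}S_{01}^{2}$, where $\chi_h$ is a vertical cut-off vanishing on the bases, and integrates the cross term by parts in $z$ and in $y_i$; the surviving terms are controlled by the previously established weighted bound on $s_h^{-1}S_{01}u_i$ and by $\varepsilon_{ii}$, $\varepsilon_{33}$. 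This yields the weighted gradient estimate only on the thinner plate $\Omega_{h/2}$, and a second device --- decomposition on each $h$-cube into a rigid motion plus a remainder satisfying orthogonality conditions on the lower half-cube, hence a Korn inequality there --- propagates the estimate to the full thickness. Neither device can be bypassed by your bootstrap.

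Two further flaws sit in your annulus step. First, the trace $\bar u(y)=u(y,-h/2)$ vanishes only on $h\theta$, and $\theta$ need not contain any neighbourhood of the origin, so ``vanishes for $|y|\le ch$'' is unjustified; moreover extending $\bar u$ by zero into the inner disk destroys $H^{1}$ regularity unless it already vanishes near $|y|=Mh$. The paper instead multiplies $u$ by a smooth cut-off $X_h$ supported off $\mathbb{B}_{hR/2}^{2}$ and absorbs the commutator using the inner-region Korn estimate --- a fixable but necessary correction. Second, and not fixable within your scheme, after applying the two-dimensional Hardy inequality to $\bar u_3$ you are left with a weighted trace integral of $|\nabla_y u_3(\cdot,-h/2)|^{2}$ on the lower base; converting it back to a volume integral costs either a factor $h^{-1}$ (which returns you to the weighted gradient term above) or second derivatives of $u$, which $D(\nabla)u$ does not control.
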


\begin{proof}
Due to a completion argument, it suffices to verify inequality (\ref{2.6}) for
any smooth function $u$ in $\overline{\Omega_{h}}$ which vanishes near
$\overline{\upsilon_{h}}$ and $\overline{\theta_{h}}$, see (\ref{1.2}). We use
the stretched coordinates%
\begin{equation}
\xi=\left(  \eta,\zeta\right)  =\left(  h^{-1}y,h^{-1}z\right)  \label{3.35}%
\end{equation}
and consider the circular cylinder $\mathbf{Q}_{hR}=\mathbb{B}_{hR}^{2}%
\times\left(  -h/2,h/2\right)  $ where $\mathbb{B}_{\rho}^{2}$ is the disk
$\left\{  y:|y|<\rho\right\}  $ and the radius $R$ is fixed such that
$\theta\subset\mathbb{B}_{R/2}^{2}$. We write
\begin{equation}%
\begin{split}
\left\Vert \nabla_{x}u;L^{2}\left(  \mathbf{Q}_{hR}\right)  \right\Vert ^{2}
&  +h^{-2}\left\Vert u;L^{2}\left(  \mathbf{Q}_{hR}\right)  \right\Vert
^{2}=h\left\Vert \nabla_{\xi}u;L^{2}\left(  \mathbf{Q}_{R}\right)  \right\Vert
^{2}+h^{-1}\left\Vert u;L^{2}\left(  \mathbf{Q}_{R}\right)  \right\Vert ^{2}\\
&  =h\left\Vert \xi\mapsto u(x);H^{1}\left(  \mathbf{Q}_{R}\right)
\right\Vert ^{2}\leq c(R,\omega)h\left\Vert D(\nabla_{\xi})u;L^{2}\left(
\mathbf{Q}_{R}\right)  \right\Vert ^{2}\\
&  =c(R,\omega)\left\Vert D(\nabla_{x})u;L^{2}\left(  \mathbf{Q}_{hR}\right)
\right\Vert ^{2}\leq c(R,\omega)\left\Vert D(\nabla)u;L^{2}(\Omega
_{h})\right\Vert ^{2}.
\end{split}
\label{2.9}%
\end{equation}
Here, we have applied the standard Korn inequality%
\[
\left\Vert v;H^{1}\left(  \mathbf{Q}_{R}\right)  \right\Vert ^{2}\leq c\left(
R,\omega\right)  \left\Vert D\left(  \nabla_{\xi}\right)  v;L^{2}\left(
\mathbf{Q}_{R}\right)  \right\Vert ^{2},\ \ \ \forall v\in H_{0}^{1}\left(
\mathbf{Q}_{R};\theta\right)  ^{3}%
\]
which holds true due to the Dirichlet condition on $\theta$.

Let $\chi\in C^{\infty}(\mathbb{R})$ be a reference cut-off function such
that
\begin{equation}
\chi\left(  r\right)  =1\text{\ for\ }r<1/2\text{ \ \ and \ \ }\chi\left(
r\right)  =0\text{ for\ }r\geq1,\ \ \ 0\leq\chi\leq1. \label{2.11}%
\end{equation}
We set $X_{h}(y)=1-\chi\left(  (hR)^{-1}|y|\right)  $ and, in view of
$\left\vert \nabla_{y}X_{h}\left(  y\right)  \right\vert \leq ch^{-1}$,
obtain
\begin{equation}%
\begin{split}
&  \left\Vert D\left(  \nabla\right)  \left(  X_{h}u\right)  ;L^{2}\left(
\Omega_{h}\right)  \right\Vert ^{2}\leq c(\left\Vert D(\nabla)u;L^{2}%
(\Omega_{h})\right\Vert ^{2}\\
&  \qquad+h^{-2}\left\Vert u;L^{2}\left(  \left(  \mathbb{B}_{hR}%
\setminus\mathbb{B}_{hR/2}\right)  \times\left(  -h/2,h/2\right)  \right)
\right\Vert ^{2})\leq c\left\Vert D(\nabla)u;L^{2}(\Omega_{h})\right\Vert
^{2}.
\end{split}
\label{2.12}%
\end{equation}
Since $S_{h1}(y)\leq ch^{-1}$ in $\mathbf{Q}_{hR}$, estimates (\ref{2.9}) and
(\ref{2.12}) mean that we further can treat the product $X_{h}u$ only. This
product is still denoted by $u$ but we remember its specific property
\begin{equation}
u(y,z)=0\quad\text{for }|y|<hR/2. \label{2.0}%
\end{equation}
Formulas (\ref{1.6}), (\ref{1.14}) and integrating by parts yield
\[%
\begin{split}
4\int_{\Omega_{h}}\left\vert \varepsilon_{12}(u)\right\vert ^{2}dx  &
=\int_{\Omega_{h}}\left(  \left\vert \frac{\partial u_{1}}{\partial y_{2}%
}\right\vert ^{2}+\left\vert \frac{\partial u_{2}}{\partial y_{1}}\right\vert
^{2}+2\frac{\partial u_{1}}{\partial y_{2}}\frac{\partial u_{2}}{\partial
y_{1}}\right)  dx\\
&  =\int_{\Omega_{h}}\left(  \left\vert \frac{\partial u_{1}}{\partial y_{2}%
}\right\vert ^{2}+\left\vert \frac{\partial u_{2}}{\partial y_{1}}\right\vert
^{2}+2\frac{\partial u_{1}}{\partial y_{2}}\frac{\partial u_{2}}{\partial
y_{1}}\right)  dx\\
&  \geq\int_{\Omega_{h}}\left(  \left\vert \frac{\partial u_{1}}{\partial
y_{2}}\right\vert ^{2}+\left\vert \frac{\partial u_{2}}{\partial y_{1}%
}\right\vert ^{2}-\left\vert \frac{\partial u_{1}}{\partial y_{1}}\right\vert
^{2}-\left\vert \frac{\partial u_{2}}{\partial y_{2}}\right\vert ^{2}\right)
dx.
\end{split}
\]
Hence, recalling the strains $\varepsilon_{11}\left(  u\right)  $ and
$\varepsilon_{22}\left(  u\right)  $, we obtain%
\begin{equation}
\left\Vert \nabla_{y}u_{i};L^{2}\left(  \Omega_{h}\right)  \right\Vert
^{2}\leq2\left\Vert D\left(  \nabla\right)  u;L^{2}\left(  \Omega_{h}\right)
\right\Vert ^{2},\ \ \ i=1,2. \label{2.13}%
\end{equation}
We have the Friedrichs inequality in $\omega$ integrated in $z\times
(-h/2,h/2)$, that is
\begin{equation}
\left\Vert u_{i};L^{2}\left(  \Omega_{h}\right)  \right\Vert ^{2}\leq
c_{\omega}\left\Vert \nabla_{y}u_{i};L^{2}\left(  \Omega_{h}\right)
\right\Vert ^{2}. \label{2.14}%
\end{equation}
Let us explain how inequalities (\ref{2.13})--(\ref{2.14}) provide the
estimates%
\begin{equation}
\left\Vert s_{h}^{-1}S_{01}u_{i};L^{2}\left(  \Omega_{h}\right)  \right\Vert
^{2}\leq c\left\Vert D\left(  \nabla\right)  u;L^{2}\left(  \Omega_{h}\right)
\right\Vert ^{2},\qquad i=1,2, \label{2.17}%
\end{equation}
where $S_{01}\left(  y\right)  $ is given by (\ref{2.5}) at $h=0$. To attach
the weight $s_{h}^{-1}$, we rewrite the function $u_{i}$ in the natural
curvilinear coordinates $n,s$ while $n$ is the oriented distance to
$\partial\omega$, $n<0$ in $\omega$, and $s$ is the arc length along
$\partial\omega$. Setting $t=h-n$ and $U\left(  t\right)  =u_{i}\left(
n,s\right)  $ in (\ref{2.15}), we observe that the Jacobian of the coordinate
change $y\mapsto\left(  n,s\right)  $ and its inverse are bounded in the
$\delta$-neighborhood $\mathcal{V}_{\delta}$ of $\partial\omega$, $\delta>0$
being fixed small. Since $\left\vert \partial_{|n|}u_{i}(x)\right\vert
\leq\left\vert \nabla_{y}u_{i}(x)\right\vert $, integrating in $s\in\omega$
and $z\in\left(  -h/2,h/2\right)  $ converts (\ref{2.15}) into the inequality%
\begin{equation}
\left\Vert s_{h}^{-1}u_{i};L^{2}\left(  \left(  \omega\cap\mathcal{V}_{\delta
}\right)  \times\left(  -h/2,h/2\right)  \right)  \right\Vert ^{2}\leq
c\left\Vert \nabla_{y}u_{i};L^{2}\left(  \left(  \omega\cap\mathcal{V}%
_{\delta}\right)  \times\left(  -h/2,h/2\right)  \right)  \right\Vert ^{2}
\label{2.18}%
\end{equation}
because $s_{h}(y)\sim h+|n|$. Formula (\ref{2.14}) allows us to replace
$\omega\cap\mathcal{V}_{\delta}$ with $\omega$ in (\ref{2.18}).

The weight $S_{01}$ ought to be inserted in a similar manner. However, we go
over to the polar coordinate system $\left(  r,\varphi\right)  $ in the
$y$-plane and multiply $u_{i}$ with the cut-off function $\chi_{\omega}\left(
y\right)  =\chi\left(  y/R_{\omega}\right)  $ where $R_{\omega}$ is chosen
such that $\mathbb{B}_{R_{\omega}}^{2}\subset\omega$. It remains to mention
that $dy=rdrd\varphi$ and%
\[
\left\Vert \partial_{r}\left(  \chi_{\omega}u_{i}\right)  ;L^{2}\left(
\mathbb{B}_{R_{\omega}}^{2}\right)  \right\Vert \leq c_{\omega}\left\Vert
u_{i};H^{1}\left(  \mathbb{B}_{R_{\omega}}^{2}\right)  \right\Vert ,
\]
and to apply inequalities (\ref{2.16}) integrated over $\left(  \varphi
,z\right)  \in\left(  0,2\pi\right)  \times\left(  -h/2,h/2\right)  $ and
(\ref{2.14}), (\ref{2.13}).

To get an accurate information on the derivatives $\partial_{z}u_{i}$ and
$\nabla_{y}u_{3}$ is a much more complicated issue and we apply a trick from
\cite{na155}. To this end, we introduce the cut-off function $\chi_{h}\left(
z\right)  =\chi\left(  2z/h\right)  $ which is null on the bases $\Sigma
_{h}^{\pm}$ and observe that, according to definition (\ref{2.4}), (\ref{2.5})
of the weights, there holds%
\begin{equation}
0<h^{2}s_{h}\left(  y\right)  ^{-1}S_{01}\left(  y\right)  \leq c_{S}%
,\ \ \ y\in\overline{\omega}\setminus\mathbb{B}_{hR/2}^{2},\ h\in\left(
0,h_{0}\right]  . \label{2.19}%
\end{equation}
Notice that we have reduced our analysis to the case $u=0$ in $\mathbb{Q}%
_{hR/2}=\mathbb{B}_{hR/2}^{2}\times(-h/2,h/2)$. Then we proceed as follows:
\begin{gather*}
c_{S}^{2}\int_{\Omega_{h}}\left\vert \varepsilon_{i3}(u)\right\vert ^{2}dx\geq
h^{2}\int_{\Omega_{h}}\chi_{h}^{2}s_{h}^{-2}S_{01}^{2}\left(  \left\vert
\frac{\partial u_{i}}{\partial z}\right\vert ^{2}+\left\vert \frac{\partial
u_{3}}{\partial y_{i}}\right\vert ^{2}\right)  dx\\
+2h^{2}\int_{\Omega_{h}}\chi_{h}^{2}s_{h}^{-2}S_{01}^{2}\frac{\partial u_{i}%
}{\partial z}\frac{\partial u_{3}}{\partial y_{i}}dx=:I_{1}^{h}+2I_{2}^{h},\\
I_{2}^{h}=-h^{2}\int_{\Omega_{h}}\chi_{h}^{2}s_{h}^{-2}S_{01}^{2}u_{i}%
\frac{\partial^{2}u_{3}}{\partial z\partial y_{i}}dx-2h^{2}\int_{\Omega_{h}%
}\chi_{h}\partial_{z}\chi_{h}s_{h}^{-2}S_{01}^{2}u_{i}\frac{\partial u_{3}%
}{\partial y_{i}}dx=:I_{3}^{h}+2I_{4}^{h}.
\end{gather*}
Since $\left\vert \partial_{z}\chi_{h}\left(  z\right)  \right\vert \leq
c_{\chi}h^{-1}$, we have%
\begin{align*}
\left\vert I_{4}^{h}\right\vert  &  \leq c_{\chi}\left(  h^{2}\int_{\Omega
_{h}}\chi_{h}^{2}s_{h}^{-2}S_{01}^{2}\left\vert \frac{\partial u_{3}}{\partial
y_{i}}\right\vert ^{2}dx\right)  ^{1/2}\left(  \int_{\Omega_{h}}s_{h}%
^{-2}S_{01}^{2}\left\vert u_{i}\right\vert ^{2}dx\right)  ^{1/2}\\
&  \leq\delta I_{1}^{h}+C\delta^{-}1\left\Vert s_{h}^{-}1S_{0}1u_{i}%
;L^{2}\left(  \Omega_{h}\right)  \right\Vert ^{2}%
\end{align*}
where $\delta>0$ is arbitrary and the last norm has been estimated in
(\ref{2.17}). Furthermore,
\[
I_{3}^{h}=h^{2}\int_{\Omega_{h}}\chi_{h}^{2}s_{h}^{-2}S_{01}^{2}\frac{\partial
u_{i}}{\partial y_{i}}\frac{\partial u_{3}}{\partial z}dx+h^{2}\int
_{\Omega_{h}}\chi_{h}^{2}u_{i}\frac{\partial u_{3}}{\partial z}\frac{\partial
}{\partial y_{i}}\left(  s_{h}^{-2}S_{01}^{2}\right)  dx=:I_{5}^{h}+I_{6}%
^{h},
\]
both the integrals getting appropriate bounds with simplicity. First, in view
of (\ref{2.19}),
\[
\left\vert I_{5}^{h}\right\vert \leq\frac{1}{2}c_{S}^{2}\left(  \left\Vert
\varepsilon_{ii}\left(  u\right)  ;L^{2}\left(  \Omega_{h}\right)  \right\Vert
^{2}+\left\Vert \varepsilon_{33}\left(  u\right)  ;L^{2}\left(  \Omega
_{h}\right)  \right\Vert ^{2}\right)  \leq C\left\Vert D\left(  \nabla\right)
u;L^{2}\left(  \Omega_{h}\right)  \right\Vert ^{2}.
\]
Second, performing differentiation of weight in $y$ we see that, for
$y\in\overline{\omega}\setminus\mathbb{B}_{hR/2}^{2}$ and $h\in\left(
0,h_{0}\right]  $%
\[
h^{2}\left\vert \frac{\partial}{\partial y_{i}}\left(  s_{h}\left(  y\right)
^{-2}S_{01}\left(  y\right)  ^{2}\right)  \right\vert \leq ch^{2}s_{h}\left(
y\right)  ^{-2}S_{01}\left(  y\right)  ^{2}\left(  \frac{1}{h+s_{0}\left(
y\right)  }+\frac{1}{\left\vert y\right\vert }\right)  \leq Cs_{h}\left(
y\right)  ^{-1}S_{01}\left(  y\right)  ,
\]
and, thus,%
\[
\left\vert I_{6}^{h}\right\vert \leq c\left(  \left\Vert s_{h}^{-1}S_{01}%
u_{i};L^{2}\left(  \Omega_{h}\right)  \right\Vert ^{2}+\left\Vert \partial
_{z}u_{3};L^{2}\left(  \Omega_{h}\right)  \right\Vert ^{2}\right)  \leq
C\left\Vert D\left(  \nabla\right)  u;L^{2}\left(  \Omega_{h}\right)
\right\Vert ^{2}.
\]

Collecting relations listed above provides the formula%
\[
I_{1}^{h}\leq c\left(  1+\delta^{-1}\right)  \left\Vert D\left(
\nabla\right)  u;L^{2}\left(  \Omega_{h}\right)  \right\Vert ^{2}-2\delta
I_{1}^{h}%
\]
so that putting $\delta=1/4$ leads to the following estimate of the
derivatives in question:
\begin{equation}
\label{2.20}%
\begin{split}
h^{2}\left\Vert s_{h}^{-1}S_{01}\partial_{z} u_{i};L^{2}\left(  \Omega
_{h/2}\right)  \right\Vert ^{2}  &  +h^{2}\left\Vert s_{h}^{-1}S_{01}\partial
u_{3}/\partial y_{i};L^{2}\left(  \Omega_{h/2}\right)  \right\Vert ^{2}\\
&  \le C\left\Vert D(\nabla)u;L^{2}\left(  \Omega_{h}\right)  \right\Vert
^{2},
\end{split}
\end{equation}
however only inside the thinner plate $\Omega_{h/2}=\omega\times\left(
-h/4,h/4\right)  $ where the cut-off function equals one.

We postpone spreading of estimate (\ref{2.20}) onto the entire plate and
conclude with the displacement $u_{3}$ itself.

Operating with (\ref{2.22}) and (\ref{2.21}) in the same way as with
(\ref{2.15}) and (\ref{2.16}), we derive from estimates (\ref{2.20}) of
$\nabla_{y} u_{3}$ the formula
\begin{equation}
h^{2}\left\Vert s_{h}^{-1}S_{02}u_{3};L^{2}(\Omega_{h/2}) \right\Vert ^{2}\le
c\left\Vert D(\nabla)u;L^{2}(\Omega_{h})\right\Vert ^{2} \label{2.23}%
\end{equation}
with a weight required in (\ref{2.7}) but again in a thinner plate. By the
way, we become in position to lighten weights in (\ref{2.20}), (\ref{2.23}) by
the replacement $S_{0q}\mapsto S_{hq}$ and write
\begin{equation}
|||u;\Omega_{h/2}|||_{\bullet}^{2}\le c\left\Vert D(\nabla)u;L^{2}(\Omega
_{h})\right\Vert ^{2}. \label{2.000}%
\end{equation}
Moreover, we now may forget about the artificial property (\ref{2.0}) of the
displacement field $u$.

To improve the obtained estimates, we apply a method of passing anisotropic
Korn inequalities from one part of a body to another part which was proposed
in \cite{na155} and elaborated in \cite{na398}. Let $\mathbb{Q}_{h}$ be the
cube $\left\{  x:\left\vert y_{i}-y_{i}^{0}\right\vert <h/2,\ |z|<h/2\right\}
$ with some center $(y^{0},0)$. Extending $u$ by zero from $\Omega_{h}$ onto
the layer $\left\{  x:|z|<h/2\right\}  $, we assume that $y^{0}\in\omega$ and
set $U(\eta,\zeta)=u(y^{0}+h\eta,h\zeta)$ where $\xi=(\eta,\zeta)=\left(
h^{-1}\left(  y-y^{0}\right)  ,h^{-1}z\right)  \in\mathbb{Q}_{1}$ are
stretched coordinates, cf. (\ref{3.35}). We introduce the rigid motion matrix
of size $3\times6$
\begin{equation}
d\left(  \xi\right)  =\left(
\begin{array}
[c]{cccccc}%
1 & 0 & 0 & 0 & \xi_{3} & -\xi_{2}\\
0 & 1 & 0 & -\xi_{3} & 0 & \xi_{1}\\
0 & 0 & 1 & \xi_{2} & -\xi_{1} & 0
\end{array}
\right)  \label{2.24}%
\end{equation}
and make the following decomposition in the unit cube $\mathbb{Q}_{1}$:%
\begin{equation}
\mathcal{U}\left(  \xi\right)  =\mathcal{U}^{\bot}\left(  \xi\right)
+d\left(  \xi\right)  \mathcal{U}^{0}. \label{2.25}%
\end{equation}
Here, the last term implies a rigid motion generated by the column%
\begin{equation}
\mathcal{U}^{0}=\mathbf{d}\left(  \mathbb{Q}_{1}^{\prime}\right)  ^{-1}%
\int_{\mathbb{Q}_{1}^{\prime}}d\left(  \xi\right)  ^{\top}\mathcal{U}\left(
\xi\right)  d\xi\in\mathbb{R}^{6}, \label{2.26}%
\end{equation}
where $\mathbb{Q}_{1}^{\prime}=\left\{  \xi\in\mathbb{Q}_{1}:|\zeta
|<1/4\right\}  $ is a half of the cube and $\mathbf{d}\left(  \mathbb{Q}%
_{1}^{\prime}\right)  $ is a Gram matrix of size $6\times6$, symmetric and
positive definite,%
\begin{equation}
\mathbf{d}\left(  \mathbb{Q}_{1}^{\prime}\right)  =\int_{\mathbb{Q}%
_{1}^{\prime}}d\left(  \xi\right)  ^{\top}d\left(  \xi\right)  d\xi.
\label{2.27}%
\end{equation}
Owing to definition (\ref{2.26}), (\ref{2.27}), the component $\mathcal{U}%
^{\bot}$ meets the orthogonality conditions%
\begin{equation}
\int_{\mathbb{Q}_{1}^{\prime}}d\left(  \xi\right)  ^{\top}\mathcal{U}^{\bot
}\left(  \xi\right)  d\xi=0\in\mathbb{R}^{6} \label{2.28}%
\end{equation}
which, as known (see the proof of Theorem 3.3.3 \cite{DuLi} or Theorem 2.3.3
\cite{Nabook}), assure the Korn inequality on the intact cube%
\begin{equation}
\left\Vert \mathcal{U}^{\bot};H^{1}\left(  \mathbb{Q}_{1}\right)  \right\Vert
^{2}\leq K\left\Vert D\left(  \nabla_{\xi}\right)  \mathcal{U}^{\bot}%
;L^{2}\left(  \mathbb{Q}_{1}\right)  \right\Vert ^{2}. \label{2.29}%
\end{equation}
Due to the central symmetry of the integration domain in (\ref{2.27}) the
matrix $\mathbf{d}\left(  \mathbb{Q}_{1}^{\prime}\right)  $ is diagonal.
Hence, formulas (\ref{2.26}) and (\ref{2.24}) immediately show that%
\begin{equation}
\left\vert \mathcal{U}_{i}^{0}\right\vert \leq c\left\Vert \mathcal{U}%
_{i};L^{2}\left(  \mathbb{Q}_{1}^{\prime}\right)  \right\Vert
,\ \ \ i=1,2,\ \ \ \ \ \left\vert \mathcal{U}_{3}^{0}\right\vert \leq
c\left\Vert \mathcal{U}_{3};L^{2}\left(  \mathbb{Q}_{1}^{\prime}\right)
\right\Vert . \label{2.30}%
\end{equation}
The component $\mathcal{U}_{6}^{0}$ is estimated in the following way:%
\[%
\begin{split}
\left\vert \mathcal{U}_{6}^{0}\right\vert  &  =\mathbf{d}_{66}^{-1}\left\vert
\int_{\mathbb{Q}_{1}^{\prime}}\left(  \eta_{1}\mathcal{U}_{2}\left(
\xi\right)  -\eta_{2}\mathcal{U}_{1}\left(  \xi\right)  \right)
d\xi\right\vert \\
&  =\mathbf{d}_{66}^{-1}\left\vert \int_{\mathbb{Q}_{1}^{\prime}}\left(
\mathcal{U}_{2}\left(  \xi\right)  \frac{\partial}{\partial\eta_{1}}\left(
\frac{\eta_{1}^{2}}2-\frac1{8}\right)  -\mathcal{U}_{1}\left(  \xi\right)
\frac{\partial}{\partial\eta_{2}}\left(  \frac{\eta_{2}^{2}}2-\frac
1{8}\right)  \right)  d\xi\right\vert \\
&  =\mathbf{d}_{66}^{-1}\left\vert \int_{\mathbb{Q}_{1}^{\prime}}\left(
\left(  \frac{\eta_{1}^{2}}2-\frac1{8}\right)  \frac{\partial\mathcal{U}_{2}%
}{\partial\eta_{1}}\left(  \xi\right)  -\left(  \frac{\eta_{2}^{2}}2-\frac
1{8}\right)  \frac{\partial\mathcal{U}_{1}}{\partial\eta_{2}}\left(
\xi\right)  \right)  d\xi\right\vert \\
&  \leq c\left(  \left\Vert \frac{\partial\mathcal{U}_{1}}{\partial\eta_{2}%
};L^{2}(\mathbb{Q}_{1}^{\prime})\right\Vert +\left\Vert \frac{\partial
\mathcal{U}_{2}}{\partial\eta_{1}};L^{2}\left(  \mathbb{Q}_{1}^{\prime
}\right)  \right\Vert \right)  .
\end{split}
\]
Note that integration by parts did not bring a surface integral because
$\left(  \eta_{i}^{2}/2-1/8\right)  _{\eta_{i}=\pm1/2}=0$. Referring to the
formulas%
\[
\zeta=\frac{\partial}{\partial\zeta}\left(  \frac{\zeta^{2}}2-\frac
1{32}\right)  ,\qquad\left.  \left(  \frac{\zeta^{2}}2-\frac1{32}\right)
\right\vert _{\zeta=\pm1/4}=0,
\]
we finally obtain in a similar manner that
\[
\left\vert \mathcal{U}_{6-i}^{0}\right\vert \le c\left(  \left\Vert
\frac{\partial\mathcal{U}_{i}}{\partial\zeta};L^{2}\left(  \mathbb{Q}%
_{1}^{\prime}\right)  \right\Vert +\left\Vert \frac{\partial\mathcal{U}_{3}%
}{\partial\eta_{i}};L^{2}\left(  \mathbb{Q}_{1}^{\prime}\right)  \right\Vert
\right)  ,\qquad i=1,2.
\]

Now we return to the $x$-coordinates and the displacement field $u$. The
decomposition (\ref{2.25}) determines the component $u^{\bot}$ which,
according to (\ref{2.29}), gets the estimate, cf. (\ref{2.9}),%
\begin{equation}
\left\Vert \nabla u^{\bot};L^{2}\left(  \mathbb{Q}_{h}\right)  \right\Vert
^{2}+h^{-2}\left\Vert u^{\bot};L^{2}\left(  \mathbb{Q}_{h}\right)  \right\Vert
^{2}\leq c\left\Vert D\left(  \nabla\right)  u;L^{2}\left(  \mathbb{Q}%
_{h}\right)  \right\Vert ^{2}. \label{2.33}%
\end{equation}
Then we calculate
\begin{equation}
\label{2.34}%
\begin{split}
&  \big\Vert s_{h}^{-1}S_{1h}u_{i};L^{2}\left(  \mathbb{Q}_{h}\right)
\big\Vert^{2}\\
&  \quad\le c\left(  h^{-2}\left\Vert u_{i}^{\bot};L^{2}\left(  \mathbb{Q}%
_{h}\right)  \right\Vert ^{2}+s_{h}\left(  y^{0}\right)  ^{-1}S_{1h}\left(
y^{0}\right)  \operatorname*{mes}\nolimits_{3}(\mathbb{Q}_{h})\left(
\left\vert \mathcal{U}_{i}^{0}\right\vert ^{2}+\left\vert \mathcal{U}%
_{6-i}^{0}\right\vert ^{2}+\left\vert \mathcal{U}_{0}^{0}\right\vert
^{2}\right)  \right) \\
&  \quad\le c\bigg(\left\Vert D\left(  \nabla\right)  u;L^{2}(\mathbb{Q}%
_{h})\right\Vert ^{2}+\left\Vert s_{h}^{-1}S_{1h}u_{i};L^{2}(\mathbb{Q}%
_{h}^{\prime})\right\Vert ^{2}+h^{2}\left\Vert s_{h}^{-1}S_{1h}\frac{\partial
u_{i}}{\partial z};L^{2}(\mathbb{Q}_{h}^{\prime})\right\Vert ^{2}\\
&  \qquad\qquad+h^{2}\left\Vert s_{h}^{-1}S_{1h}\frac{\partial u_{3}}{\partial
y_{i}};L^{2}(\mathbb{Q}_{h}^{\prime})\right\Vert ^{2}+\left\Vert \nabla_{y}
u^{\prime};L^{2}(\mathbb{Q}_{h}^{\prime})\right\Vert ^{2}\bigg)\\
&  \quad\le c\left(  \left\Vert D\left(  \nabla\right)  u;L^{2}\left(
\mathbb{Q}_{h}\right)  \right\Vert ^{2}+|||u;\mathbb{Q}_{h}^{\prime
}|||_{\bullet}^{2}\right)  .
\end{split}
\end{equation}
This calculation needs a detailed commentary. We here and further take into
account the following formulas for weights:
\begin{equation}
\label{2.35}%
\begin{split}
&  s_{h}\left(  y^{0}\right)  ^{-1}S_{hq}\left(  y^{0}\right)  \leq
\sup\limits_{x\in\mathbb{Q}_{h}}s_{h}\left(  y\right)  ^{-1}S_{hq}\left(
y\right)  \leq cs_{h}\left(  y^{0}\right)  ^{-1}S_{hq}\left(  y^{0}\right)
,\\
&  h^{2q}s_{h}\left(  y\right)  ^{-1}S_{hq}\left(  y\right)  \le c_{q},\qquad
q=0,1.
\end{split}
\end{equation}
The first inequality in (\ref{2.34}) was obtained by using decomposition
(\ref{2.25}) and a direct computation of the norm in $L^{2}(\mathbb{Q}_{h})$.
The factor $\operatorname*{mes}\nolimits_{3}\mathbb{Q}_{h}=h^{3}$ was
compensated due to the relations%
\[
\left\Vert u_{j};L^{2}(\mathbb{Q}_{h}^{\prime})\right\Vert ^{2}=h^{-3}%
\left\Vert \mathcal{U}_{j};L^{2}(\mathbb{Q}_{1}^{\prime})\right\Vert ^{2},
\qquad\left\Vert \frac{\partial u_{j}}{\partial x_{k}};L^{2}(\mathbb{Q}%
_{h}^{\prime})\right\Vert ^{2}=h^{-1}\left\Vert \frac{\partial\mathcal{U}_{j}%
}{\partial\xi_{k}};L^{2}(\mathbb{Q}_{1}^{\prime})\right\Vert ^{2}%
\]
but the last one still leaves the coefficient $h^{2}$ on the square of the
$L^{2}(\mathbb{Q}_{h}^{\prime})$-norm of a derivative. Then we applied
estimates (\ref{2.30})--(\ref{2.33}). Finally we recalled definition
(\ref{2.7}) of a weighted anisotropic norm.

Augmenting (\ref{2.35}) with the relation $S_{h1}\left(  y\right)  ^{-1}%
S_{h2}\left(  y\right)  \leq h^{-1}$ in $\mathbb{Q}_{h}$, we continue as
follows:
\[%
\begin{split}
&  h^{2}\left\Vert s_{h}^{-1}S_{2h}u_{3};L^{2}(\mathbb{Q}_{h})\right\Vert
^{2}\\
&  \qquad\leq c\left(  h^{-2}\left\Vert u_{3}^{\bot};L^{2}(\mathbb{Q}%
_{h})\right\Vert ^{2}+h^{2}s_{h}\left(  y^{0}\right)  ^{-1}S_{2h}\left(
y^{0}\right)  \operatorname*{mes}\nolimits_{3}\mathbb{Q}_{h}\left(  \left\vert
\mathcal{U}_{3}^{0}\right\vert ^{2}+\left\vert \mathcal{U}_{4}^{0}\right\vert
^{2}+\left\vert \mathcal{U}_{5}^{0}\right\vert ^{2}\right)  \right) \\
&  \qquad\leq c\bigg(\left\Vert D\left(  \nabla\right)  u^{\bot}%
;L^{2}(\mathbb{Q}_{h})\right\Vert ^{2}+h^{2}\left\Vert s_{h}^{-1}S_{2h}%
u_{3};L^{2}(\mathbb{Q}_{h}^{\prime})\right\Vert ^{2}+h^{2}\left\Vert
s_{h}^{-1}S_{1h}\nabla_{y}u_{3};L^{2}(\mathbb{Q}_{h}^{\prime})\right\Vert
^{2}\\
&  \qquad\qquad+h^{2}\left\Vert s_{h}^{-1}S_{1h}\partial_{z}u^{\prime}%
;L^{2}(\mathbb{Q}_{h}^{\prime})\right\Vert ^{2}\bigg)\\
&  \qquad\leq c\left(  \left\Vert D\left(  \nabla\right)  u;L^{2}%
(\mathbb{Q}_{h})\right\Vert ^{2}+|||u;\mathbb{Q}_{h}^{\prime}|||_{\bullet}%
^{2}\right)  ,
\end{split}
\]
where as usual $u^{\prime}=\left(  u_{1},u_{2}\right)  ^{\top}$. Since the
left $3\times3$-block of the rigid motion matrix (\ref{2.24}) is annulled by
differentiation, treating derivatives of $u_{j}$ becomes much simpler:
\begin{equation}%
\begin{split}
&  \left\Vert \frac{\partial u_{1}}{\partial y_{2}};L^{2}(\mathbb{Q}%
_{h})\right\Vert ^{2}+\left\Vert \frac{\partial u_{2}}{\partial y_{1}}%
;L^{2}(\mathbb{Q}_{h})\right\Vert ^{2}\leq c\left(  \left\Vert \nabla u^{\bot
};L^{2}(\mathbb{Q}_{h})\right\Vert ^{2}+h^{3}\left\vert \mathcal{U}_{6}%
^{0}\right\vert ^{2}\right) \\
&  \quad\leq c\bigg(\left\Vert D\left(  \nabla\right)  u;L^{2}(\mathbb{Q}%
_{h})\right\Vert ^{2}+\left\Vert \frac{\partial u_{1}}{\partial y_{2}}%
;L^{2}(\mathbb{Q}_{h}^{\prime})\right\Vert ^{2}+\left\Vert \frac{\partial
u_{2}}{\partial y_{1}};L^{2}(\mathbb{Q}_{h}^{\prime})\right\Vert ^{2}\bigg),\\
&  h^{2}\left\Vert s_{h}^{-1}S_{h1}\frac{\partial u^{\prime}}{\partial
z};L^{2}(\mathbb{Q}_{h})\right\Vert ^{2}+h^{2}\left\Vert s_{h}^{-1}%
S_{h1}\nabla_{y}u_{3};L^{2}(\mathbb{Q}_{h})\right\Vert ^{2}\\
&  \quad\leq c\left(  \left\Vert \nabla u^{\bot};L^{2}(\mathbb{Q}%
_{h})\right\Vert ^{2}+h^{2}s_{h}\left(  y^{0}\right)  ^{-2}S_{h1}\left(
y^{0}\right)  ^{2}h^{3}\left(  \left\vert \mathcal{U}_{4}^{0}\right\vert
^{2}+\left\vert \mathcal{U}_{5}^{0}\right\vert ^{2}\right)  \right) \\
&  \quad\leq c\bigg(\left\Vert D(\nabla)u;L^{2}(\mathbb{Q}_{h})\right\Vert
^{2}+h^{2}\Big\Vert s_{h}^{-1}S_{h1}\frac{\partial u^{\prime}}{\partial
z};L^{2}(\mathbb{Q}_{h}^{\prime})\Big\Vert^{2}+h^{2}\left\Vert s_{h}%
^{-1}S_{h1}\nabla_{y}u_{3};L^{2}(\mathbb{Q}_{h}^{\prime})\right\Vert
^{2}\bigg).
\end{split}
\label{2.36}%
\end{equation}
By the way, one may avoid to present (\ref{2.36}) down because the
$L^{2}\left(  \Omega_{h}\right)  $-norms of $\partial u_{1}/\partial y_{2}$
and $\partial u_{2}/\partial y_{1}$ had been estimated in (\ref{2.13}).
However, we observe that the derivatives $\partial u_{i}/\partial
y_{i}=\varepsilon_{ii}(u)$ and $\partial_{z}u_{3}=\varepsilon_{33}(u)$ figure
in the stress column (\ref{1.5})--(\ref{1.00}) and collect estimates obtained
above to arrive at the equality%
\[
|||u;\mathbb{Q}_{h}|||_{\bullet}^{2}\leq c\left(  \left\Vert D\left(
\nabla\right)  u;L^{2}(\mathbb{Q}_{h})\right\Vert ^{2}+|||u;\mathbb{Q}%
_{h}^{\prime}|||_{\bullet}^{2}\right)  .
\]
Summing these inequalities up over all cubes which are erected from cells of
the quadratic net of size $h$ in the plane $\mathbb{R}^{2}$ and have nonempty
intersection with the plate $\Omega_{h}$, yields the estimate%
\[
|||u;\Omega_{h}|||_{\bullet}^{2}\leq c\left(  \left\Vert D\left(
\nabla\right)  u;L^{2}\left(  \Omega_{h}\right)  \right\Vert ^{2}%
+|||u;\Omega_{h/2}|||_{\bullet}^{2}\right)  .
\]
We combine it with (\ref{2.000}) getting the result.
\end{proof}

\begin{remark}
As mentioned above, the optimality of distribution of the weights $h$ and
$s_{h}(y)$ in norms (\ref{2.2}) and (\ref{2.3}) is quite known. However,
relation (\ref{2.9}) shows that the norms $\left\Vert u_{j};L^{2}\left(
\mathbf{Q}_{hR}\right)  \right\Vert $ in the small cylinder $\mathbf{Q}%
_{hR}=\mathbb{B}_{hR}^{2}\times\left(  -h/2,h/2\right)  $ can be endowed with
the big factor $h^{-1}$ but weights in norm (\ref{2.7}) give the following
estimate only:%
\[
\left\vert \ln h\right\vert ^{-1}h^{-1}\left\Vert u_{j};L^{2}\left(
\mathbf{Q}_{hR}\right)  \right\Vert \leq c|||u;\mathbb{Q}_{h}|||_{\bullet}.
\]
In other words, it is worth to confirm impossibility of the change
$S_{hq}(y)\mapsto\left(  h^{2}+\left\vert y\right\vert ^{2}\right)  ^{-q/2}$
in (\ref{2.7}) with the simultaneous preservation of estimate (\ref{2.6}).

Let $\psi_{i}$ be smooth nontrivial functions such that $\psi_{i}\left(
t\right)  =0$ for $t\notin\left(  1/2,1\right)  $. We set $u_{i}\left(
x\right)  =\psi_{i}\left(  \left\vert \ln r\right\vert /\left\vert \ln
h\right\vert \right)  ,\ i=1,2,\ u_{3}\left(  x\right)  =0$ and obtain%
\[
\varepsilon_{il}\left(  u;x\right)  =\frac{1}{r}\frac{1}{\left\vert \ln
h\right\vert }\Psi_{il}\left(  \frac{\left\vert \ln r\right\vert }{\left\vert
\ln h\right\vert }\right)  ,\ \ i,l=1,2,\ \ \ \varepsilon_{13}\left(
u\right)  =\varepsilon_{23}\left(  u\right)  =\varepsilon_{33}\left(
u\right)  =0,
\]
where again $\Psi_{il}\left(  t\right)  =0$ for $t\notin(1/2,1)$, that is
$\Psi_{il}\left(  \left\vert \ln r\right\vert /\left\vert \ln h\right\vert
\right)  =0$ for\ $r\notin(h,\sqrt{h})$. We thus have%
\begin{equation}
\left\Vert D\left(  \nabla\right)  u;L^{2}\left(  \Omega_{h}\right)
\right\Vert ^{2}\leq\frac{c_{\Psi}h}{|\ln h|^{2}}\int_{h}^{\sqrt{h}}\frac
{rdr}{r^{2}}=\frac{c_{\Psi}h}{2|\ln h|}. \label{2.001}%
\end{equation}
At the same time,
\[%
\begin{split}
&  \left\Vert (h^{2}+|y|^{2})^{-1/2}u_{i};L^{2}\left(  \Omega_{h}\right)
\right\Vert ^{2}=2\pi h\int_{h}^{\sqrt{h}}\left\vert \psi_{i}\left(
\frac{|\ln r|}{|\ln h}\right)  \right\vert ^{2}\frac{rdr}{h^{2}+r^{2}}\\
&  \qquad\geq2\pi h|\ln h|\int_{1/2}^{1}\left\vert \psi_{i}\left(  \frac{|\ln
r|}{|\ln h|}\right)  \right\vert ^{2}d\frac{|\ln r|}{|\ln h|}=2\pi h|\ln
h|\left\Vert \psi_{i};L^{2}\left(  \tfrac{1}{2},1\right)  \right\Vert ^{2},\\
&  \left\Vert (h^{2}+\left\vert y\right\vert ^{2})^{-1/2}(1+|\ln
(h^{2}+\left\vert y\right\vert ^{2})|)^{-1}u_{i};L^{2}\left(  \Omega
_{h}\right)  \right\Vert ^{2}\\
&  \qquad=2\pi h\int_{h}^{\sqrt{h}}\left\vert \psi_{i}\left(  \frac{\left\vert
\ln r\right\vert }{\left\vert \ln h\right\vert }\right)  \right\vert
^{2}\left(  1+\left\vert \ln(h^{2}+\left\vert y\right\vert ^{2})\right\vert
\right)  ^{-2}\frac{rdr}{h^{2}+r^{2}}\leq\frac{c_{\Psi}h}{|\ln h|}.
\end{split}
\]
Glancing over these formulas convinces that logarithms cannot be eliminated in
the weighted norm (\ref{2.7}).
\end{remark}

\subsection{Traction-free edge of the plate with several support
areas\label{sect1.6}}

The approach applied above and described at length in the review paper
\cite{na398} helps to derive asymptotically exact weighted anisotropic
inequalities of Korn's type without requiring the lateral side $\upsilon_{h}$
of the plate to be clamped. Let us outline derivation of such inequalities.

To determine small clamped area on the lower base $\Sigma_{h}^{-}$, we fix
some points $y^{1},\dots,y^{J}$ inside $\omega$, $y^{j}\ne y^{k}$ for $j\ne
k$, and put
\begin{equation}
\Theta_{h}=\vartheta_{h}^{1}\cup\dots\cup\vartheta_{h}^{J},\ \ \ \vartheta
_{h}^{j}=\left\{  x:r_{j}:=\left\vert y-y^{j}\right\vert <Rh,\ z=-h/2\right\}
. \label{2.37}%
\end{equation}
Real supporting sets, of course, may be bigger, for instance $\theta_{h}%
^{j}\supset\vartheta_{h}^{j}$ like in (\ref{1.3}), but the Dirichlet condition%
\begin{equation}
u\left(  x\right)  =0,\ \ x\in\Theta_{h}, \label{2.38}%
\end{equation}
is sufficient for our purpose.

Since the resultant inequality is sensitive to the number of supporting sets
$\theta_{h}^{j}$, we focus on the case%
\begin{equation}
J\geq2. \label{2.99}%
\end{equation}
Notice that the Korn inequality remains the same for $J=2$ and the most
realistic case $J=3$. This and the peculiar case $J=1$ will be commented in
Remark \ref{remJ1}.

\begin{theorem}
\label{th2.3}If $J\geq2$ in (\ref{2.37}), then any displacement field $u\in
H_{0}^{1}\left(  \Omega_{h};\Theta_{h}\right)  ^{3}$ satisfies the weighted
anisotropic Korn inequality
\begin{equation}
|||u;\Omega_{h}|||_{\odot}\leq K_{\odot}\left(  \omega\right)  \left(
1+\left\vert \ln h\right\vert \right)  \left\Vert D\left(  \nabla\right)
u;L^{2}\left(  \Omega_{h}\right)  \right\Vert \label{2.55}%
\end{equation}
with a constant $K_{\odot}\left(  \omega\right)  $ independent of $h\in\left(
0,h_{0}\right]  $ and the weighted Sobolev norm%
\[%
\begin{split}
|||u;\Omega_{h}|||_{\odot}^{2}  &  =\int_{\Omega_{h}}\bigg[\sum_{i=1}%
^{2}\bigg(\left\vert \nabla_{y}u_{i}\right\vert ^{2}+h^{2}\mathbf{S}_{h1}%
^{2}\bigg(\left\vert \frac{\partial u_{i}}{\partial z}\right\vert
^{2}+\left\vert \frac{\partial u_{3}}{\partial y_{i}}\right\vert
^{2}\bigg)+\mathbf{S}_{h1}^{2}\left\vert u_{i}\right\vert ^{2}\bigg)\left\vert
\partial_{z}u_{3}\right\vert ^{2}\\
&  \qquad\qquad+h^{2}\mathbf{S}_{h2}^{2}\left\vert u_{3}\right\vert
^{2}\bigg]dx,
\end{split}
\]
where
\begin{equation}
\mathbf{S}_{hq}\left(  y\right)  =\max\left\{  S_{hq}\left(  y-y^{i}\right)
:j=1,\dots,J\right\}  \label{2.54}%
\end{equation}
and $S_{hq}$ are given in (\ref{2.5}).
\end{theorem}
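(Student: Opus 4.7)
My plan is to follow the same architecture as the proof of Theorem \ref{th2.1}, namely: (i) obtain a localized Korn estimate in the rescaled cylinder over each clamped disk $\vartheta_h^j$, (ii) reduce to a displacement vanishing near all supports by a cut-off multiplication, and (iii) propagate the information to the rest of the plate via the cube-by-cube rigid-motion decomposition (\ref{2.24})--(\ref{2.35}). The qualitative novelty relative to Theorem \ref{th2.1} is the absence of Dirichlet data on $\upsilon_h$: the two global tools used there outside the cube argument, namely the $\varepsilon_{12}$-integration by parts giving (\ref{2.13}) and the Friedrichs/Hardy chain (\ref{2.14})--(\ref{2.18}), are no longer available. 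The extra factor $1+|\ln h|$ in (\ref{2.55}) is the price for having to propagate the vanishing condition \emph{outward} from the small supports $\vartheta_h^j$ rather than \emph{inward} from a lateral clamp.

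Concretely, I would first work in stretched coordinates $\xi_j=h^{-1}(y-y^j,z)$ on each $\mathbf{Q}_{hR}^{j}=\mathbb{B}_{hR}^{2}(y^j)\times(-h/2,h/2)$. The datum (\ref{2.38}) on $\vartheta_h^j$ becomes a homogeneous Dirichlet condition on a fixed disk of the lower base of $\mathbf{Q}_{R}$, so the standard Korn inequality on $\mathbf{Q}_R$ applies and yields a localized estimate modeled on (\ref{2.9}) around each $y^j$. The cut-off $X_h(y)=\prod_{j=1}^{J}\bigl(1-\chi((hR)^{-1}|y-y^j|)\bigr)$ then reduces us, as in (\ref{2.12}), to a displacement vanishing identically in each $\mathbb{B}_{hR/2}^{2}(y^j)\times(-h/2,h/2)$; from this point on only a multi-centered analogue of (\ref{2.0}) is used.

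For the bulk of the plate I would replay the cube argument from Theorem \ref{th2.1}: tile $\omega$ by squares of side $h$, erect cubes $\mathbb{Q}_h$, and on each decompose $u=u^{\perp}+d(\xi)\mathcal{U}^{0}$ via (\ref{2.25})--(\ref{2.27}). The estimate (\ref{2.29}) of $u^{\perp}$ is unchanged, and the rigid-motion components $\mathcal{U}^{0}_{i}$ are bounded by the weighted norm on the twin subcube $\mathbb{Q}_h^{\prime}$ together with $\|D(\nabla)u;L^{2}(\mathbb{Q}_h)\|$ exactly as in (\ref{2.30})--(\ref{2.36}). Summing over cubes produces an inequality of the form $|||u;\mathbb{Q}_h|||_{\odot}^{2}\le c(\|D(\nabla)u;L^{2}(\mathbb{Q}_h)\|^{2}+|||u;\mathbb{Q}_h^{\prime}|||_{\odot}^{2})$ which, after iteration, propagates the vanishing near the supports into a global estimate on $\Omega_h$. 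The hypothesis $J\ge2$ enters precisely here: two distinct centers $y^j\ne y^k$ provide a lever arm of order $1$ needed to pin the rotational parameters $\mathcal{U}^{0}_{4},\mathcal{U}^{0}_{5},\mathcal{U}^{0}_{6}$ on the scale of $\omega$, rather than on the microscopic scale $h$ afforded by any single disk of diameter $h$.

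The main obstacle, and the point at which the logarithmic factor emerges, is making this propagation quantitative. Near each $y^j$ the weight $\mathbf{S}_{hq}$ is only of size $h^{-q}(1+|\ln h|)^{-1}$, and the chain of adjacent cubes joining a support to a generic point of $\omega$ accumulates, via the successive rigid-motion corrections, precisely the Hardy-type loss in the radial variable that was computed in (\ref{2.001}) of the preceding remark: one power of $|\ln h|^{1/2}$ from each pair of supports, giving the overall factor $1+|\ln h|$ in (\ref{2.55}). Once this propagation is controlled, the rest goes as at the end of the proof of Theorem \ref{th2.1}, by collecting the cube-wise estimates and combining them with the localized bounds of the first step. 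The special case $J=1$ excluded by (\ref{2.99}) fails because a single disk of diameter $h$ does not pin a rotation about the axis through its center on the macroscopic scale; it will be addressed separately in Remark \ref{remJ1}.
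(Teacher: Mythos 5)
Your first and last steps do match the paper: the localization in the rescaled cylinders $\mathbf{Q}_{hR}^{j}$ with the cut-off reducing to the multi-centered analogue (\ref{2.39}) of (\ref{2.0}), and the reinsertion of the weights $\mathbf{S}_{hq}$ at the very end by repeating the cube calculations of Section \ref{sect2.2} with $s_{h}=1$. You also correctly identify that $J\geq2$ is what pins the rotational rigid-motion parameters with an order-one lever arm. But the central mechanism of your proposal --- propagating the vanishing condition outward from the supports through a chain of adjacent $h$-cubes --- is not what the paper does, and I do not believe it can be made to work.

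The difficulty is twofold. First, in the proof of Theorem \ref{th2.1} the cube inequality $|||u;\mathbb{Q}_{h}|||_{\bullet}^{2}\leq c\left(\left\Vert D(\nabla)u;L^{2}(\mathbb{Q}_{h})\right\Vert^{2}+|||u;\mathbb{Q}_{h}^{\prime}|||_{\bullet}^{2}\right)$ transfers information only \emph{vertically}, from the half-thickness subcube $\mathbb{Q}_{h}^{\prime}=\{\xi\in\mathbb{Q}_{1}:|\zeta|<1/4\}$ to the full cube; summing it over the tiling gives $|||u;\Omega_{h}|||_{\bullet}^{2}\leq c\left(\left\Vert D(\nabla)u\right\Vert^{2}+|||u;\Omega_{h/2}|||_{\bullet}^{2}\right)$, which is useful only because the estimate (\ref{2.000}) on the thinner plate $\Omega_{h/2}$ was \emph{already} established globally by the $\varepsilon_{12}$ integration by parts, the Friedrichs inequality and the Hardy chain --- all of which rest on the lateral clamping (\ref{1.14}). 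With a traction-free edge you have no such input, so the iteration never starts. Second, a genuinely lateral chain of cubes from a support of diameter $h$ to a generic point of $\omega$ has $O(h^{-1})$ links (or $O(|\ln h|)$ dyadic links), and the rigid-motion comparison on each link carries a constant $c>1$; the accumulated loss is geometric, i.e.\ a power of $h^{-1}$, not the factor $1+|\ln h|$ claimed in (\ref{2.55}). Your heuristic "one power of $|\ln h|^{1/2}$ per pair of supports" does not correspond to any estimate.

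What the paper does instead is a single \emph{global} rigid-motion subtraction. One first applies the Shoikhet rescaling (\ref{2.40}), $\mathcal{U}_{i}=u_{i}$, $\mathcal{U}_{3}=hu_{3}$, whose point is the energy comparison (\ref{2.41}) on the vertically inflated, $h$-independent domain $\Omega_{1}=\omega\times(-1/2,1/2)$. One then writes $\mathcal{U}=\mathcal{U}^{\bot}+d\,\mathcal{U}^{0}$ with orthogonality over all of $\Omega_{1}$, so that the ordinary Korn inequality on the fixed domain $\Omega_{1}$ combined with the Hardy inequality (\ref{2.16}) controls $r_{j}^{-1}(1+|\ln r_{j}|)^{-1}\mathcal{U}^{\bot}$ in $L^{2}(\Omega_{1})$, see (\ref{2.43}). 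The vanishing of $\mathcal{U}$ on each tiny cylinder $\mathbf{Q}_{hR/2}^{j}$ then yields the $J$ linear equations (\ref{2.44}) for the six-component column $\mathcal{U}^{0}$; summing them gives the system (\ref{2.46}) with matrix $\mathcal{M}(h)$, and the hypothesis $J\geq2$ enters exactly once, in proving the invertibility bound $\Vert\mathcal{M}(h)^{-1}\Vert\leq ch^{-2}$ via (\ref{2.50}), (\ref{2.51}). The single factor $1+|\ln h|$ in (\ref{2.55}) arises from one application of the Cauchy--Schwarz inequality in (\ref{2.45}): the reciprocal Hardy weight $r_{j}(1+|\ln r_{j}|)$ evaluated at $r_{j}\sim h$ contributes $h(1+|\ln h|)$, which together with $(\operatorname*{mes}_{3}\mathbf{Q}_{hR/2}^{j})^{1/2}\sim h$ and the $h^{-2}$ from $\mathcal{M}(h)^{-1}$ gives $|\mathcal{U}^{0}|\leq c(1+|\ln h|)\Vert D\mathcal{U}\Vert$. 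Without this global decomposition and the algebraic system over the supports, your argument has a genuine gap at its core.
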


\begin{proof}
We take a smooth displacement field $u$ satisfying (\ref{2.38}) and by the
same means as in Section \ref{sect2.2}, see, e.g., (\ref{2.9}), impose the
subsidiary conditions, cf. (\ref{2.0}),%
\begin{equation}
u(y,z)=0\text{ \ \ for \ }|y-y^{j}|<hR/2,\ \ \ j=1,\dots,J. \label{2.39}%
\end{equation}
First of all, we employ an elegant device from \cite{Shoikhet} and define the
vector function $\mathcal{U}$ with components%
\begin{equation}
\mathcal{U}_{i}\left(  y,\zeta\right)  =u_{i}\left(  y,h\zeta\right)
,\ \ i=1,2,\ \ \ \mathcal{U}_{3}\left(  y,\zeta\right)  =hu_{3}\left(
y,h\zeta\right)  \label{2.40}%
\end{equation}
in the vertically inflated plate $\Omega_{1}=\left\{  \left(  y,\zeta\right)
:y\in\omega,\ \left\vert \zeta\right\vert <1/2\right\}  $. The crucial
property of (\ref{2.40}) is expressed by the relation
\begin{equation}%
\begin{split}
&  \qquad\qquad\left\Vert D(\nabla)u;L^{2}\left(  \Omega_{h}\right)
\right\Vert ^{2}\\
&  =\int_{\Omega_{h}}\bigg[\sum_{i=1}^{2}\bigg(\left\vert \frac{\partial
u_{i}}{\partial y_{i}}\right\vert ^{2}+\frac{1}{2}\left\vert \frac{\partial
u_{i}}{\partial z}+\frac{\partial u_{3}}{\partial y_{i}}\right\vert
^{2}\bigg)+\frac{1}{2}\left\vert \frac{\partial u_{1}}{\partial y_{2}}%
+\frac{\partial u_{2}}{\partial y_{1}}\right\vert ^{2}+\left\vert
\frac{\partial u_{3}}{\partial z}\right\vert ^{2}\bigg]dydz\\
&  =h\int_{\Omega_{1}}\bigg[\sum_{i=1}^{2}\bigg(\left\vert \frac
{\partial\mathcal{U}_{i}}{\partial y_{i}}\right\vert ^{2}+\frac{1}{2}%
h^{-2}\left\vert \frac{\partial\mathcal{U}_{i}}{\partial\zeta}+\frac
{\partial\mathcal{U}_{3}}{\partial y_{i}}\right\vert ^{2}\bigg)+\frac{1}%
{2}\left\vert \frac{\partial\mathcal{U}_{1}}{\partial y_{2}}+\frac
{\partial\mathcal{U}_{2}}{\partial y_{1}}\right\vert ^{2}+h^{-4}\left\vert
\frac{\partial\mathcal{U}_{3}}{\partial z}\right\vert ^{2}\bigg]dyd\zeta\\
&  \geq h\int_{\Omega_{1}}\bigg[\sum_{i=1}^{2}\bigg(\left\vert \frac
{\partial\mathcal{U}_{i}}{\partial y_{i}}\right\vert ^{2}+\frac{1}%
{2}\left\vert \frac{\partial\mathcal{U}_{i}}{\partial\zeta}+\frac
{\partial\mathcal{U}_{3}}{\partial y_{i}}\right\vert ^{2}\bigg)+\frac{1}%
{2}\left\vert \frac{\partial\mathcal{U}_{1}}{\partial y_{2}}+\frac
{\partial\mathcal{U}_{2}}{\partial y_{1}}\right\vert ^{2}+\left\vert
\frac{\partial\mathcal{U}_{3}}{\partial z}\right\vert ^{2}\bigg]dyd\zeta\\
&  =h\left\Vert D\left(  \nabla_{y},\partial_{\zeta}\right)  \mathcal{U}%
;L^{2}\left(  \Omega_{1}\right)  \right\Vert ^{2}.
\end{split}
\label{2.41}%
\end{equation}
The term $\mathcal{U}^{\bot}$ in the decomposition, cf. (\ref{2.25}),%
\begin{equation}
\mathcal{U}\left(  y,\zeta\right)  =\mathcal{U}^{\bot}\left(  y,\zeta\right)
+d\left(  y,\zeta\right)  \mathcal{U}^{0} \label{2.42}%
\end{equation}
with the column%
\[
\mathcal{U}^{0}=\mathbf{d}\left(  \Omega_{1}\right)  ^{-1}\int_{\Omega_{1}%
}d\left(  y,\zeta\right)  ^{\top}\mathcal{U}\left(  y,\zeta\right)
dyd\zeta\in\mathbb{R}^{6}%
\]
meets the orthogonality conditions (\ref{2.28}) under the replacement
$\mathbb{Q}_{1}^{\prime}\mapsto\Omega_{1}$ and, therefore, using the Hardy and
Korn inequalities (\ref{2.16}) and (\ref{2.29}) yields
\begin{equation}%
\begin{split}
\left\Vert r_{j}^{-1}\left(  1+\left\vert \ln r_{j}\right\vert \right)
^{-1}\mathcal{U}^{\bot};L^{2}\left(  \Omega_{1}\right)  \right\Vert ^{2}  &
\leq c\left\Vert \mathcal{U}^{\bot};H^{1}\left(  \Omega_{1}\right)
\right\Vert ^{2}\\
&  \leq c\left\Vert D\left(  \nabla_{y},\partial_{\zeta}\right)
\mathcal{U}^{\bot};L^{2}\left(  \Omega_{1}\right)  \right\Vert ^{2}\\
&  =c\left\Vert D\left(  \nabla_{y},\partial_{\zeta}\right)  u_{j}%
;L^{2}\left(  \Omega_{1}\right)  \right\Vert ^{2}.
\end{split}
\label{2.43}%
\end{equation}
Now in view of (\ref{2.42}) and (\ref{2.39}), we write%
\begin{equation}
\mathbf{d}\big(\mathbf{Q}_{hR/2}^{j}\big)\mathcal{U}^{0}=-\int_{\mathbf{Q}%
_{hR/2}^{j}}d\left(  y,\zeta\right)  ^{\top}\mathcal{U}^{\bot}\left(
y,\zeta\right)  dyd\zeta=:\mathcal{F}^{j}\in\mathbb{R}^{6} \label{2.44}%
\end{equation}
where $\mathbf{Q}_{hR/2}^{j}=\left\{  \left(  y,\zeta\right)  :\left\vert
y-y^{j}\right\vert <hR/2,\ \left\vert \zeta\right\vert <1/2\right\}  $ is a
circular cylinder and, according to (\ref{2.43}), the right-hand side admits
the estimate%
\begin{equation}%
\begin{split}
\left\vert \mathcal{F}^{j}\right\vert  &  \leq c(\operatorname*{mes}%
\nolimits_{3}\mathbf{Q}_{hR/2}^{j})^{1/2}h\left(  1+\left\vert \ln
h\right\vert \right)  \left\Vert r_{j}^{-1}\left(  1+\left\vert \ln
r_{j}\right\vert \right)  ^{-1}U^{\bot};L^{2}\big(\mathbf{Q}_{hR/2}%
^{j}\big)\right\Vert ^{2}\\
&  \leq ch^{2}\left(  1+|\ln h|\right)  \left\Vert D\left(  \nabla
_{y},\partial_{z}\right)  U;L^{2}\left(  \Omega_{1}\right)  \right\Vert ^{2}%
\end{split}
\label{2.45}%
\end{equation}
Following \cite{na212}, see also \cite[\S 3.4]{na398}, we sum up equations
(\ref{2.44}), $j=1,\dots,J$, and obtain the linear algebraic system%
\begin{equation}
\mathcal{M}(h)U^{0}=\mathcal{F}:=\mathcal{F}^{1}+\dots+\mathcal{F}^{J}%
\in\mathbb{R}^{6} \label{2.46}%
\end{equation}
with the $6\times6$-matrix%
\begin{equation}
\mathcal{M}\left(  h\right)  =\mathbf{d}\big(\mathbf{Q}_{hR/2}^{1}%
\big)+\dots+\mathbf{d}\big(\mathbf{Q}_{hR/2}^{J}\big). \label{2.47}%
\end{equation}

Each of summands in (\ref{2.47}) is a Gram matrix, symmetric and positive
definite. Moreover, by virtue of (\ref{2.24}) and (\ref{2.27}), a simple
calculation shows that%
\begin{equation}
\mathbf{d}\big(\mathbf{Q}_{hR/2}^{j}\big)=\frac{\pi}{4}h^{2}R^{2}\left(
d\left(  y^{j},0\right)  ^{\top}d\left(  y^{j},0\right)  +\frac{1}%
{12}\mathsf{t}^{\top}\mathsf{t}+O\left(  h\right)  \right)  \label{2.48}%
\end{equation}
where $O\left(  h\right)  $ stands for a $6\times6$-matrix of size $6\times6$
\ with the natural norm of order $h$ and%
\begin{equation}
\mathsf{t}=\left(
\begin{array}
[c]{cccccc}%
0 & 0 & 0 & 1 & 0 & 0\\
0 & 0 & 0 & 0 & 1 & 0\\
0 & 0 & 0 & 0 & 0 & 0
\end{array}
\right)  ,\ \ \ \dfrac{1}{12}=\int_{-1/2}^{1/2}\zeta^{2}d\zeta,\ \ \ \frac
{\pi}{4}h^{2}R^{2}=\operatorname*{mes}\nolimits_{3}\mathbf{Q}_{hR/2}^{j}.
\label{2.488}%
\end{equation}
We observe that under requirement (\ref{2.99}) the inverse of matrix
(\ref{2.47}) satisfies the estimate%
\begin{equation}
\left\Vert \mathcal{M}\left(  h\right)  ^{-1};\mathbb{R}^{6\times6}\right\Vert
\leq ch^{-2}. \label{2.49}%
\end{equation}

In fact, the $6\times6$-matrix $M^{j}=d\left(  y^{j},0\right)  ^{\top}d\left(
y^{j},0\right)  +\tfrac{1}{12}\mathsf{t}^{\top}\mathsf{t}$ on the right of
(\ref{2.48}), is symmetric but only positive. However, in view of
(\ref{2.47}), estimate (\ref{2.49}) is a direct consequence of the following
fact:%
\begin{equation}
b\in\mathbb{R}^{6}\text{ \ \ and \ \ }b^{\top}M^{j}b=0,\ j=1,\dots
,J\ \ \Longrightarrow\ \ b=0. \label{2.50}%
\end{equation}
Moreover, the premise in (\ref{2.50}) is equivalent to
\begin{equation}
\mathsf{t}b=0\text{ \ and \ }d\left(  y^{j},0\right)  b=0,\ \ \ j=1,\dots,J.
\label{2.51}%
\end{equation}
We put the coordinate origin $y=0$ at the point $y^{1}$ and direct the $y_{1}%
$-axis through $y^{2}$ so that $y_{1}^{2}>0$ and $y_{2}^{2}=0$. Owing to
(\ref{2.488}) and (\ref{2.24}) the equalities $\mathsf{t}b=0$ and $d\left(
y^{1},0\right)  b=0$ in (\ref{2.51}) assure that $b_{4}=b_{5}=0$ and
$b_{1}=b_{2}=b_{3}=0$. Furthermore,%
\[
d\left(  y^{2},0\right)  -d\left(  y^{1},0\right)  =\left(
\begin{array}
[c]{cccccc}%
0 & 0 & 0 & 0 & 0 & 0\\
0 & 0 & 0 & 0 & 0 & -y_{1}^{2}\\
0 & 0 & 0 & 0 & y_{2}^{2} & 0
\end{array}
\right)
\]
and, therefore, $b_{6}=0$ due to (\ref{2.51}) with $j=1$ and $j=2$. So
(\ref{2.49}) is proved.

\bigskip

From (\ref{2.46}) together with (\ref{2.49}) and (\ref{2.45}) we derive that
\begin{equation}
\left\vert \mathcal{U}^{0}\right\vert \leq c\left(  1+\left\vert \ln
h\right\vert \right)  \left\Vert D\left(  \nabla_{y},\partial_{z}\right)
\mathcal{U};L^{2}\left(  \Omega_{1}\right)  \right\Vert . \label{2.52}%
\end{equation}
Thus, representation (\ref{2.42}), estimates (\ref{2.43}), (\ref{2.52}) and
relations (\ref{2.41}), (\ref{2.40}) between $\mathcal{U}$\ and $u$ give the
following Korn inequality%
\begin{equation}
|||u;\Omega_{h}|||^{2}\leq c\left(  1+\left\vert \ln h\right\vert \right)
^{2}\left\Vert D\left(  \nabla\right)  u;L^{2}\left(  \Omega_{h}\right)
\right\Vert ^{2}, \label{2.53}%
\end{equation}
with the only exception: after returning to $x$ and $u$ the right-hand side of
(\ref{2.41}) gains the integral $h^{4}\int\left\vert \partial_{z}%
u_{3}\right\vert ^{2}dx$ instead of $\int\left\vert \partial_{z}%
u_{3}\right\vert ^{2}dx$ as in (\ref{2.2}). This gap is filled easily because,
according to (\ref{1.5}), (\ref{1.7}), the expression $\left\Vert D\left(
\nabla\right)  u;L^{2}\left(  \Omega_{h}\right)  \right\Vert ^{2}$ on the
right of (\ref{2.53}) includes $\int\left\vert \varepsilon_{33}\left(
u\right)  \right\vert ^{2}dx=\int\left\vert \partial_{z}u_{3}\right\vert
^{2}dx$.

A repetition of calculations in Section \ref{sect2.2} (word by word but with
$s_{h}=1$, the basic relations (\ref{2.19}), (\ref{2.35}) and (\ref{2.39})
being preserved) allows us to introduce the weights $\mathbf{S}_{hq}$ given in
(\ref{2.54}) into the norm. Note that the real reason to put $s_{h}=1$ is the
absence of the Dirichlet condition (\ref{1.14}) on the lateral side
$\upsilon_{h}$ and the consequent impossibility to apply the Hardy
inequalities (\ref{2.15}), (\ref{2.22}).

A completion argument again completes the proof.
\end{proof}

\begin{remark}
The Dirichlet condition (\ref{2.38}) at small support zones, see (\ref{2.37})
and (\ref{2.99}), cannot maintain the Korn inequality without the factor
$1+\left\vert \ln h\right\vert $ as in (\ref{2.55}). To corroborate this
statement, we use the test function $u_{i}\left(  x\right)  =%
{\textstyle\prod\nolimits_{j=1}^{J}}
\chi\left(  \left\vert \ln\left(  r_{j}/R\right)  \right\vert /\left\vert \ln
h\right\vert \right)  $ where $\chi$ is taken from (\ref{2.11}) so that
$u_{i}\left(  x\right)  =1$ if all $r_{j}=\left\vert y-y_{j}\right\vert
>\sqrt{h}R$, and $u_{i}\left(  x\right)  =0$ if some $r_{j}<hR$. Clearly, for
$u=e_{i}u_{i}$, we obtain%
\[
\left\Vert u;L^{2}\left(  \Omega_{h}\right)  \right\Vert ^{2}=h^{1/2}%
(\left\vert \omega\right\vert ^{1/2}+O\left(  h\right)  ),\ \ \ C\geq
\left\Vert \mathbf{S}_{h1}u^{\prime};L^{2}\left(  \Omega_{h}\right)
\right\Vert ^{2}\geq c>0,
\]
and, similarly to (\ref{2.001}),%
\[
\left\Vert D\left(  \nabla\right)  u;L^{2}\left(  \Omega_{h}\right)
\right\Vert ^{2}\leq\frac{c_{\chi}}{R^{2}\left\vert \ln h\right\vert ^{2}}%
\int_{h}^{\sqrt{h}}\frac{rdr}{r^{2}}\leq\frac{c}{\left\vert \ln h\right\vert
}.
\]
The desired inference follows.
\end{remark}

\begin{remark}
\label{remJ1}If $J=1$ and $y^{1}=0$, the matrix $\mathcal{M}\left(  h\right)
=\mathbf{d}(\mathbf{Q}_{hR/2}^{1})$ in (\ref{2.47}) becomes%
\[
\pi^{2}h^{2}R^{2}\operatorname*{diag}\left\{  1,1,1,\tfrac{1}{4}\left(
\tfrac{1}{4}+h^{2}R^{2}\right)  ,\tfrac{1}{4}\left(  \tfrac{1}{4}+h^{2}%
R^{2}\right)  ,\tfrac{1}{2}h^{2}R^{2}\right\}
\]
and gets the right-hand bottom entry $O\left(  h^{4}\right)  $. In this way
(\ref{2.49}) loses validity and estimate (\ref{2.52}) alters crucially for
$U_{6}^{0}$. Thus, the resultant inequality requires a serious modification,
too (cf. \cite{na331} and \cite[\S 3.4 and \S 5.2]{na398}). We mention the
displacement field $u\left(  x\right)  =\left(  1-\chi\left(  \tfrac{r}%
{2hR}\right)  \right)  \left(  -y_{2},y_{1},0\right)  ^{\top}$ which satisfies
the relations%
\[
h^{-1/2}\left\Vert u;L^{2}\left(  \Omega_{h}\right)  \right\Vert \geq
c>0,\ \ \ h^{-3/2}\left\Vert D\left(  \nabla\right)  u;L^{2}\left(  \Omega
_{h}\right)  \right\Vert \leq C
\]
and indicates a power-law growth of the Korn constant as $h\rightarrow+0$.
\end{remark}

\section{The convergence theorem and the three-dimensional boundary
layer\label{sect3}}

\subsection{The Kirchhoff model with the Sobolev point
condition\label{sect3.1}}

In \cite{ButCaNa2} we will present a detailed procedure of dimension reduction
with turns the elasticity problem (\ref{1.11})--(\ref{1.14}) into the
two-dimensional Kirchhoff model of an anisotropic plate (\ref{1.1}) clamped
over the lateral side $\upsilon_{h}$ and the small area $\theta_{h}$, cf.
(\ref{1.13}) and (\ref{1.14}). We however start with a classical and simple
result on convergence which can be achieved by any of known methods, cf.
monographs \cite{Ciarlet1, Ciarlet2, LeDret, Nabook, SaPa} and other literature.

We assume the following representation\footnote{In \cite{ButCaNa2} this
assumption will be weakned quite much.} for the right-hand side in
(\ref{1.11}) :%
\begin{equation}
f_{i}\left(  h,y,z\right)  =h^{-1/2}g_{i}(y),\ i=1,2,\ \ \ \ f_{3}\left(
h,y,z\right)  =h^{1/2}g_{3}(y) \label{A4}%
\end{equation}
with $g=(g_{1},g_{2},g_{3})\in L^{2}\left(  \omega\right)  ^{3}.$

If the rigidity matrix $A$ has the form (\ref{1.10}) in the isotropic Hooke's
law (\ref{1.9}), the average longitudinal displacements $w^{\prime}=\left(
w_{1},w_{2}\right)  ^{\top}$ is solution of the two-dimensional elasticity
system%
\begin{equation}
-\mu\Delta_{y}w^{\prime}(y)-(\lambda^{\prime}+\mu)\nabla_{y}\nabla_{y}^{\top
}w^{\prime}(y)=g^{\prime}(y),\qquad y\in\omega, \label{A1}%
\end{equation}
with $g^{\prime}=\left(  g_{1},g_{2}\right)  ^{\top}$ and the deflexion
$w_{3}$ is solution of the bi-harmonic equation%
\begin{equation}
\frac{\mu}{3}\frac{\lambda+\mu}{\lambda+2\mu}\Delta_{y}^{2}w_{3}\left(
y\right)  =g_{3}\left(  y\right)  ,\qquad y\in\omega. \label{A2}%
\end{equation}
Here, $\nabla_{y}=(\partial/\partial y_{1},\partial/\partial y_{2})^{\top}$,
$\Delta_{y}=\nabla_{y}^{\top}\nabla_{y}$ is the Laplace operator in the
$y$-variables, and the coefficient%
\begin{equation}
\lambda^{\prime}=\frac{2\lambda\mu}{\lambda+2\mu} \label{A3}%
\end{equation}
is computed through the Lam\'{e} constants $\lambda\geq0$ and $\mu\geq0$.

The condition (\ref{1.13}) on the lateral side $\upsilon_{h}$, see
(\ref{1.2}), requires for the Dirichlet condition on the contour
$\partial\omega$%
\begin{equation}
w_{i}(y)=0,\ i=1,2,\qquad w_{3}(y)=0,\qquad\partial_{n}w_{3}(y)=0,\qquad
y\in\partial\omega, \label{A5}%
\end{equation}
where $\partial_{n}=n^{\top}\nabla_{y}$ and $n=(n_{1},n_{2})^{\top}$ is the
unit vector of the outward normal at $\partial\omega$. Finally, the support
area $\theta_{h}$, see (\ref{1.3}) and (\ref{1.13}), is reflected by the
Sobolev point condition%
\begin{equation}
w_{3}\left(  \mathcal{O}\right)  =0. \label{A6}%
\end{equation}
It is well known, see, e.g., \cite{BuNa}, that for any $g\in L^{2}\left(
\omega\right)  ^{3}$ the problem (\ref{A1}), (\ref{A2}), (\ref{A5}),
(\ref{A6}) has a unique generalized solution $w\in H_{0}^{1}\left(
\omega\right)  ^{2}\times H_{0}^{2}\left(  \omega\right)  $ while $w_{1}%
,w_{2}\in H^{2}\left(  \omega\right)  $ and $w_{3}\in H_{loc}^{4}\left(
\overline{\omega}\setminus\mathcal{O}\right)  $ but in general $w_{3}\notin
H^{3}\left(  \omega\right)  $. The next convergence theorem holds and it can
be proved by an impalpable modification of the standard approach for a plate
without the small support area $\theta_{h}$.

\begin{theorem}
\label{th3.3}The rescaled displacements $h^{3/2}u_{3}(h,y,h\zeta)$ and
$h^{1/2}u_{i}(h,y,h\zeta)$, $i=1,2$, in the three-dimensional problem
(\ref{1.11})--(\ref{1.14}) with the right-hand side (\ref{A4}) converge in
$L^{2}(\omega\times(-1/2,1/2))$ as $h\to+0$ to the functions $w_{3}(y)$ and
$w_{i}(y)-\zeta\dfrac{\partial w_{3}}{\partial y_{i}}(y)$, $i=1,2$,
respectively, where $\zeta=h^{-1}z$ is the stretched coordinate and $w=\left(
w_{1},w_{2},w_{3}\right)  ^{\top}\in H_{0}^{1}\left(  \omega\right)
^{2}\times H_{0}^{2}\left(  \omega\right)  $ is a solution of the
two-dimensional Dirichlet-Sobolev problem (\ref{A1}), (\ref{A2}), (\ref{A5}),
(\ref{A6}).
\end{theorem}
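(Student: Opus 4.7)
The plan is to follow the classical energy/compactness route used for Kirchhoff justification (as in \cite{Ciarlet1,LeDret,Nabook}), with the one new ingredient being a careful use of the weighted Korn inequality of Theorem \ref{th2.1} to extract the Sobolev point condition \eqref{A6} in the limit. First I would rescale: set $\zeta=h^{-1}z$ and define $U^h(y,\zeta)=(h^{1/2}u_1,h^{1/2}u_2,h^{3/2}u_3)^\top(h,y,h\zeta)$ on the fixed ply $\Omega_1=\omega\times(-1/2,1/2)$. Writing the variational identity \eqref{1.15} with the scaled loads \eqref{A4} and with test fields of the same anisotropic structure, one obtains on $\Omega_1$ a problem whose bilinear form, after scaling, is bounded below by the \emph{anisotropically scaled} energy $\|D(\nabla_y,h^{-1}\partial_\zeta)U^h;L^2(\Omega_1)\|^2$, which is of order $\|D(\nabla)u;L^2(\Omega_h)\|^2\cdot h^{-1}$ up to universal factors, while the right-hand side $(f,u)_{\Omega_h}$ gains the same power of $h$ by Cauchy–Schwarz and \eqref{A4}. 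Thus the natural energy stays $O(1)$.

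Second, I would apply Theorem \ref{th2.1} on $\Omega_h$, translate it into the rescaled variables, and deduce a uniform-in-$h$ bound for $U^h$ in a weighted anisotropic Sobolev space. In particular the components $\partial_\zeta U^h_3$ and $\partial_\zeta U^h_i+\partial_{y_i}U^h_3$ are $O(1)$ in $L^2(\Omega_1)$, and the in-plane gradients $\nabla_y U^h_i$ are $O(1)$ as well. From this, along a subsequence, $U^h\rightharpoonup U$ weakly in $H^1_{\text{loc}}$ away from the point $\mathcal O$, with $\partial_\zeta U_3=0$ (hence $U_3=w_3(y)$) and the Kirchhoff kinematic ansatz $U_i(y,\zeta)=w_i(y)-\zeta\,\partial_{y_i}w_3(y)$ forced by the vanishing transverse shears in the limit. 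Dirichlet data on $\upsilon_h$ pass to \eqref{A5}, and inserting this ansatz into the scaled variational identity and letting $h\to+0$ yields exactly \eqref{A1}--\eqref{A2} via the standard computation that integrates out $\zeta$ and uses the plane-stress correction \eqref{A3}.

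Third, and this is the subtle step, I would extract the Sobolev point condition \eqref{A6}. The weights $S_{hq}$ in \eqref{2.5} put controlled mass near $y=\mathcal O$, and the bound $\|s_h^{-1}S_{h1}U^h_i\|_{L^2}+h\|s_h^{-2}S_{h2}U^h_3\|_{L^2}\le c$ inherited from \eqref{2.6} means that $U^h_3$ cannot concentrate a nonzero value at the origin in the limit. More concretely, since $U^h_3$ vanishes on $\theta_h$ (rescaled as $\eta=h^{-1}y\in\theta$), cutting off in a disc of radius $\sqrt h$ around $\mathcal O$ and using the logarithmic Hardy inequality \eqref{2.16} shows that the average of $U^h_3$ over any small disc $\{|y|<\rho\}$ is $O(|\ln h|^{-1/2})$ for $\rho\asymp\sqrt h$. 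Since $w_3\in H^2(\omega)\hookrightarrow C(\overline\omega)$ by Sobolev embedding, these averages converge to $w_3(\mathcal O)$, giving $w_3(\mathcal O)=0$. This identifies the limit $w$ uniquely as the solution of \eqref{A1},\eqref{A2},\eqref{A5},\eqref{A6}, so the whole family $U^h$ converges, not just a subsequence.

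Finally, strong $L^2(\Omega_1)$-convergence of $U^h$ to its Kirchhoff limit follows from the compact embedding $H^1(\Omega_1)\hookrightarrow L^2(\Omega_1)$ applied to the uniformly bounded $U^h$. I expect the main obstacle to be precisely the extraction of \eqref{A6}: one must quantify how the Dirichlet condition on the $h$-scale set $\theta_h$ survives after rescaling to $\omega$, and the weights in \eqref{2.7}, together with the logarithmic Hardy inequality \eqref{2.16}, are exactly what is needed. Everything else is a direct adaptation of the standard Kirchhoff convergence argument, which is why the authors call this classical and defer the sharp $O(|\ln h|^{-1/2})$ convergence-rate analysis to the companion paper \cite{ButCaNa2}.
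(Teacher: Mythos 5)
The paper does not actually prove Theorem \ref{th3.3}: it only remarks that the statement follows by an ``impalpable modification of the standard approach'' for a plate without the small support, and your proposal is precisely that modification (rescaling to $\Omega_1$, uniform energy bounds, the weighted Korn inequality of Theorem \ref{th2.1}, compactness, identification of the Kirchhoff limit, and extraction of the point condition), so it is consistent with the route the authors intend and I see no essential gap. One remark on the only nonstandard step, the derivation of (\ref{A6}): rather than estimating averages of $U^h_3$ over discs of radius $\sqrt h$ (where the exponent $O(|\ln h|^{-1/2})$ you quote does not obviously come out of the Hardy inequality), it is cleaner to rescale the term $h^{2}\bigl\Vert s_h^{-2}S_{h2}u_3;L^2(\Omega_h)\bigr\Vert^{2}\le c\Vert D(\nabla)u;L^2(\Omega_h)\Vert^{2}=O(1)$ from (\ref{2.6})--(\ref{2.7}), which gives
\begin{equation*}
\int_{\Omega_1}\bigl(h^{2}+|y|^{2}\bigr)^{-2}\bigl(1+\bigl\vert\ln(h^{2}+|y|^{2})\bigr\vert\bigr)^{-2}\,\bigl\vert U^h_3(y,\zeta)\bigr\vert^{2}\,dy\,d\zeta\le c
\end{equation*}
uniformly in $h$; by weak lower semicontinuity this bound survives in the limit, and since the weight $|y|^{-4}(1+|\ln|y||)^{-2}$ is not integrable at $\mathcal O$ against a nonzero constant, the continuity of $w_3\in H^2(\omega)\hookrightarrow C(\overline\omega)$ forces $w_3(\mathcal O)=0$.
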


The main unfavorable result of \cite{ButCaNa2} reads: the convergence rate in
Theorem \ref{th3.3} is unacceptably low, namely $O(\left\vert \ln h\right\vert
^{-1/2})$ and thereafter we prepare for an elaboration of the asymptotic
structures of elastic fields in $\Omega_{h}$ near $\theta_{h}$.

\subsection{Sketch of asymptotic expansions in a thin plate.\label{sect3.2}}

Theorem \ref{th3.3} remains valid for an anisotropic plate, i.e., with
arbitrary rigidity matrix $A$ in the Hooke's law, where the differential
equations (\ref{A1}) and (\ref{A2}) are replaced with
\begin{align}
\mathcal{L}^{\prime}\left(  \nabla_{y}\right)  w^{\prime}(y)  &  =g^{\prime
}(y),\ \ \ y\in\omega,\label{A7}\\
\mathcal{L}_{3}\left(  \nabla_{y}\right)  w_{3}(y)  &  =g_{3}(y),\ \ \ y\in
\omega, \label{A8}%
\end{align}
where the $2\times2$-matrix $\mathcal{L}^{\prime}$ of second-order operators
and the scalar fourth-order operator $\mathcal{L}_{3}$ are given by
\begin{align}
\mathcal{L}^{\prime}\left(  \nabla_{y}\right)   &  =\mathcal{D}^{\prime
}\left(  -\nabla_{y}\right)  ^{\top}\mathcal{A}^{0}\mathcal{D}^{\prime}\left(
\nabla_{y}\right)  ,\label{A9}\\
\mathcal{L}_{3}\left(  \nabla_{y}\right)   &  =\frac{1}{6}\mathcal{D}%
_{3}\left(  \nabla_{y}\right)  ^{\top}\mathcal{A}^{0}\mathcal{D}_{3}\left(
\nabla_{y}\right)  . \label{A99}%
\end{align}
Here, the symmetric and positive definite $3\times3$-matrix $A^{0}$ is
computed as follows:%
\[
A^{0}=A_{(yy)}-A_{(yz)}A_{(zz)}^{-1}A_{(zy)},\ \ \ \ A=\left(
\begin{array}
[c]{cc}%
A_{(yy)} & A_{(yz)}\\
A_{(zy)} & A_{(zz)}%
\end{array}
\right)
\]
and%
\begin{equation}
\mathcal{D}^{\prime}\left(  \nabla_{y}\right)  =\left(
\begin{array}
[c]{ccc}%
\partial_{1} & 0 & 2^{-1/2}\partial_{2}\\
0 & \partial_{2} & 2^{-1/2}\partial_{1}%
\end{array}
\right)  ^{\top},\ \ \ \mathcal{D}_{3}\left(  \nabla_{y}\right)  =\left(
2^{-1/2}\partial_{1}^{2},2^{-1/2}\partial_{2}^{2},\partial_{1}\partial
_{2}\right)  ^{\top}. \label{A11}%
\end{equation}
Calculations of asymptotic expansions and the differential operators
(\ref{A9}), which are entirely adapted to the Mandel-Voigt notation, will be
presented in \cite{ButCaNa2} but also can be derived by any asymptotic
procedure for an asymptotic analysis of thin plates, for example, \cite{na273}
and \cite[\S 4]{Nabook}. In Section 3 we will apply the ordinary asymptotic
expansion of the solution to the problem (\ref{1.11})--(\ref{1.14}) with the
right-hand side (\ref{A4})%
\begin{equation}
u(h,x)\sim h^{-3/2}\sum_{p=0}^{3}h^{p}W^{p}\left(  \zeta,\nabla_{y}\right)
w(y)+\dots\label{A12}%
\end{equation}
written in an unusual form, i.e. with the help of the following $3\times
3$-matrix differential operators%
\begin{equation}%
\begin{split}
&  W^{0}\left(  \zeta,\nabla_{y}\right)  =\left(
\begin{array}
[c]{ccc}%
0 & 0 & 0\\
0 & 0 & 0\\
0 & 0 & 1
\end{array}
\right)  ,\qquad W^{1}\left(  \zeta,\nabla_{y}\right)  =\left(
\begin{array}
[c]{ccc}%
1 & 0 & -\zeta\partial_{1}\\
0 & 1 & -\zeta\partial_{2}\\
0 & 0 & 0
\end{array}
\right)  ,\\
&  W^{2}\left(  \zeta,\nabla_{y}\right)  =\mathbb{J}^{-1}A_{(zz)}^{-1}%
A_{(zy)}\left(  -\zeta\mathbb{I}_{3},2^{1/2}\left(  \frac{\zeta^{2}}{2}%
-\frac{1}{24}\right)  \mathbb{I}_{3}\right)  \mathcal{D}\left(  \nabla
_{y}\right)
\end{split}
\label{3.53}%
\end{equation}
where $\mathbb{I}_{3}=\mathrm{diag}\{1,1,1\}$ and $\mathbb{J}=\mathrm{diag}%
\{2^{-1/2},2^{-1/2},1\}$ are diagonal matrices and $\mathcal{D}\left(
\nabla_{y}\right)  $ is the $6\times3$-matrix composed of the blocks
(\ref{A11})%
\[
\mathcal{D}\left(  \nabla_{y}\right)  =\left(
\begin{array}
[c]{cc}%
\mathcal{D}^{\prime}\left(  \nabla_{y}\right)  & \mathbb{O}_{3\times1}\\
\mathbb{O}_{3\times2} & \mathcal{D}_{3}\left(  \nabla_{y}\right)
\end{array}
\right)
\]
and the null matrices $\mathbb{O}_{p\times q}$ of size $p\times q$. Coherently
with (\ref{A13}), the matrix differential operator $\mathcal{L}(\nabla_{y})$
of the affiliated system (\ref{A7}), (\ref{A8}) involves the following
block-diagonal matrix $\mathcal{A}$ of size $6\times6$,%
\begin{equation}
\mathcal{A}=\mathrm{diag}\{A^{0},\frac{1}{6}A^{0}\},\ \ \ \mathcal{L}\left(
\nabla\right)  =\mathrm{diag}\{\mathcal{L}^{\prime}\left(  \nabla_{y}\right)
,\mathcal{L}_{3}\left(  \nabla_{y}\right)  \}. \label{A13}%
\end{equation}

Let us hint at the choice of operators (\ref{3.53}) in (\ref{A12}) even if a
detailed description of the dimension reduction procedure will be given in
\cite{ButCaNa2}. The differential operators $L\left(  \nabla\right)  $ and
$N^{\pm}\left(  \nabla\right)  $ on the left in (\ref{1.11}) and (\ref{1.12})
admit the decompositions%
\begin{equation}%
\begin{split}
L(\nabla)  &  =h^{-2}L^{0}\left(  \partial_{\zeta}\right)  +h^{-1}L^{1}\left(
\nabla_{y},\partial_{\zeta}\right)  +h^{0}L^{2}\left(  \nabla_{y}\right)  ,\\
N^{\pm}(\nabla)  &  =h^{-1}N^{0\pm}\left(  \partial_{\zeta}\right)
+h^{0}N^{1\pm}\left(  \nabla_{y}\right)  ,
\end{split}
\label{3.4}%
\end{equation}
where $\zeta=h^{-1}z\in(-1/2,1/2)$ is the stretched coordinate, $\partial
_{\zeta}=\partial/\partial\zeta$, and%
\begin{equation}%
\begin{split}
L^{0}\left(  \partial_{\zeta}\right)   &  =D(0,0,-\partial_{\zeta})^{\top
}AD(0,0,\partial_{\zeta}),\qquad L^{2}\left(  \nabla_{y}\right)
=D(-\nabla_{y},0)^{\top}AD(\nabla_{y},0),\\
L^{1}\left(  \nabla_{y},\partial_{\zeta}\right)   &  =D(0,0,-\partial_{\zeta
})^{\top}AD(\nabla_{y},0)+D(-\nabla_{y},0)^{\top}AD(0,0,\partial_{\zeta}),\\
N^{0\pm}\left(  \partial_{\zeta}\right)   &  =D(\pm e_{3})^{\top
}AD(0,0,\partial_{z}),\qquad N^{1\pm}\left(  \nabla_{y}\right)  =D(\pm
e_{3})^{\top}AD(\nabla_{y},0).
\end{split}
\label{3.5}%
\end{equation}
Then we have%
\begin{equation}%
\begin{split}
&  L\sum_{p=0}^{3}h^{p}W^{p}=h^{-2}L^{0}W^{0}+h^{-1}\left(  L^{0}W^{1}%
+L^{1}W^{0}\right) \\
&  \qquad\qquad\qquad+h^{0}\left(  L^{0}W^{2}+L^{1}W^{1}+L^{2}W^{0}\right)
+h^{1}\left(  L^{0}W^{3}+L^{1}W^{2}+L^{2}W^{1}\right) \\
&  \qquad\qquad\qquad+h^{2}\left(  L^{1}W^{3}+L^{2}W^{2}\right)  +h^{3}%
L^{2}W^{3}=:\sum_{q=0}^{5}h^{q-2}F^{q},\\
&  N^{\pm}\sum_{p=0}^{3}h^{p}W^{p}=h^{-1}N^{0\pm}W^{0}+h^{0}\left(  N^{0\pm
}W^{1}+N^{1\pm}W^{0}\right) \\
&  \qquad\qquad\qquad+h^{1}\left(  N^{0\pm}W^{2}+N^{1\pm}W^{1}\right)
+h^{2}\left(  N^{0\pm}W^{3}+N^{1\pm}W^{2}\right) \\
&  \qquad\qquad\qquad+h^{3}+N^{1\pm}W^{3}=:\sum_{q=0}^{4}h^{q-1}G^{q\pm}.
\end{split}
\label{A14}%
\end{equation}
The operators (\ref{3.53}) are selected such that%
\begin{equation}
F^{q}=0,\ G^{q\pm}=0\text{ with }q=0,1,2. \label{A15}%
\end{equation}
It is not possible to annul both $F^{3}$ and $G^{3\pm}$ but $W^{3}\left(
\zeta,\nabla_{y}\right)  $ is fixed such that%
\begin{equation}
\left(  F_{1}^{3},F_{2}^{3}\right)  ^{\top}=\mathcal{L}^{\prime}\left(
\nabla_{y}\right)  \text{ \ \ and \ \ }F_{3}^{3}=0,\ G^{3\pm}=0. \label{A16}%
\end{equation}
Moreover,%
\begin{equation}
\int_{-1/2}^{1/2}F_{3}^{4}d\zeta+G_{3}^{4\pm}+G_{3}^{4-}=\mathcal{L}%
_{3}\left(  \nabla_{y}\right)  . \label{A17}%
\end{equation}
The differential operators $W^{p}\left(  \zeta,\nabla_{y}\right)  $ in
(\ref{A12}), (\ref{3.53}) as well as (\ref{A9}) in (\ref{A16}), (\ref{A17})
are defined uniquely through the matrices $D(\nabla)$ and $A$ in (\ref{1.7})
and (\ref{1.9}). For general elliptic problem, an algebraic procedure to
construct asymptotics type (\ref{A12}) and the resultant operator
$\mathcal{L}(\nabla_{y})$ in (\ref{A13}) is developed in \cite{na192}, see
also \cite[\S 2 Ch.5]{Nabook} and the review \cite{na262}. This procedure
serves for constructing asymptotic expansions in thin domains and in unbounded
domains with cylindrical and layer-shaped outlets to infinity, cf.
\cite[\S 2]{ButCaNa2} and Section \ref{sect3.4} below.

Finally it should be mentioned that explicit formulas for $W^{3}\left(
\zeta,\nabla_{y}\right)  w(y)$ and higher order terms in (\ref{A12}) are
available but are of no further use.

\subsection{Boundary layer effects\label{sect3.3}}

Although Theorem \ref{th3.3} singles out the solution $w$ of the limit problem
(\ref{A7}), (\ref{A8}), (\ref{A5}), (\ref{A6}) in $\omega$ as a limit of the
solution $u$ of the original problem in the thin plate $\Omega_{h}$, the
Dirichlet conditions (\ref{1.13}) at the small support zone $\theta_{h}$
maintain only the point condition (\ref{A6}) for the deflection $w_{3}$ and
are not reflected in the problem for the longitudinal displacements
$w^{\prime}$. At the same time, the components $w_{1}$ and $w_{2}$ of
$w^{\prime}$ leave small discrepancies in conditions (\ref{1.13}) unless
accidentally. As usual, a variation of boundary conditions on a set with a
small diameter brings boundary layer effects, see \cite[Ch.5]{MaNaPl} for
general apprehension. However, due to the assumed comparability of the plate
thickness $h$ and diameters of $\omega_{h}^{j}$ the boundary layer in problem
(\ref{1.11})--(\ref{1.14}) exhibits a very specific and intricate structure.
To describe it, we introduce the stretched coordinates (\ref{3.35}) and
observe that changing $x\mapsto\xi$ and putting $h=0$ transform the domains
$\Omega_{h}$ and $\theta_{h}$ into the layer (\ref{3.36}) between the planes
$\Sigma_{\pm}=\mathbb{R}^{2}\times\left\{  \pm1/2\right\}  $ and the set
$\theta\subset\Sigma_{-}$ of unit size respectively. We also denote
$\Sigma_{\bullet}=\Sigma_{-}\setminus\overline{\theta}$.

Let us recall the following weighted anisotropic Korn inequality which is
proved in \cite{na155}, see also \cite{na243} and \cite{Nabook}, and looks
quite similar to (\ref{2.55}) and (\ref{2.6}).

\begin{lemma}
\label{lem3.4}For any smooth and compactly supported vector function $v$
satisfying the Dirichlet condition on $\theta$ the inequality%
\begin{equation}
\left\Vert u;V_{0}^{1}\left(  \Lambda;\theta\right)  \right\Vert \leq
c\left\Vert D\left(  \nabla_{\xi}\right)  v;L^{2}\left(  \Lambda\right)
\right\Vert \label{3.K}%
\end{equation}
is valid, where
\begin{align}
\left\Vert u;V_{0}^{1}(\Lambda;\theta)\right\Vert ^{2}  &  =\int_{\Lambda
}\Bigg[\sum_{i=1}^{2}\left(  \left\vert \nabla_{\xi}v_{i}\right\vert
^{2}+S_{1}^{2}\left(  \left\vert \frac{\partial v_{i}}{\partial\zeta
}\right\vert ^{2}+\left\vert \frac{\partial v_{3}}{\partial\eta_{i}
}\right\vert ^{2}+\left\vert v_{i}\right\vert ^{2}\right)  \right)
\label{3.41}\\
&  \ \ \ \ \ \ \ \ \ \ \ \ \ \ \ \ \ \ +\left\vert \partial_{\zeta}%
v_{3}\right\vert ^{2}+S_{2}^{2}\left\vert v_{3}\right\vert ^{2}\Bigg]\,d\xi
,\nonumber
\end{align}
the weighted space $V_{0}^{1}\left(  \Lambda;\theta\right)  $ is a completion
of $C_{c}^{\infty}\left(  \overline{\Lambda}\setminus\overline{\theta}\right)
^{3}$ with respect to the norm (\ref{3.41}) and
\[
S_{k}(\eta)=\left(  1+\rho^{2}\right)  ^{-k/2}\left(  1+\ln\left(  1+\rho
^{2}\right)  \right)  ^{-1},\qquad\rho=|\eta|.
\]

\end{lemma}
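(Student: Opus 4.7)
The plan is to mimic the proof of Theorem~\ref{th2.1}, replacing the thin plate $\Omega_h$ by the infinite layer $\Lambda$ and the parameter $h$ by the dimensionless thickness $1$. Two simplifications arise: no lateral boundary is present, so the weight $s_h$ disappears; in exchange, the weights $S_k$ must now produce decay at infinity rather than growth near the support, so the Hardy inequalities will be applied with endpoint $r=+\infty$ rather than $r=h$. By a completion argument, it suffices to verify (\ref{3.K}) for smooth, compactly supported $v$ that vanish in a neighbourhood of $\overline{\theta}$.

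First I would fix $R$ with $\overline{\theta}\subset\mathbb{B}_{R/2}^{2}$, invoke the ordinary Korn inequality on the cylinder $\mathbf{Q}_{R}=\mathbb{B}_{R}^{2}\times(-1/2,1/2)$ (available thanks to the Dirichlet condition on $\theta$), and use a cut-off $X(\eta)=1-\chi(|\eta|/R)$ to reduce, as in (\ref{2.9})--(\ref{2.12}), to a displacement supported in $\{|\eta|>R/2\}$. The $\varepsilon_{12}$-identity of (\ref{2.13}) then yields the unweighted bound $\|\nabla_\eta v_i;L^{2}(\Lambda)\|\le c\|D(\nabla_\xi)v;L^{2}(\Lambda)\|$ for $i=1,2$. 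For the $S_{1}$-weighted bound on $v_i$, I would pass to polar coordinates and apply the substitution $\tau=\ln r$: this transforms the radial factor of $\int_\Lambda S_{1}^{2}|v_i|^{2}\,d\xi$ into $\int\tau^{-2}|\widetilde v_i|^{2}\,d\tau$, while $\int r|\partial_r v_i|^{2}\,dr$ becomes $\int|\partial_\tau\widetilde v_i|^{2}\,d\tau$, so that the classical one-dimensional Hardy inequality (with $\widetilde v_i$ vanishing at both endpoints, by the cut-off at $r\sim R$ and by compact support at infinity) delivers the estimate.

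The mixed-derivative step of Section~\ref{sect2.2} is then transplanted verbatim: integration by parts of $\chi_\bullet^{2}S_{1}^{2}\partial_\zeta v_i\,\partial_{\eta_i}v_3$, with a cut-off $\chi_\bullet(\zeta)$ vanishing on $\Sigma_\pm$, produces the $S_{1}$-weighted bound on $\partial_\zeta v_i$ and $\partial_{\eta_i}v_3$ in the sub-layer $|\zeta|<1/4$, and a further Hardy reduction modelled on (\ref{2.21}) upgrades this to the $S_{2}$-bound on $v_3$. To spread the sub-layer estimate to all of $\Lambda$, I would cover $\mathbb{R}^{2}$ by unit cubes centred at points $\eta^{0}$ and on each cube decompose $v$ into a rigid motion plus an orthogonal complement, exactly as in (\ref{2.25})--(\ref{2.33}); since $S_k$ varies by only a bounded factor across any unit cube, it can be pulled out as $S_k(\eta^{0})$ and the local Korn estimates summed.

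The main obstacle I anticipate is the weight bookkeeping in the mixed-derivative step: one must verify that $|\nabla_\eta S_k^{2}|\le c\rho^{-1}S_k^{2}$ generates only terms already controlled by the $S_1$-bound on $v_i$ and by $\|D(\nabla_\xi)v\|$, and that the logarithmic prefactor in the definition of $S_k$ is precisely what the substitution $\tau=\ln r$ yields at infinity. Beyond this, the argument is an unbounded-domain copy of the proof of Theorem~\ref{th2.1}.
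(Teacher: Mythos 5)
The paper does not actually prove Lemma~\ref{lem3.4} --- it is quoted from \cite{na155}, \cite{na243} --- but the authors state in Section~\ref{sect1.5} that it ``can be derived by means of the same tricks as in Section~\ref{sect2}'', which is precisely your plan, and the adaptation is sound: the Hardy inequalities of the first subsection transplant to $(R,\infty)$ with vanishing at the \emph{inner} endpoint (one checks $\int_{\rho}^{\infty}a\,dt<\infty$ for $a=\rho^{-1}(\ln\rho)^{-2}$ and $a=\rho^{-3}(\ln\rho)^{-2}$, giving exactly the analogues of (\ref{2.16}) and (\ref{2.21})); compact support replaces the lateral Dirichlet condition in the $\varepsilon_{12}$ integration by parts; and the weight-derivative bound you flag, $|\nabla_{\eta}S_{k}^{2}|\le c\rho^{-1}S_{k}^{2}\le cS_{k}$, does hold, so the terms $I_{4}$--$I_{6}$ of the mixed-derivative step are controlled as in the plate case (with $s_{h}\equiv1$ and the bound (\ref{2.19}) replaced by $S_{1}\le c$). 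One caveat: your argument establishes the inequality with $|\nabla_{\eta}v_{i}|^{2}$ as the unweighted gradient term, which is surely what (\ref{3.41}) intends (compare (\ref{2.7})); read literally, $|\nabla_{\xi}v_{i}|^{2}$ would include an unweighted $|\partial_{\zeta}v_{i}|^{2}$, and that version is false --- the Kirchhoff field $v'=-\zeta\nabla_{\eta}w$, $v_{3}=w$ with $w$ spread over an annulus of radius $T$ gives $\|\partial_{\zeta}v_{i}\|^{2}/\|D(\nabla_{\xi})v\|^{2}\sim T^{2}$.
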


Our aim is to investigate the elasticity problem%
\begin{gather}
-D\left(  -\nabla_{\xi}\right)  ^{\top}AD\left(  \nabla_{\xi}\right)  v\left(
\xi\right)  =F\left(  \xi\right)  ,\qquad\xi\in\Lambda,\label{3.38}\\%
\begin{cases}
D\left(  e_{3}\right)  ^{\top}AD\left(  \nabla_{\xi}\right)  v\left(
\xi\right)  =G^{+}(\xi), & \quad\xi\in\Sigma^{+},\\
D\left(  -e_{3}\right)  ^{\top}AD\left(  \nabla_{\xi}\right)  v(\xi
)=G^{\bullet}(\xi), & \quad\xi\in\Sigma^{\bullet},
\end{cases}
\label{3.39}\\
v\left(  \xi\right)  =0,\qquad\xi\in\theta, \label{3.40}%
\end{gather}
in layer (\ref{3.36}) clamped along the area $\theta$; we assume that%
\begin{equation}%
\begin{split}
&  S_{1}^{-1}F_{i}\in L^{2}\left(  \Lambda\right)  ,\qquad S_{1}^{-1}%
G_{i}^{\pm}\in L^{2}\left(  \Sigma^{\pm}\right)  ,\qquad i=1,2,\\
&  F_{3}(\eta,\zeta)=F_{3}^{0}(\eta,\zeta)+F_{3}^{1}(\eta),\qquad\int
_{-1/2}^{1/2}F_{3}^{0}(\eta,\zeta)d\zeta+\sum_{\pm}G_{3}^{\pm}(\eta)=0,\\
&  S_{2}^{-1}F_{3}^{0}\in L^{2}\left(  \mathbb{R}^{2}\right)  ,\qquad
F_{3}^{0}\in L^{2}(\Lambda),\qquad G_{3}^{\pm}\in L^{2}\left(  \Sigma^{\pm
}\right)  ,
\end{split}
\label{3.FG}%
\end{equation}
where $G^{-}$ is an extension of $G^{\bullet}$ over $\theta$ (observe that
$u=0$ on $\theta)$.

By Lemma \ref{lem3.4} we can give a variational formulation to problem
(\ref{3.38})-(\ref{3.40}).

Multiplying (\ref{3.38}) scalarly with a test function $u\in C_{c}^{\infty
}\left(  \overline{\Lambda}\setminus\overline{\theta}\right)  ^{3}$, smooth
and compactly supported, and integrating by parts with the help of the Neumann
boundary conditions (\ref{3.39}) lead to the integral identity%
\begin{equation}
\left(  AD\left(  \nabla_{\xi}\right)  v,D\left(  \nabla_{\xi}\right)
u\right)  _{\Lambda}=\left(  F,u\right)  _{\Lambda}+\sum\limits_{\pm}\left(
G^{\pm},u\right)  _{\Sigma^{\pm}}. \label{3.42}%
\end{equation}
Then the following result holds.

\begin{proposition}
\label{prop3.5}Under conditions (\ref{3.FG}), problem (\ref{3.38}%
)--(\ref{3.40}) has a unique weak solution $v\in V_{0}^{1}\left(
\Lambda;\theta\right)  ^{3}$ verifying the integral identity (\ref{3.42}) and
the norm $\left\Vert u;V_{0}^{1}(\Lambda;\theta)\right\Vert $, see
(\ref{3.41}), does not exceed the sum of norms of functions in (\ref{3.FG})
multiplied with a constant.
\end{proposition}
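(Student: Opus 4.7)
The plan is Lax--Milgram on the Hilbert space $V_0^1(\Lambda;\theta)^3$ endowed with the inner product $a(v,u):=(AD(\nabla_\xi)v,D(\nabla_\xi)u)_\Lambda$. Positive-definiteness of the stiffness matrix $A$ combined with the weighted Korn inequality (\ref{3.K}) of Lemma \ref{lem3.4} yields $a(v,v)\geq c\,\|v;V_0^1(\Lambda;\theta)\|^2$, so $a$ is continuous and coercive. The whole task thus reduces to verifying that $\ell(u):=(F,u)_\Lambda+\sum_\pm(G^\pm,u)_{\Sigma^\pm}$ is a bounded linear functional on $V_0^1(\Lambda;\theta)^3$, with norm controlled by the sum of the data norms in (\ref{3.FG}); Riesz representation then delivers the unique $v$ satisfying (\ref{3.42}) together with the stated estimate.

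For the longitudinal components ($i=1,2$) the volume pairing is immediate from Cauchy--Schwarz and the definition (\ref{3.41}): $|(F_i,u_i)_\Lambda|\leq\|S_1^{-1}F_i\|_{L^2(\Lambda)}\,\|u;V_0^1(\Lambda;\theta)\|$. The boundary pairings $(G_i^\pm,u_i)_{\Sigma^\pm}$ call for a weighted trace bound $\|S_1 u_i(\cdot,\pm1/2)\|_{L^2(\mathbb{R}^2)}\leq C\,\|u;V_0^1(\Lambda;\theta)\|$, which I would derive from the pointwise inequality $|u_i(\eta,\pm 1/2)|^2\leq 2|u_i(\eta,\zeta)|^2+2\int|\partial_\zeta u_i|^2\,d\zeta'$ by multiplying by $S_1(\eta)^2$ and integrating in $(\eta,\zeta)$, since both $S_1 u_i$ and $S_1\partial_\zeta u_i$ lie in $L^2(\Lambda)$ by (\ref{3.41}).

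The genuine obstacle is the transverse component, because the norm (\ref{3.41}) controls only $S_2 u_3$ in $L^2(\Lambda)$ and not $u_3$ itself; here the compatibility identity in (\ref{3.FG}) is essential. I would split $u_3=\bar u_3(\eta)+\tilde u_3(\eta,\zeta)$ with $\bar u_3(\eta):=\int_{-1/2}^{1/2}u_3(\eta,\zeta)\,d\zeta$ so that $\int\tilde u_3\,d\zeta=0$. Collecting the contributions of $F_3^0$ and $G_3^\pm$ against $\bar u_3$ yields $\int_{\mathbb{R}^2}\bigl(\int_{-1/2}^{1/2}F_3^0\,d\zeta+\sum_\pm G_3^\pm\bigr)\bar u_3\,d\eta=0$ by the compatibility condition, while the $\eta$-only term $F_3^1$ is handled by $|(F_3^1,u_3)_\Lambda|=|(F_3^1,\bar u_3)_{\mathbb{R}^2}|\leq\|S_2^{-1}F_3^1\|_{L^2(\mathbb{R}^2)}\|S_2\bar u_3\|_{L^2(\mathbb{R}^2)}$ together with $\|S_2\bar u_3\|_{L^2(\mathbb{R}^2)}\leq\|S_2 u_3\|_{L^2(\Lambda)}$ via Cauchy--Schwarz in $\zeta$. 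The zero-mean remainder $\tilde u_3$ satisfies the one-dimensional Poincar\'e inequality $\|\tilde u_3\|_{L^2(\Lambda)}\leq C\|\partial_\zeta u_3\|_{L^2(\Lambda)}$, and a standard unweighted trace inequality then gives $\|\tilde u_3(\cdot,\pm1/2)\|_{L^2(\mathbb{R}^2)}\leq C\|\partial_\zeta u_3\|_{L^2(\Lambda)}$, which dominates the residual pairings of $\tilde u_3$ with $F_3^0$ and $G_3^\pm$. Assembling these estimates yields continuity of $\ell$ with the asserted dependence on the data, and Lax--Milgram closes the argument.
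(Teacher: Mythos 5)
Your proposal is correct and follows essentially the same route as the paper: coercivity of the bilinear form via the weighted Korn inequality of Lemma \ref{lem3.4}, and continuity of the right-hand side functional obtained by splitting $u_3=\overline{u}_3+(u_3-\overline{u}_3)$, cancelling the $\overline{u}_3$-pairing of $F_3^0$ and $G_3^\pm$ through the compatibility condition in (\ref{3.FG}), and estimating the remaining terms with the weighted trace bound for $u_i$ and the one-dimensional Poincar\'e and trace inequalities in $\zeta$, exactly as in (\ref{3.F})--(\ref{3.uu}). The only difference is cosmetic: you attach the $S_2^{-1}$-weighted integrability to the $\eta$-dependent part $F_3^1$ paired against $S_2\overline{u}_3$, which is the coherent reading of the hypotheses, whereas the paper's displayed formulas interchange the superscripts $0$ and $1$ at a couple of places.
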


\begin{proof}
By virtue of the orthogonality condition in (\ref{3.FG}), the right-hand side
$\mathcal{F}(u)$ of (\ref{3.42}) can be rewritten as follows:%
\begin{align}
\mathcal{F}(u)  &  =\sum_{i=1}^{2}\left(  \left(  F_{i},u_{i}\right)
_{\Lambda}+\sum_{\pm}\left(  G_{i}^{\pm},u_{i}\right)  _{\Sigma^{\pm}}\right)
+\left(  F_{3}^{0},\overline{u}_{3}\right)  _{\mathbb{R}^{2}}\label{3.F}\\
&  +\left(  F_{3}^{1},u_{3}-\overline{u}_{3}\right)  _{\Lambda}+\sum_{\pm
}\left(  G_{3}^{\pm},u_{3}-\overline{u}_{3}\right)  _{\Sigma^{\pm}}\nonumber
\end{align}
where $\overline{u}_{3}\left(  \eta\right)  =\int_{-1/2}^{1/2}u_{3}\left(
\eta,\zeta\right)  d\zeta$ and $\left\Vert S_{2}\overline{u}_{3};L^{2}\left(
\mathbb{R}^{2}\right)  \right\Vert \leq\left\Vert S_{2}u_{3};L^{2}\left(
\Lambda\right)  \right\Vert $.

Since $\partial_{\zeta}u_{3}=\varepsilon_{33}\left(  u\right)  \in
L^{2}\left(  \Lambda\right)  $, the one-dimensional Poincar\'{e} and trace
inequalities in the interval $(-1/2,1/2)\ni\zeta$ integrated over the plane
$\mathbb{R}^{2}\ni\eta$, provide that%
\begin{align}
&  \left\Vert u_{3}-\overline{u}_{3};L^{2}(\Lambda)\right\Vert +\left\Vert
u_{3}-\overline{u}_{3};L^{2}\left(  \Sigma^{\pm}\right)  \right\Vert
\label{3.uu}\\
&  \leq c\left\Vert \partial_{z}\left(  u_{3}-\overline{u}_{3}\right)
;L^{2}\left(  \Lambda\right)  \right\Vert =c\left\Vert \partial_{z}u_{3}%
;L^{2}\left(  \Lambda\right)  \right\Vert \leq c\left\Vert u;V_{0}^{1}\left(
\Lambda;\theta\right)  \right\Vert .\nonumber
\end{align}
These together with the weighted trace inequality%
\[
\left\Vert S_{1}u_{i};L^{2}\left(  \Sigma^{\pm}\right)  \right\Vert \leq
c\left(  \left\Vert S_{1}\partial_{\zeta}u_{i};L^{2}\left(  \Lambda\right)
\right\Vert +\left\Vert S_{1}u_{i};L^{2}\left(  \Lambda\right)  \right\Vert
\right)  \leq c\left\Vert u;V_{0}^{1}\left(  \Lambda;\theta\right)
\right\Vert
\]
demonstrate that, under conditions (\ref{3.FG}), we have the continuous
functional (\ref{3.F}) on the right in (\ref{3.42}) if test functions are
taken from the space $V_{0}^{1}\left(  \Lambda;\theta\right)  ^{3}$. Since
inequality (\ref{3.K}) serves the left-hand side of (\ref{3.42}) to be a
scalar product in $V_{0}^{1}\left(  \Lambda;\theta\right)  ^{3}$, the Riesz
representation theorem proves the thesis.
\end{proof}

\bigskip

All elements of $V^{1}\left(  \Lambda\right)  $ (no condition on $\theta$ is
imposed) belong to $H_{loc}^{1}\left(  \overline{\Lambda}\right)  $. Any rigid
motion, except for the rotation $\xi_{1}e_{2}-\xi_{2}e_{1}$ which makes the
integral in (\ref{3.41}) divergent, falls into $V^{1}\left(  \Lambda\right)
^{3}$ because it can be approximated in norm (\ref{3.41}) by smooth compactly
supported vector functions (see \cite{na214, na243} for details). We introduce
the submatrix%
\begin{equation}
d^{\sharp}\left(  \eta,\zeta\right)  =\left(
\begin{array}
[c]{cccc}%
1 & 0 & 0 & \zeta\\
0 & 1 & -\zeta & 0\\
0 & 0 & \eta_{2} & -\eta_{1}%
\end{array}
\right)  \label{3.d}%
\end{equation}
of rigid motion and notice that the third and sixth columns are excluded from
the original matrix $d\left(  \eta,\zeta\right)  $ in (\ref{2.24}). Exhibiting
a general result in \cite{na243}, the next assertion detect%
\begin{equation}
v^{\sharp}(\eta,\zeta)=d^{\sharp}(\eta,\zeta)c^{\sharp},\ \ \ c^{\sharp}%
\in\mathbb{R}^{4}, \label{3.92}%
\end{equation}
as the main asymptotic term of the solution $v\in V_{0}^{1}(\Lambda
;\theta)^{3}$ under certain conditions on the right-hand sides.

\begin{proposition}
\label{prop3.6}Let the right-hand sides of problem (\ref{3.38})--(\ref{3.40})
satisfy the following smoothness and decay requirements%
\begin{equation}
\label{3.43}%
\begin{split}
\big|\nabla_{\eta}^{p}\partial_{\zeta}^{q}F_{i}\left(  \xi\right)
\big|+\big|\nabla_{\eta}^{p}G_{i}^{+}\left(  \eta\right)  \big|+\big|\nabla
_{\eta}^{p}G_{i}^{\bullet}\left(  \eta\right)  \big|  &  \leq c_{pq}%
\rho^{-2+\epsilon},\qquad i=1,2,\\
\big|\nabla_{\eta}^{p}\partial_{\zeta}^{q}F_{3}^{0}(\xi) \big|+\big|\nabla
_{\eta}^{p}G_{3}^{+}\left(  \eta\right)  \big|+\big|\nabla_{\eta}^{p}%
G_{3}^{\bullet}\left(  \eta\right)  \big|  &  \leq c_{pq}\rho^{-2+\epsilon},\\
\big|\nabla_{\eta}^{p}\partial_{\zeta}^{q}F_{3}^{1}\left(  \xi\right)  \big|
&  \leq c_{pq}\rho^{-3+\epsilon},\qquad\text{for }\rho\geq R_{\theta},
\end{split}
\end{equation}
where $p,q=0,1,2,\dots$, $\epsilon\in\left(  0,1\right)  $ and $R_{\theta}$ is
such that $\overline{\theta}\subset\mathbb{B}_{R_{\theta}}^{2}$. Then a
solution $v\in V_{0}^{1}\left(  \Lambda;\theta\right)  ^{3}$ of problem
(\ref{3.38})--(\ref{3.40}) given in Proposition \ref{prop3.5}, becomes smooth
for $\rho\geq R_{\theta}^{0}$ with any $R_{\theta}^{0}>R_{\theta}$ and
verifies the inequalities%
\begin{equation}
\big|\nabla_{\eta}^{p}\partial_{\zeta}^{q}\left(  v_{i}\left(  \xi\right)
-v_{i}^{0}\left(  \xi\right)  \right)  \big|\leq c_{pq}\rho^{\epsilon
},\ \ \ \big|\nabla_{\eta}^{p}\partial_{\zeta}^{q}\left(  v_{3}\left(
\xi\right)  -v_{3}^{0}\left(  \xi\right)  \right)  \big|\leq c_{pq}%
\rho^{1+\epsilon}\ \ \ \text{for }\rho\geq R_{\theta}^{0}, \label{3.44}%
\end{equation}
where $v^{0}$ is the rigid motion (\ref{3.92}) with a coefficient column
$c^{\sharp}\in\mathbb{R}^{4}$ depending on $F$, $G^{+}$, $G^{\bullet}$ and
other notation is the same as in (\ref{3.43}).
\end{proposition}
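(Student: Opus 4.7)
I would combine interior elliptic regularity for the $C^\infty$ claim with the formal asymptotic ansatz of Section \ref{sect3.2} and a Kondratiev-type weighted analysis of the resulting two-dimensional exterior problem.

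First, in $\{\rho>R_\theta\}$ the right-hand sides are $C^\infty$ and only the Neumann traction-free conditions on $\Sigma^\pm$ are present. Iterated local Schauder estimates (or $H^k_{\mathrm{loc}}$ bootstrapping, cf.\ \cite[\S 4]{Nabook}) then yield $v\in C^\infty$ on $\{\rho\ge R_\theta^0\}$ and, crucially, pointwise control of any derivative over an annulus $\{R\le\rho\le 2R\}$ by an $L^2$-norm on a slightly larger concentric annulus plus data norms. This reduces the pointwise inequalities (\ref{3.44}) to an integral bound on $v-v^0$ in the exterior.

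Next I would substitute the formal expansion $v(\xi)=\sum_{p=0}^{3} W^p(\zeta,\nabla_\eta)w(\eta)+\tilde v(\xi)$ using the operators $W^p$ of (\ref{3.53}). The algebraic identities (\ref{A14})--(\ref{A17}), together with the zero-average compatibility condition built into (\ref{3.FG}), force the leading residuals, after integration in $\zeta\in(-1/2,1/2)$, to collapse into a two-dimensional Kirchhoff-type system
\[
\mathcal{L}'(\nabla_\eta)w'(\eta)=g'(\eta),\qquad \mathcal{L}_3(\nabla_\eta)w_3(\eta)=g_3(\eta)
\]
posed in the exterior of a disk in $\mathbb{R}^2$, with $|g'(\eta)|\le c\rho^{-2+\epsilon}$ and $|g_3(\eta)|\le c\rho^{-3+\epsilon}$. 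The three-dimensional remainder $\tilde v$ then solves an elasticity problem in the layer whose data decay one order faster and whose transverse average vanishes, so Proposition \ref{prop3.5} applied in a shifted weighted space handles it.

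Third, I would apply the Kondratiev calculus of \cite{Ko} in the layer-with-planar-outlet setting of \cite{na243} to the exterior 2D problem. For the second-order elasticity system $\mathcal{L}'$, forcing of order $\rho^{-2+\epsilon}$ is matched by a solution $w'(\eta)=c'+O(\rho^\epsilon)$ with $c'\in\mathbb{R}^2$; for the scalar fourth-order operator $\mathcal{L}_3$, forcing of order $\rho^{-3+\epsilon}$ is matched by $w_3(\eta)=c_3+a_1\eta_1+a_2\eta_2+O(\rho^{1+\epsilon})$. Plugging these leading terms back through $W^0$ and $W^1$ reassembles precisely the rigid-motion column $d^\sharp(\eta,\zeta)c^\sharp$ of (\ref{3.d}) with four-dimensional coefficient $c^\sharp=(c'_1,c'_2,a_2,-a_1)^\top$; the extra constant $c_3$ is absorbed into the $O(\rho^{1+\epsilon})$ remainder, which is exactly why only four parameters survive. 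Converting the weighted $L^2$ bounds to pointwise ones via the Schauder control of Step 1 then gives (\ref{3.44}).

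The main obstacle will be Step 3. One must track the logarithmic weights hidden in the norm (\ref{3.41}), verify that the chosen weight exponents $\epsilon$ and $1+\epsilon$ do not hit eigenvalues of the operator pencil associated with $\mathcal{L}$ at infinity (otherwise resonance injects spurious logarithmic terms and breaks the clean power-law estimate (\ref{3.44})), and confirm that the higher-order correctors $W^2w, W^3w$ do not reintroduce growing modes into the leading asymptotics. The mixed second/fourth-order character of $\mathcal{L}$ in (\ref{A13}) makes this bookkeeping substantially more delicate than for a single elliptic operator, and this is where the machinery of \cite{na192, na243} becomes indispensable.
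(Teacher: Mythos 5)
Your plan coincides with the paper's own route: the paper does not prove Proposition \ref{prop3.6} internally but presents it as an instance of the general result of \cite{na243}, obtained — exactly as you outline — by the dimension-reduction ansatz with the operators $W^p$ of (\ref{3.53}), reduction to an exterior two-dimensional problem for $\mathcal{L}'$ and $\mathcal{L}_3$, and Kondratiev's theory in weighted spaces, with local elliptic estimates upgrading weighted $L^2$ bounds to the pointwise form (\ref{3.44}). Your identification of the surviving rigid-motion parameters $c^\sharp=(c_1',c_2',a_2,-a_1)^\top$ and of the constant in $w_3$ as absorbed into the $O(\rho^{1+\epsilon})$ remainder is consistent with the matrix $d^\sharp$ in (\ref{3.d}) and with the admissible growth dictated by the weights in (\ref{3.41}).
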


Note that the imposed orthogonality condition in (\ref{3.FG}) allowed us in
Proposition \ref{prop3.5} to reduce the decay requirement on $F_{3}^{0}$ and
$G_{3}^{\pm}$ because in the examination of functional (\ref{3.F}) we used
inequality (\ref{3.uu}) without referring to the weighted norm (\ref{3.41}).
Furthermore, these components can be compensated by a solution of a problem on
the interval $\left(  -1/2,1/2\right)  \ni\zeta$ which inherits the decay
properties from $F_{3}^{0}$ and $G_{3}^{+}$, $G_{3}^{\bullet}$ in (\ref{3.43})
and thus does not influence main asymptotic terms indicated in (\ref{3.44}).

The asymptotic form $v=v^{0}+\widetilde{v}$ is detected in \cite{na243} by
means of the dimension reduction, cf. Section \ref{sect3.1}, and an
application of the Kondratiev theory \cite{Ko} (see also \cite[Ch.3 and
Ch.6]{NaPl}) together with various weighted forms of Korn's inequality. The
paper \cite{na243} furnishes complete multi-scale decompositions of elastic
fields in a layer and we will precise the asymptotic forms of Proposition
\ref{prop3.6} in Section \ref{sect3.6}.

\subsection{The fundamental matrix of the differential operator $\mathcal{L}%
\left(  \nabla_{y}\right)  $\label{sect3.4}}

The most easily understood way to compensate for, e.g., the main part of the
discrepancy%
\[
h^{-1/2}w_{i}\left(  y\right)  e_{i}=h^{-1/2}w_{i}\left(  \mathcal{O}\right)
e_{i}+O(h^{1/2})\text{ \ \ for }y\in\theta_{h}%
\]
in condition (\ref{1.13}) is just to solve problem (\ref{3.38})--(\ref{3.40})
with $F=0$ and $G^{+}=0$, $G^{\bullet}=0$ but with the right-hand side
$-w_{i}\left(  \mathcal{O}\right)  e_{i}$ in the Dirichlet condition
(\ref{3.40}) on $\theta$. Since by an appropriate extension the inhomogeneity
can be passed over to the right-hand sides in equations (\ref{3.38}),
(\ref{3.39}), Proposition \ref{prop3.5} and our further comments on the rigid
motion (\ref{3.92}) demonstrate that the unique solution $v\in V^{1}\left(
\Lambda\right)  ^{3}$ of the problem is nothing but the constant vector
$-w_{i}\left(  \mathcal{O}\right)  e_{i}$ which does not decay as $\rho
\to+\infty$ and by no means can be accepted as a boundary layer. Thereupon,
instead of solutions offered by Propositions \ref{prop3.5} and \ref{prop3.6}
we prefer to employ a solution of the homogeneous problem (\ref{3.38}%
)--(\ref{3.40}) which, of course, must live outside $V_{0}^{1}\left(
\Lambda;\theta\right)  ^{3}$. Asymptotics at infinity of elastic fields with
power-logarithmic growth in a layer has been prepared in the paper
\cite{na243} and to achieve the goal we need a certain notation only.

General results in \cite{GelShilov} supply us with the fundamental matrix of
size $2\times2$%
\begin{equation}
\Phi^{\prime}\left(  y\right)  =\Psi^{\prime}\ln r+\psi^{\prime}\left(
\varphi\right)  \label{3.46}%
\end{equation}
of the elliptic $2\times2$-matrix (\ref{A9}) of the second-order differential
operators. Here, $\left(  r,\varphi\right)  $ is the polar coordinate system
on the plane $\mathbb{R}^{2}\ni y$, $\Psi^{\prime}$ is a numeral nondegenerate
symmetric $2\times2$-matrix and $\psi^{\prime}$ is a smooth matrix function on
the unit circle $\mathbb{S}$. By its meaning, the fundamental matrix
(\ref{3.46}) satisfies the relation%
\[
-\int_{\gamma}\mathcal{N}^{\prime}\left(  y,\nabla_{y}\right)  \Phi^{\prime
}\left(  y\right)  ds_{y}=\mathbb{I}_{2}\in\mathbb{R}^{2\times2}%
\]
where $\gamma$ is any smooth closed simple contour enveloping the
$y$-coordinates origin $\mathcal{O}$ and
\begin{equation}
\mathcal{N}^{\prime}\left(  y,\nabla_{y}\right)  =\mathcal{D}^{\prime}\left(
n\left(  y\right)  \right)  ^{\top}\mathcal{A}^{\prime}\mathcal{D}^{\prime
}\left(  \nabla_{y}\right)  \label{3.462}%
\end{equation}
is the Neumann condition operator for (\ref{A9}) with the unit vector
$n=\left(  n_{1},n_{2}\right)  ^{\top}$ of the outward normal at $\gamma$.
Since the matrix $\psi^{\prime}$ in (\ref{3.46}) is defined up to a constant
summand, it can be fixed such that%
\begin{equation}
\int_{\gamma}^{\prime}\Phi\left(  y\right)  ^{\top}\mathcal{N}^{\prime}\left(
y,\nabla_{y}\right)  \Phi^{\prime}\left(  y\right)  ds_{y}=\mathbb{O}_{2}%
\in\mathbb{R}^{2\times2}. \label{3.463}%
\end{equation}

The scalar forth-order elliptic operator (\ref{A99}) possesses the fundamental
solution $\Phi_{3}\left(  y\right)  $ which we write down in the convenient
form%
\begin{equation}
\Phi_{3}\left(  y\right)  =r^{2}\left(  -\frac12\Psi_{3}\ln r+\psi_{3}\left(
\varphi\right)  \right)  \label{3.45}%
\end{equation}
where $\Psi_{3}\neq0$ is a number and $\psi_{3}$ is a function on $\mathbb{S}%
$. Notice that $\Phi\left(  y\right)  =\operatorname*{diag}\left\{
\Phi^{\prime}\left(  y\right)  ,\Phi_{3}\left(  y\right)  \right\}  $ implies
the fundamental matrix of the operator $\mathcal{L}\left(  \nabla_{y}\right)
$ in (\ref{A9}).

In the isotropic case (\ref{1.10}) we have%
\[
\Phi^{\prime}\left(  y\right)  =\frac1{4\pi}\frac{\lambda^{\prime}+3\mu}%
{\mu\left(  \lambda^{\prime}+2\mu\right)  }\left(
\begin{array}
[c]{cc}%
-\ln r+\beta y_{1}^{2}r^{-2} & \beta y_{1}y_{2}r^{-2}\\
\beta y_{2}y_{1}r^{-2} & -\ln r+\beta y_{2}^{2}r^{-2}%
\end{array}
\right)  ,\ \ \ \Phi_{3}\left(  y\right)  =\frac3{8\pi}\frac{\lambda+2\mu}%
{\mu\left(  \lambda+\mu\right)  }r^{2}\ln r
\]
where $\beta=\left(  \lambda^{\prime}+3\mu\right)  ^{-1}\left(  \lambda
^{\prime}+\mu\right)  $, $\lambda$ and $\mu$ are the Lam\'{e} constants and
$\lambda^{\prime}$ is defined in (\ref{A3}).

We now introduce the $3\times4$-matrix%
\begin{equation}
\Phi^{\sharp}\left(  y\right)  =\left(  d^{\sharp}\left(  -\nabla
_{y},0\right)  ^{\top}\Phi\left(  y\right)  ^{\top}\right)  ^{\top}=\left(
\begin{array}
[c]{c}%
\Phi^{\prime}\left(  y\right) \\
\\
0\quad\quad0
\end{array}%
\begin{array}
[c]{cc}%
0 & 0\\
0 & 0\\
\Phi_{3}^{2}\left(  y\right)  & \Phi_{3}^{1}\left(  y\right)
\end{array}
\right)  \label{3.49}%
\end{equation}
where $\Phi^{\prime}\left(  y\right)  $ from (\ref{3.46}) has size $2\times2$
and%
\begin{equation}
\Phi_{3}^{i}\left(  y\right)  :=\frac{\partial\Phi_{3}}{\partial y_{i}}\left(
y\right)  =\Psi_{3}y_{i}\ln r+r\psi_{3}^{i}\left(  \varphi\right)
,\ \ \ i=1,2. \label{3.51}%
\end{equation}
To clarify properties of (\ref{3.45}) and (\ref{3.51}), we observe that
$\mathcal{D}_{3}\left(  \nabla_{y}\right)  =2^{-1/2}\mathcal{D}^{\prime
}\left(  \nabla_{y}\right)  \nabla_{y}$ in (\ref{A11}) and integrate by parts
as follows:%
\begin{equation}
\left(  \mathcal{L}_{3}w_{3},v_{3}\right)  _{\Gamma}+\left(  \mathcal{N}%
_{3}w_{3},\left(  1,\nabla_{y}\right)  v_{3}\right)  _{\gamma}=\left(
w_{3},\mathcal{L}_{3}v_{3}\right)  _{\Gamma}+\left(  \left(  1,\nabla
_{y}\right)  w_{3},\mathcal{N}_{3}v_{3}\right)  _{\gamma}. \label{3.452}%
\end{equation}
Here, $\Gamma$ is a domain bounded by the contour $\gamma$ (actually, we need
$\Gamma=\omega$ or $\Gamma=\mathbb{B}_{R}^{2})$ and $\mathcal{N}_{3}=\left(
\mathcal{N}_{0},\mathcal{N}_{1},\mathcal{N}_{2}\right)  $,%
\begin{equation}
\label{3.453}%
\begin{split}
\mathcal{N}_{0}\left(  y,\nabla_{y}\right)   &  =2^{-1/2}n\left(  y\right)
^{\top}\mathcal{D}^{\prime}\left(  -\nabla_{y}\right)  ^{\top}\mathcal{A}%
_{3}\mathcal{D}_{3}\left(  \nabla_{y}\right)  ,\\
\mathcal{N}_{i}\left(  y,\nabla_{y}\right)   &  =-2^{-1/2}e_{i}^{\top
}\mathcal{D}^{\prime}\left(  n\left(  y\right)  \right)  ^{\top}%
\mathcal{A}_{3}\mathcal{D}_{3}\left(  \nabla_{y}\right)  .
\end{split}
\end{equation}
The fundamental solution (\ref{3.45}) and its derivatives (\ref{3.51}) satisfy
the relations, with $i,k=1,2$
\begin{equation}
\label{3.454}%
\begin{split}
&  -\left(  1,\mathcal{N}_{0}\Phi_{3}\right)  _{\gamma}=1,\qquad-\left(
\left(  1,\nabla_{y}\right)  y_{k},\mathcal{N}_{3}\Phi_{3}\right)  _{\gamma
}=0,\\
&  -\left(  1,\mathcal{N}_{3}\Phi_{3}^{i}\right)  _{\gamma}=0,\qquad\left(
\left(  1,\nabla_{y}\right)  y_{k},\mathcal{N}_{3}\Phi_{3}^{i}\right)
_{\gamma}=\delta_{i,k}.
\end{split}
\end{equation}

\begin{remark}
\label{remPHIO}The simplest way to check up (\ref{3.454}) requires to use the
Dirac mass $\delta\left(  y\right)  $ in the framework of the theory of
distributions. Since $\mathcal{L}_{3}\left(  \nabla_{y}\right)  \Phi
_{3}\left(  y\right)  =\delta\left(  y\right)  $ by definition, we apply
formula (\ref{3.452}) with either $w_{3}=1$, $v_{3}=\Phi_{3}$ or $w_{3}=y_{k}%
$, $v_{3}=\Phi_{3}^{i}$ and obtain%
\[
-\left(  \left(  1,\nabla_{y}\right)  1,\mathcal{N}_{3}\Phi_{3}\right)
_{\gamma}=\left(  1,\delta\right)  _{\Gamma}=1,\ \ \ -\left(  \left(
1,\nabla_{y}\right)  y_{k},\mathcal{N}_{3}\Phi_{3}^{i}\right)  _{\gamma
}=\left(  y_{k},\partial\delta/\partial y_{i}\right)  _{\Gamma}=-\delta
_{i,k}.
\]
Other relations in (\ref{3.454}) as well as (\ref{3.462}), (\ref{3.463}) are
verified in the same way.
\end{remark}

\subsection{The elastic logarithmic capacity\label{sect3.5}}

According to a general procedure in \cite{na243} we search for a matrix
solution of the homogeneous problem (\ref{3.38})--(\ref{3.40}) in the form%
\begin{equation}
\mathcal{P}\left(  \xi\right)  =\left(  1-\chi\left(  \frac{\rho}{2R_{\theta}%
}\right)  \right)  \sum\limits_{p=0}^{3}W^{p}\left(  \zeta,\nabla_{\eta
}\right)  \Phi^{\sharp}\left(  \eta\right)  +\widehat{\mathcal{P}}\left(
\xi\right)  \label{3.54}%
\end{equation}
where the notation from (\ref{A12}) is used and $\widehat{\mathcal{P}}\in
V_{0}^{1}\left(  \Lambda;\omega_{1}\right)  ^{3\times4}$ is a remainder to be
determined. Note that $R_{\theta}$ is fixed such that $\overline{\theta
}\subset\mathbb{B}_{R_{\theta}}^{2}$.

We insert (\ref{3.54}) into the differential equations (\ref{3.38}) with $F=0$
and take decomposition (\ref{3.4}) into account. If $\rho>R_{\theta}$ and
$1-\chi\left(  \rho/2R_{\theta}\right)  =1$, we obtain for the sum $\Xi\left(
\eta,\zeta\right)  $ of the detached terms in (\ref{3.54}) that
\begin{equation}%
\begin{split}
D(-\nabla_{\eta})^{\top}AD(\nabla_{\eta})\Xi &  =L^{0}W^{0}\Phi^{\sharp
}+\left(  L^{0}W^{1}+L^{1}W^{0}\right)  \Phi^{\sharp}\\
&  \quad+\left(  L^{0}W^{2}+L^{1}W^{1}+L^{2}W^{0}\right)  \Phi^{\sharp
}+\left(  L^{0}W^{3}+L^{1}W^{2}+L^{2}W^{1}\right)  \Phi^{\sharp}\\
&  \quad+\left(  L^{1}W^{3}+L^{2}W^{2}\right)  \Phi^{\sharp}+L^{2}W^{3}%
\Phi^{\sharp}.
\end{split}
\label{3.55}%
\end{equation}
According to the content of Section \ref{sect3.2}, first four items on the
right in (\ref{3.55}) vanish; recall that $\mathcal{D}\left(  -\nabla_{\eta
}\right)  ^{\top}\mathcal{A}\mathcal{D}\left(  \nabla_{\eta}\right)
\Phi^{\sharp}\left(  \eta\right)  =0$ in the punctured plane $\mathbb{R}%
^{2}\setminus0$. Owing to (\ref{3.5}), (\ref{3.53}) and (\ref{3.46}),
(\ref{3.51}), the last two terms in (\ref{3.55}) are of order $\rho^{-3}$ and
$\rho^{-4}$ respectively.

In the Neumann boundary conditions (\ref{3.39}) we have%
\begin{equation}
\label{3.56}%
\begin{split}
&  D\left(  \pm e_{3}\right)  ^{\top}AD\left(  \nabla_{\eta}\right)
\Xi=N^{0\pm}W^{0}\Phi^{\sharp}+\left(  N^{0\pm}W^{1}+N^{1\pm}W^{0}\right)
\Phi^{\sharp}\\
&  \qquad+\left(  N^{0\pm}W^{2}+N^{1\pm}W^{1}\right)  \Phi^{\sharp}+\left(
N^{0\pm}W^{3}+N^{1\pm}W^{2}\right)  \Phi^{\sharp}+N^{1\pm}W^{3}\Phi^{\sharp}.
\end{split}
\end{equation}
First four items on the right vanish again. Recalling that the equation
(\ref{A17}) appeared as a result of calculation (\ref{A14}), we detect the
following relation which is crucial for our further consideration: for
$\eta\in\mathbb{R}^{2}\setminus0$%
\begin{equation}
\label{3.57}%
\begin{split}
&  e_{3}^{\top}\left(  \int_{-1/2}^{1/2}L^{1}\left(  \nabla_{\eta}%
,\partial_{\zeta}\right)  W^{3}\left(  \zeta,\nabla_{\eta}\right)
+L^{2}(\nabla_{\eta})W^{2}\left(  \zeta,\nabla_{\eta}\right)  d\zeta\right. \\
&  \qquad+\left.  \sum\limits_{\pm}N^{1\pm}\left(  \nabla_{\eta}\right)  W^{3}
\left(  \pm\frac12,\nabla_{\eta}\right)  \Phi^{\sharp}\left(  \eta\right)
\right)  =0.
\end{split}
\end{equation}

Let us compose a problem of type (\ref{3.38})--(\ref{3.40}) for the remainder
$\widehat{\mathcal{P}}$ in (\ref{3.54}). Since the cut-off function
$1-\chi\left(  2\rho/R_{\theta}\right)  $ is null near the $\eta$-coordinates
origin where $\Phi^{\sharp}$ becomes singular, the right-hand sides
$\widehat{F}$ and $\widehat{G}^{+}$, $\widehat{G}^{\bullet}$ in the problem
are smooth and, outside a big ball, coincide with (\ref{3.55}) and
(\ref{3.56}) respectively. Formula (\ref{3.57}) furnishes the representation
$\widehat{F}_{3}\left(  \xi\right)  =\widehat{F}_{3}^{0}\left(  \xi\right)
+\widehat{F}_{3}^{1}\left(  \eta\right)  $ for $\rho>2R_{\theta}$ where%
\begin{gather}
\widehat{F}_{3}^{1}\left(  \eta\right)  =e_{3}^{\top}\int_{-1/2}^{1/2}%
L^{2}\left(  \nabla_{\eta}\right)  W^{3}\left(  \zeta,\nabla_{\eta}\right)
d\zeta\Phi^{\sharp}\left(  \eta\right)  =O(\rho^{-4}),\nonumber\\
\widehat{F}_{3}^{0}\left(  \xi\right)  =e_{3}^{\top}\left(  L^{1}\left(
\nabla_{\eta},\partial_{\zeta}\right)  W^{3}\left(  \zeta,\nabla_{\eta
}\right)  +L^{2}\left(  \nabla_{\eta}\right)  W^{2}\left(  \zeta,\nabla_{\eta
}\right)  \right)  \Phi^{\sharp}\left(  \eta\right)  +\label{3.571}\\
+e_{3}^{\top}L^{2}\left(  \nabla_{\eta}\right)  W^{3}\left(  \zeta
,\nabla_{\eta}\right)  \Phi^{\sharp}\left(  \eta\right)  -\widehat{F}_{3}%
^{1}\left(  \eta\right)  =O(\rho^{-3}).\nonumber
\end{gather}
Note that%
\begin{equation}
G^{\pm}\left(  \eta\right)  =N^{1\pm}\left(  \nabla_{\eta}\right)
W^{3}\left(  \pm1/2,\nabla_{\eta}\right)  \Phi^{\sharp}\left(  \eta\right)
=O(\rho^{-3}),\ \ \ \rho>2R_{\theta}. \label{3.572}%
\end{equation}
Thus, all hypotheses in Proposition \ref{prop3.5} are fulfilled and then
remainder $\widehat{\mathcal{P}}\in V_{0}^{1}\left(  \Lambda;\omega
_{1}\right)  ^{3\times4}$ exists.

\bigskip

Formulas (\ref{3.571}) and (\ref{3.572}) also allow us to derive from
Proposition \ref{prop3.6} the representation%
\begin{equation}
\widehat{\mathcal{P}}\left(  \xi\right)  =d^{\sharp}\left(  \eta,\zeta\right)
C^{\sharp}+\widetilde{\mathcal{P}}\left(  \xi\right)  \label{3.58}%
\end{equation}
where the rows $\widetilde{\mathcal{P}}_{i}$, $i=1,2$, and $\widetilde
{\mathcal{P}}_{3}$ of the $3\times4$-matrix $\widetilde{\mathcal{P}}$ meet the
inequalities%
\begin{equation}
\big|\nabla_{\eta}^{p}\partial_{\zeta}^{q}\widetilde{\mathcal{P}}_{i}\left(
\xi\right)  \big|\leq c_{pq}\rho^{\epsilon},\ \ \ \big|\nabla_{\eta}%
^{p}\partial_{\zeta}^{q}\widetilde{\mathcal{P}}_{3}\left(  \xi\right)
\big|\leq c_{pq}\rho^{1+\epsilon}\ \ \ \text{for }\rho\geq R_{\theta}^{0},
\label{3.59}%
\end{equation}
with any $\epsilon\in\left(  0,1\right)  $.

We call the numerical matrix $C^{\sharp}=C^{\sharp}\left(  A,\theta\right)  $
of size $4\times4$ the \textit{elastic logarithmic capacity} (matrix), cf.
\cite{na135} for the isotropic case. It depends on the stifness matrix $A$ in
the Hooke law (\ref{1.9}) and the shape of the clamped zone $\theta$ and is
defined uniquely through decompositions (\ref{3.54}), (\ref{3.58}) of the
\textit{elastic logarithmic potential} matrix $\mathcal{P}\left(  \xi\right)
$ of size $3\times4$.

\begin{remark}
\label{remLOG}The names introduced above come by analogy with the logarithmic
capacity potential $P\left(  \eta\right)  $ in harmonic analysis, cf.
\cite{Land, PoSe}. The function $P$ is harmonic in $\mathbb{R}^{2}%
\setminus\overline{\vartheta}$, vanishes at the boundary $\partial\vartheta$
of a compact set $\vartheta$ and admits the representation%
\[
P\left(  \eta\right)  =\left(  2\pi\right)  ^{-1}\left(  \ln\rho^{-1}+C_{\log
}\left(  \vartheta\right)  \right)  +O(\rho^{-1}),\ \ \ \rho\to+\infty,
\]
while $-\left(  2\pi\right)  ^{-1}\ln\left\vert \eta\right\vert $ is the
fundamental solution of the Laplacian in the plane $\mathbb{R}^{2}$ and
$C_{\log}\left(  \vartheta\right)  $ is called the logarithmic capacity.
Clearly, $P$ also solves the mixed boundary value problem in the perforated
layer $\Theta=\left\{  \xi:\eta\in\mathbb{R}^{2}\setminus\overline{\vartheta
},\ \left\vert \zeta\right\vert <1/2\right\}  $, namely%
\[
-\Delta_{\xi}P\left(  \xi\right)  =0,\ \xi\in\Theta,\ \ \ P\left(  \xi\right)
=0,\ \xi\in\partial\vartheta\times\left(  -1/2,1/2\right)  ,\ \ \ \pm
\partial_{\zeta}P\left(  \eta,\pm1/2\right)  =0,\ \eta\in\mathbb{R}%
^{2}\setminus\overline{\vartheta}%
\]
which really looks quite similar to a scalar version of the elasticity problem
(\ref{3.38})--(\ref{3.40}) under consideration. Note that, as distinct from
the standard harmonic capacity in dimension $n\geq3$ which always is positive,
the logarithmic capacity $C_{\log}\left(  \vartheta\right)  $ can be positive
(the obstacle $\vartheta$ is big) or negative ($\vartheta$ is small).
\end{remark}

The elastic logarithmic capacity matrix $C^{\sharp}$, in general, is neither
positive, nor negative but still symmetric. The latter is proved in
\cite{na135} for an isotropic layer $\Lambda$ with a defect and we here serve
for a much more complicated anisotropic case.

\begin{theorem}
\label{th3.7}The elastic logarithmic capacity $4\times4$-matrix $C^{\sharp
}=C^{\sharp}\left(  A,\theta\right)  $ is symmetric.
\end{theorem}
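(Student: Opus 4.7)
The plan is to apply Betti's reciprocity (Green's formula for the self-adjoint elasticity operator $L(\nabla_\xi) = D(-\nabla_\xi)^\top A D(\nabla_\xi)$) pairwise to the columns $\mathcal{P}^{(j)}$, $j=1,\dots,4$, of the elastic logarithmic potential on the truncated layer $\Lambda_R := \Lambda\cap\{\rho<R\}$, and to read off the symmetry of $C^\sharp$ from the boundary terms as $R\to\infty$. Since $L\mathcal{P}\equiv 0$, $\mathcal{P}$ vanishes on $\theta$, and the traction $T\mathcal{P}:=D(\nu)^\top AD(\nabla_\xi)\mathcal{P}$ vanishes on $\Sigma^+\cup\Sigma^\bullet$, Green's formula collapses the boundary integrals to the lateral cylinder $\Upsilon_R = \{\rho=R\}\times(-1/2,1/2)$:
\[
\mathcal{B}_R := \int_{\Upsilon_R}\!\bigl(\mathcal{P}^\top T\mathcal{P}-(T\mathcal{P})^\top\mathcal{P}\bigr)\,dS \;=\; 0\qquad\text{for every }R>R_\theta^0.
\]
The $4\times 4$ matrix $\mathcal{B}_R$ is antisymmetric by construction, and the strategy is to show that in the limit it isolates the antisymmetric part of $C^\sharp$.

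Next I substitute the asymptotic form (3.54)--(3.58), $\mathcal{P}=\mathcal{Q}+d^\sharp C^\sharp+\widetilde{\mathcal{P}}$ with $\mathcal{Q}=\sum_{p=0}^{3}W^p(\zeta,\nabla_\eta)\Phi^\sharp(\eta)$, into $\mathcal{B}_R$. The decisive simplification is $Td^\sharp\equiv 0$, because every column of $d^\sharp$ is a rigid motion and therefore produces zero strain; hence $d^\sharp C^\sharp$ can appear only on the "displacement" side of each integrand. The contribution of $\widetilde{\mathcal{P}}$ is controlled by the pointwise bounds (3.59): the growth $\rho^{\epsilon}$, $\rho^{1+\epsilon}$ is beaten by the decay of $T\mathcal{Q}$ (each factor $W^p$ together with $T$ lowers the order of $\Phi^\sharp$ by one or two units), so multiplied by the perimeter $\sim R$ these terms are $O(R^{\epsilon-1})=o(1)$. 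The remaining singular--singular piece $\int_{\Upsilon_R}(\mathcal{Q}^\top T\mathcal{Q}-(T\mathcal{Q})^\top\mathcal{Q})\,dS$ stabilizes to an $R$-independent limit but, thanks to the symmetry of the fundamental matrix $\Phi=\mathrm{diag}\{\Phi',\Phi_3\}$ of the self-adjoint Kirchhoff operator $\mathcal{L}(\nabla_\eta)$ and to the normalizations (3.452)--(3.454), (3.462)--(3.463) chosen for $\Phi'$ and $\Phi_3$, is symmetric in $(j,k)$; hence it makes no contribution to the antisymmetric $\mathcal{B}_R$.

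What is left is therefore the bilinear combination $[C^\sharp]^\top M - M^\top C^\sharp = 0$, where
\[
M := \lim_{R\to\infty}\int_{\Upsilon_R}[d^\sharp]^\top T\mathcal{Q}\,dS.
\]
Integrating first in $\zeta\in(-1/2,1/2)$ (the polynomial $\zeta$-moments are explicit by (3.488) and (3.53)) and then along the contour $\gamma_R=\{|\eta|=R\}$, the calculation of $M$ reduces to the residues of the Kirchhoff traction operators $\mathcal{N}'$ and $\mathcal{N}_3$ applied to the blocks $\Phi'$, $\Phi_3^i$ of $\Phi^\sharp$, paired against the polynomial columns of $d^\sharp$. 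The identities (3.454) for $\Phi_3,\Phi_3^i$, (3.462) for $\Phi'$, and the compatibility (3.57) that was installed into the construction of $W^3$ are precisely what guarantees that $M=c\,\mathbb{I}_4$ for a nonzero constant $c$; then $c([C^\sharp]^\top - C^\sharp)=0$ gives the theorem. The main obstacle is not the strategy but the multi-scale bookkeeping: each $W^p\Phi^\sharp$ contributes at a different order in $\rho$ with a different $\zeta$-profile, and one must verify that, across the $2\times 2$ plane-stress block and the biharmonic deflection block, the contour-integral residues assemble with matching signs and normalizations so that $M$ indeed reduces to a multiple of the identity.
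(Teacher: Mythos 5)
Your proposal follows essentially the same route as the paper's proof: the Green/Betti identity for $L(\nabla_\xi)$ on the truncated layer reduced to the lateral cylinder $\{\rho=T\}\times(-1/2,1/2)$, the decomposition $\mathcal{P}=\sum_p W^p\Phi^\sharp+d^\sharp C^\sharp+\widetilde{\mathcal{P}}$ with $D(\nabla_\xi)d^\sharp=0$ killing the rigid-motion tractions, discarding the remainder via (\ref{3.59}), and evaluating the surviving $O(\rho^{-1})$ terms through the normalizations (\ref{3.462}), (\ref{3.463}), (\ref{3.454}) so that the antisymmetric residue is exactly $(C^\sharp)^\top-C^\sharp$. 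The only cosmetic difference is that you argue the $\Phi^\sharp$--$\Phi^\sharp$ block is symmetric while the paper makes it vanish outright via the zero-right-hand-side orthogonality conditions, and your $M=c\,\mathbb{I}_4$ is the paper's exact identity coming from the conditions normalized to $1$; both are the same mechanism.
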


\begin{proof}
We insert on both positions in the Green formula for the operator $L\left(
\nabla_{\xi}\right)  $ in the truncated layer $\Lambda\left(  T\right)
=\left\{  \xi\in\Lambda:\rho<T\right\}  $ and let $T\rightarrow+\infty$. Since
$\mathcal{P}$ verifies the homogeneous problem (\ref{3.38})--(\ref{3.40}), we
are left with an integral over the truncation surface $\mathbb{S}_{T}%
\times\left(  -1/2,1/2\right)  $, the lateral surface of the circular cylinder
$\Lambda\left(  T\right)  $. We have%
\begin{equation}
0=\int_{\mathbb{S}_{T}}\int_{-1/2}^{1/2}\left(  \mathcal{P}(\xi)^{\top}%
D(\rho^{-1}\eta,0)^{\top}AD\left(  \nabla_{\xi}\right)  \mathcal{P}%
(\xi)-\left(  D(n(\eta),0)^{\top}AD\left(  \nabla_{\xi}\right)  \mathcal{P}%
(\xi)\right)  ^{\top}\mathcal{P}(\xi)\right)  d\zeta ds_{\eta} \label{8.1}%
\end{equation}
where $n\left(  \eta\right)  =\rho^{-1}\eta$ is the unit vector of the outward
normal and $ds_{\eta}=\rho d\varphi$ is the arc element on the circle
$\mathbb{S}_{T}$.

We now extract from the integrand in (\ref{8.1}) all terms of order $\rho
^{-1}$ which contribute to the limit. Infinitesimal terms $o\left(  \rho
^{-1}\right)  $, in particular ones generated by the remainder $\widetilde
{\mathcal{P}}$ in (\ref{3.58}) can be removed from the integrand with a clear
reason while terms growing as $\rho\to+\infty$ vanish all together after
integration in $\left(  \varphi,\zeta\right)  \in\left(  0,2\pi\right)
\times\left(  -1/2,1/2\right)  $ because the limit does exist. In this way,
the integral in $\zeta$ is approximated by the expression $J\left(
\eta\right)  -J\left(  \eta\right)  ^{\top}$ where, in view of the obvious
relation $D\left(  \nabla_{\xi}\right)  d^{\sharp}(\xi)=0$, we have
\begin{equation}
J(\eta)=\int_{-1/2}^{1/2}\bigg(\sum_{q=0}^{3}W^{q}\left(  \zeta,\nabla_{\eta
}\right)  \Phi^{\sharp}(\eta) +d^{\sharp}(\eta,\zeta)C^{\sharp}\bigg)^{\top
}D\left(  \frac{\eta}{\rho},0\right)  ^{\top}AD\left(  \nabla_{\xi}\right)
\sum\limits_{p=0}^{3}W^{p}\left(  \zeta,\nabla_{\eta}\right)  \Phi^{\sharp
}(\eta)d\zeta. \label{8.2}%
\end{equation}
Similarly to (\ref{3.55}) and (\ref{3.56}), in the formula%
\begin{align*}
D\left(  \nabla_{\xi}\right)  \sum\limits_{p=0}^{3}W^{p}\Phi^{\sharp}  &
=D_{\zeta}W^{0}\Phi^{\sharp}+\left(  D_{\zeta}W^{1}+D_{\eta}W^{0}\right)
\Phi^{\sharp}+\left(  D_{\zeta}W^{2}+D_{\eta}W^{1}\right)  \Phi^{\sharp}+\\
&  +\left(  D_{\zeta}W^{3}+D_{\eta}W^{2}\right)  \Phi^{\sharp}+D_{\eta}%
W^{3}\Phi^{\sharp},
\end{align*}
where $D_{\zeta}=D\left(  0,0,\partial_{\zeta}\right)  $ and $D_{\eta
}=D\left(  \nabla_{\eta},0\right)  $, cf. (\ref{3.5}), the first two terms on
the right vanish.

According to (\ref{3.40}) and (\ref{3.46}), (\ref{3.51}), we have%
\[
(D_{\zeta}W^{1+p}\left(  \zeta,\nabla_{\eta}\right)  +D_{\eta}W^{p}\left(
\zeta,\nabla_{\eta}\right)  )\Phi^{\sharp}\left(  \eta\right)  =O(\rho
^{-p}),\ p=1,2\ \ \ D_{\eta}W^{3}\left(  \zeta,\nabla_{\eta}\right)
\Phi^{\sharp}\left(  \eta\right)  =O(\rho^{-3}).
\]
Taking (\ref{3.d}) and (\ref{3.53}) into account, we write the relations
\[
\left(  W^{0}+W^{1}\left(  \zeta,\nabla_{\eta}\right)  \right)  \Phi^{\sharp
}+d^{\sharp}\left(  \eta,\zeta\right)  C^{\sharp}=d^{\flat}\left(
\zeta,\nabla_{\eta}\right)  \left(  \Phi^{\sharp}\left(  \eta\right)
+d^{\sharp}\left(  \eta,0\right)  C^{\sharp}\right)
\]
and%
\[
D\left(  n\left(  \eta\right)  ,0\right)  d^{\flat}\left(  \zeta,\nabla_{\eta
}\right)  =D^{1\flat}\left(  n\left(  \eta\right)  ,\zeta,\nabla_{\eta
}\right)  +D^{0\flat}\left(  n\left(  \eta\right)  \right)
\]
with the matrices $D^{0\flat}\left(  n\right)  =\left(  D_{\zeta}\zeta\right)
\left(  n^{\top},0\right)  ^{\top}e_{3}$ of size $6\times3$ and%
\[
D^{1\flat}\left(  n,\zeta,\nabla_{\eta}\right)  =\left(
\begin{array}
[c]{c}%
\mathcal{D}^{\prime}\left(  n\right)  ,\ \ -\zeta2^{1/2}\mathcal{D}^{\prime
}\left(  n\right)  \nabla_{\eta}\\
\mathbb{O}_{3}%
\end{array}
\right)  ,\ \ \ d^{\flat}\left(  \zeta,\nabla_{\eta}\right)  =\left(
\begin{array}
[c]{ccc}%
1 & 0 & -\zeta\partial/\partial\eta_{1}\\
0 & 1 & -\zeta\partial/\partial\eta_{2}\\
0 & 0 & 0
\end{array}
\right)  .
\]
Comparing decay rates of remaining multipliers shows that we need to keep in
(\ref{8.2}) the following two terms of order $\rho^{-1}\left(  1+\left\vert
\ln\rho\right\vert \right)  $:%
\begin{equation}
\int_{-1/2}^{1/2}D^{1\flat}\left(  n(\eta),\zeta,\nabla_{\eta}\right)  \left(
\Phi^{\sharp}(\eta)+d^{\sharp}\left(  \eta,0\right)  C^{\sharp}\right)
^{\top}A\left(  D_{\zeta}W^{2}\left(  \zeta,\nabla_{\eta}\right)  +D_{\eta
}W^{1}\left(  \zeta,\nabla_{\eta}\right)  \right)  d\zeta\Phi^{\sharp}(\eta)
\label{8.5}%
\end{equation}
and%
\begin{equation}
\int_{-1/2}^{1/2}D^{0\flat}\left(  n\left(  \eta\right)  \right)  \left(
\Phi^{\sharp}\left(  \eta\right)  +d^{\sharp}\left(  \eta,0\right)  C^{\sharp
}\right)  ^{\top}A\left(  D_{\zeta}W^{3}\left(  \zeta,\nabla_{\eta}\right)
+D_{\eta}W^{2}\left(  \zeta,\nabla_{\eta}\right)  \right)  d\zeta\Phi^{\sharp
}\left(  \eta\right)  . \label{8.6}%
\end{equation}

By formulas (\ref{A16}), (\ref{A17}) for $\mathcal{L}^{\prime}\left(
\nabla_{y}\right)  $, $\mathcal{L}_{3}\left(  \nabla_{y}\right)  $ and
(\ref{3.462}), (\ref{3.453}) for $\mathcal{N}^{\prime}$, $\mathcal{N}_{j}$ we
recall the calculations (\ref{A14}), (\ref{A15}) and conclude that (\ref{8.5})
is equal to the sum%
\begin{equation}
\label{8.61}%
\begin{split}
&  \left(  \left(  \Phi^{\sharp}\left(  \eta\right)  \right)  ^{\prime
}+\left(  d^{\sharp}\left(  \eta,0\right)  \right)  ^{\prime}C^{\sharp
}\right)  ^{\top}\mathcal{N}^{\prime}\left(  \eta,\nabla_{\eta}\right)
\left(  \Phi^{\sharp}\left(  \eta\right)  \right)  ^{\prime}\\
&  \qquad+\left(  \nabla_{\eta}\left(  \Phi_{3}^{\sharp}\left(  \eta\right)
\right)  ^{\prime}+d_{3}^{\sharp}\left(  \eta,0\right)  C^{\sharp}\right)
\left(
\begin{array}
[c]{c}%
\mathcal{N}_{1}\left(  \eta,\nabla_{\eta}\right) \\
\mathcal{N}_{2}\left(  \eta,\nabla_{\eta}\right)
\end{array}
\right)  \Phi_{3}^{\sharp}\left(  \eta\right)
\end{split}
\end{equation}
where $\left(  \Phi^{\sharp}\right)  ^{\prime}$ and $\left(  d^{\sharp
}\right)  ^{\prime}$ are submatrices with eliminated third lines. A similar
argument converts (\ref{8.6}) into%
\begin{equation}
\left(  \Phi_{3}^{\sharp}\left(  \eta\right)  +d_{3}^{\sharp}\left(
\eta,0\right)  C^{\sharp}\right)  ^{\top}\mathcal{N}_{0}\left(  \eta
,\nabla_{\eta}\right)  \Phi_{3}^{\sharp}\left(  \eta\right)  . \label{8.62}%
\end{equation}
Now we see that%
\begin{equation}
0=\lim_{T\to+\infty}\int_{\mathbb{S}}\left(  J\left(  \eta\right)  -J\left(
\eta\right)  ^{\top}\right)  ds_{\eta}, \label{8.7}%
\end{equation}
i.e. the limit of the integral on the right-hand side of (\ref{8.1}), is a
linear combination of scalar products, listed in (\ref{3.463}) and
(\ref{3.454}), and their conjugates. Thus, a part of the integral (\ref{8.7})
which involves columns of the matrix $\Phi^{\sharp}$, vanishes due to
"orthogonality" conditions in (\ref{3.463}) and (\ref{3.454}) with $0$ on the
right-hand side. The other part of the integral which involves columns of the
matrices $d^{\sharp}$ and $\Phi^{\sharp}$, converts into the difference
$\left(  C^{\sharp}\right)  ^{\top}-C^{\sharp}$. Indeed, we had reduced
expression (\ref{8.2}) to the sum of (\ref{8.61}) and (\ref{8.62}) so that
after integration in $\varphi\in\mathbb{S}$ it is sufficient to apply the
"bi-orthogonality" conditions (\ref{3.463}) and (\ref{3.454}) with $1$ on the
right-hand side.

Since according to (\ref{8.1}) the integral in (\ref{8.7}) is null, the
theorem is proved.
\end{proof}

\subsection{Asymptotics at infinity of elastic fields in layer-shaped
domains\label{sect3.6}}

If the right-hand sides $F$ and $G^{+}$, $G^{\bullet}$ in problem
(\ref{3.38})--(\ref{3.40}) are smooth and decay exponentially as $\rho
\to+\infty$, for example have compact supports, then results in \cite{na243}
serve for an asymptotic expansion of the solution $v$ with a remainder of any
given power-law decay $O\left(  \rho^{-N}\right)  $. Hence, we can make the
decomposition of the elastic logarithmic potential $\mathcal{P}$ a bit more
precise, namely%
\begin{equation}
\mathcal{P}\left(  \xi\right)  =\left(  1-\chi(2\rho/R_{\theta})\right)
\left(  \sum_{p=0}^{3}W^{p}\left(  \zeta,\nabla_{\eta}\right)  \Phi^{\sharp
}\left(  \eta\right)  +d^{\sharp}\left(  \eta,0\right)  C^{\sharp}%
+\Upsilon^{\sharp}\left(  \eta\right)  \right)  +\widetilde{\widetilde
{\mathcal{P}}}\left(  \xi\right)  . \label{8.8}%
\end{equation}
The first two terms, of course, are the same as in (\ref{3.54}) and
(\ref{3.58}), but the new remainder gets the faster decay properties, cf.
(\ref{3.59}),%
\[
\left\vert \nabla_{\eta}^{p}\partial_{\zeta}^{q}\widetilde{\widetilde
{\mathcal{P}}}_{i}\left(  \xi\right)  \right\vert \leq c_{pq}\rho
^{-1+\epsilon},\ \ \ \left\vert \nabla_{\eta}^{p}\partial_{\zeta}%
^{q}\widetilde{\widetilde{\mathcal{P}}}_{3}\left(  \xi\right)  \right\vert
\leq c_{pq}\rho^{\epsilon},\ \ \ \rho\geq2R_{\theta}.
\]
The latter is caused by the additional term $\Upsilon^{\sharp}\left(
\eta\right)  $ in (\ref{8.8}), a $3\times4$-matrix with the rows
\begin{equation}
\label{8.10}%
\begin{split}
&  \Upsilon_{i}^{\sharp}\left(  \eta\right)  =\rho^{-1}\Upsilon_{i}^{0\sharp
}\left(  \varphi\right)  +\rho^{-1}\ln\rho\Upsilon_{i}^{1\sharp}\left(
\varphi\right)  ,\qquad i=1,2,\\
&  \Upsilon_{3}^{\sharp}\left(  \eta\right)  =\Upsilon_{1}^{0\sharp}\left(
\varphi\right)  +\ln\rho~\Upsilon_{3}^{1\sharp}\left(  \varphi\right)
+\left(  \ln\rho\right)  ^{2}\Upsilon_{3}^{2\sharp}\left(  \varphi\right)  .
\end{split}
\end{equation}
Let us explain where the lower-order asymptotic terms (\ref{8.10}) appear from.

The dimension reduction procedure, quite similar to Section \ref{sect3.1},
leads to the following system for the third term $\Upsilon^{\sharp}$ in the
asymptotic ansatz (\ref{8.8}):%
\begin{equation}
\mathcal{D}\left(  -\nabla_{\eta}\right)  ^{\top}\mathcal{A}\mathcal{D}\left(
\nabla_{\eta}\right)  \Upsilon^{\sharp}\left(  \eta\right)  =\mathcal{F}%
\left(  \eta\right)  ,\ \ \ \eta\in\mathbb{R}^{2}\setminus\mathcal{O}.
\label{8.11}%
\end{equation}
The right-hand side $\mathcal{F}=\left(  \mathcal{F}_{1},\mathcal{F}%
_{2},\mathcal{F}_{3}\right)  ^{\top}$ is constructed in the same way as
(\ref{A14}), (\ref{A15}) and is necessary to compensate for discrepancies of
the expression $\Xi$, see a comment to (\ref{3.54}), in the homogeneous
equations (\ref{3.38}) and (\ref{3.39}), that are the fifth and sixth terms in
(\ref{3.55}) and the fifth term in (\ref{3.56}) respectively. In this way,
representations (\ref{3.53}), (\ref{3.49}) and integration by parts in the
variable $\zeta$, cf. calculation in Section \ref{sect3.5}, yield
\begin{equation}
\label{8.12}%
\begin{split}
&  \mathcal{F}_{i}(\eta)=e_{i}^{\top}D_{\eta}\int_{-1/2}^{1/2}A\left(
D_{\zeta}W^{3}\left(  \zeta,\nabla_{\eta}\right)  +D_{\eta}W^{2}\left(
\zeta,\nabla_{\eta}\right)  \right)  d\zeta\Phi^{\sharp}(\eta) =\rho
^{-3}\mathcal{F}_{i}^{0}(\varphi),\\
&  \mathcal{F}_{3}(\eta)=e_{i}^{\top}D_{\eta}\int_{-1/2}^{1/2}AD_{\eta}%
W^{3}\left(  \zeta,\nabla_{\eta}\right)  d\zeta\Phi^{\sharp}(\eta)\\
&  \qquad+\sum\limits_{i=1}^{2} \frac{\partial}{\partial\eta_{i}}D_{\eta}%
\int_{-1/2}^{1/2}\zeta A\left(  D_{\zeta}W^{3}\left(  \zeta,\nabla_{\eta
}\right)  +D_{\eta}W^{2}\left(  \zeta,\nabla_{\eta}\right)  \right)
d\zeta\Phi^{\sharp}(\eta) =\rho^{-4}\mathcal{F}_{3}^{0}(\varphi).
\end{split}
\end{equation}
It is important that logarithms figuring in the matrix $\Phi^{\sharp}\left(
\eta\right)  $ due to (\ref{3.46}), (\ref{3.51}) and (\ref{3.49}) as
coefficients on linear functions in $\eta_{i}$, are eliminated in (\ref{8.12})
by differentiating sufficiently many times so that functions (\ref{8.12})
become positive homogeneous in $\eta$ of degree $-3$ and $-4$ respectively.
However, the logarithm $\ln\rho$ returns into components (\ref{8.10}) of the
solution $\Upsilon^{\sharp}\left(  \eta\right)  $ by virtue of the Kondratiev
theorem on asymptotics (see \cite{Ko} and, e.g., \cite[Thm. 3.1.4]{NaPl}). To
confirm this, we observe system (\ref{8.11}) with $\mathcal{F}=0$ and find out
its vector solutions in the form of derivatives of columns in matrix
(\ref{3.51}), namely%
\begin{equation}
\sum_{i=1}^{2}\left(
\begin{array}
[c]{c}%
\left(  a_{1}^{i}\partial_{1}+a_{2}^{i}\partial_{2}\right)  \Phi_{i}^{\prime
}\left(  \eta\right) \\
\left(  a_{1}^{2+i}\partial_{1}+a_{2}^{2+i}\partial_{2}\right)  \Phi_{3}%
^{i}\left(  \eta\right)
\end{array}
\right)  . \label{8.13}%
\end{equation}
As shown in \cite[\S 5.4]{NaPl}, any solution with the same positive
homogeneity degrees $-1$ and $0$ as in (\ref{8.10}) takes form (\ref{8.13}).
Furthermore, in the cases $a_{2}^{3}\neq0$ and $a_{1}^{3}\neq0$ the third
component of (\ref{8.13}) stays linearly dependent on $\ln\rho$. Hence, the
above-mentioned theorem prescribes to search for a solution of system
(\ref{8.11}) with the right-hand sides (\ref{8.12}) in the form (\ref{8.10})
while $\Upsilon_{1}^{1\sharp}$, $\Upsilon_{2}^{1\sharp}$ and $\Upsilon
_{3}^{2\sharp}$ may vanish only under some orthogonality conditions in
$L^{2}\left(  \mathbb{S}\right)  ^{3}$ for the angular part $\mathcal{F}%
^{0}\left(  \varphi\right)  =\left(  \mathcal{F}_{1}^{0}\left(  \varphi
\right)  ,\mathcal{F}_{2}^{0}\left(  \varphi\right)  ,\mathcal{F}_{3}%
^{0}\left(  \varphi\right)  \right)  ^{\top}$ of $\mathcal{F}\left(
\eta\right)  $. We do not examine these conditions and specify (\ref{8.10})
because the component $\Upsilon$ is indicated in (\ref{8.8}) with an auxiliary
technical reason only and will be excluded from the final asymptotic formulas
for the solution $u$ of problem (\ref{1.11})--(\ref{1.14}) in the paper
\cite{ButCaNa2}.

We emphasize that a solution $\Upsilon^{\sharp}$ of system (\ref{8.11}) is
defined up to a summand of type (\ref{8.13}). A unique $\Upsilon^{\sharp}$ in
decomposition (\ref{8.8}) of the elastic logarithmic potential $\mathcal{P}$
depends on the whole data of the problem, in particular on the clamped area
$\theta$, and is specified by means of theorem on asymptotics in layer-shaped
domains, the most challenging assertion in the paper \cite{na243}.

\end{document}